\newcolumntype{x}[1]{>{\centering\arraybackslash\hspace{0pt}}p{#1}}
\pgfplotsset{width=10cm,compat=1.9}
\definecolor{myg}{RGB}{220,220,220}
\theoremstyle{definition}
\newtheorem{theorem}{Theorem}[section]
\newtheorem{corollary}[theorem]{Corollary}
\newtheorem{proposition}[theorem]{Proposition}
\newtheorem{lemma}[theorem]{Lemma}
\newtheorem{definition}[theorem]{Definition}
\newtheorem{example}[theorem]{Example}
\newtheorem{notation}[theorem]{Notation}
\newtheorem{remark}[theorem]{Remark}
\newcommand*{\myproofname}{Proof of the claim}
\newcommand{\numberset}{\mathbb}
\newcommand{\F}{\numberset{F}}
\newcommand{\mS}{\mathcal{S}}
\newcommand{\mC}{\mathcal{C}}
\newcommand{\mA}{\mathcal{A}}
\newcommand{\mN}{\mathcal{N}}
\newcommand{\mF}{\mathcal{F}}
\newcommand{\mU}{\mathcal{U}}
\newcommand{\mX}{\mathcal{X}}
\newcommand{\mY}{\mathcal{Y}}
\newcommand{\dH}{d^\textnormal{H}}
\newcommand{\st}{\, : \,}
\newcommand{\mE}{\mathcal{E}}
\newcommand{\mV}{\mathcal{V}}
\newcommand{\inn}{\textnormal{in}}
\newcommand{\out}{\textnormal{out}}
\newcommand{\lin}{\textnormal{lin}}
\newcommand{\CC}{\textnormal{C}}
\newcommand{\HH}{\textnormal{H}}
\newcommand{\dto}{\dashrightarrow}
\newcommand{\bd}[1]{{\bf #1}}
\newcommand{\bfT}{\bf T}
\newcommand{\mincut}{\textnormal{min-cut}}
\newlength{\mynodespace}
\newcommand{\degin}{\partial^-}
\newcommand{\degout}{\partial^+}
\newtheorem{family}{Family}
\title{\textbf{\huge Network Decoding}}
\author[1]{Allison Beemer}
\affil[1]{Department of Mathematics, University of Wisconsin-Eau Claire, U.S.A.}
\author[2]{Altan B. K\i l\i\c{c}\thanks{A. B. K. is supported by the Dutch Research Council through grant VI.Vidi.203.045.}}
\author[3]{Alberto Ravagnani\thanks{A. R. is supported by the Dutch Research Council through grants VI.Vidi.203.045, 
OCENW.KLEIN.539, 
and by the Royal Academy of Arts and Sciences of the Netherlands.}}
\affil[2,3]{Department of Mathematics and Computer Science, Eindhoven University of Technology, the Netherlands}
\date{}
\begin{document}


\maketitle

\begin{abstract}
We consider the problem of 
error control
in a coded, multicast network,
focusing on the scenario where the errors can 
occur only on a \textit{proper subset} of the network edges.
We model this problem via an adversarial noise, presenting 
a formal framework and a series of techniques to obtain upper and lower bounds on the network's (1-shot) capacity, improving on the best 
currently known
results.
In particular,
we show that
traditional cut-set bounds are not tight in general in the presence of a restricted adversary,
and that the non-tightness of these is caused precisely 
by the restrictions imposed on the noise (and not, as one may expect, by the alphabet size).
We also show that, in sharp contrast
with the typical situation within network coding,
capacity
cannot be achieved in general by combining linear network coding with end-to-end channel coding, not even when the underlying network has a single source and a single terminal.
We finally illustrate how network
\textit{decoding} techniques 
are necessary to achieve capacity in the scenarios we examine, exhibiting capacity-achieving schemes 
and lower bounds for various classes of networks.\unskip\parfillskip 0pt \par
\end{abstract}



\bigskip

\section*{Introduction}

Global propagation of interconnected devices and the ubiquity of communication demands in unsecured settings signify the importance of a unified understanding of the limits of communications in networks.
The correction of errors modeled by an adversarial noise
has been studied in a number of previous works, with results ranging from those concerning network capacity to specific code design (see~\cite{YC06,CY06, YY07, byzantine, M07, YNY07, YYZ08,MANIAC, RK18,kosut14,Zhang,nutmanlangberg} among many others). In this paper, we focus on the effects of a small, and yet crucial,
modification of previous models, where a malicious actor can access and alter transmissions across a \textit{proper} subset of edges within a network. We show that not only does this modification disrupt the sharpness of known results on network capacity, but that more specialized network coding (in fact, network \textit{decoding}) becomes necessary to achieve the capacity.

We adopt the setting of a communication network whose inputs are drawn from a finite alphabet and whose internal (hereafter referred to as intermediate) nodes may process incoming information before forwarding (this is known as network \textit{coding}; see e.g.~\cite{ahlswede,linearNC,koettermedard}).
We phrase our work in terms of adversarial noise, but we note that our treatment truly addresses \textit{worst-case} errors, also providing guarantees in networks where there may be random noise, or a combination of random and adversarial noise. We assume that the adversary is omniscient in the sense that they may design their attacks given full knowledge of the network topology, of the symbols sent along all its edges,
and of the operations performed at the intermediate nodes.

Again, in contrast to most previous work in the area, we concentrate on networks with noise occurring only on a proper subset of the network edges: for example, an adversary who may see all transmitted symbols, but has limited access to edges in the network when actively corrupting symbols. We examine the {1-shot capacity} of adversarial networks with these restricted adversaries, which roughly measures the number of alphabet symbols that can be sent with zero error during a single transmission round. The case of multi-shot capacity, where more than one transmission round is considered, is also interesting, but will involve a separate treatment and different techniques from our work here.  Compellingly, the simple act of restricting possible error locations fundamentally alters the problem of computing the 1-shot capacity, as well as the manner in which this capacity may be achieved. This is discussed in further detail below.

This paper expands upon (and disrupts) the groundwork laid in \cite{RK18}, where a combinatorial framework for adversarial networks and a generalized method for porting point-to-point coding-theoretic results to the network setting are established. The work in \cite{RK18} makes a start on addressing the case of restricted adversaries, also unifying the best-known upper bounds on~1-shot capacity for such networks. These upper bounds fall under the category of cut-set bounds, where an edge-cut is taken between network source(s) and terminal(s) and a bound on capacity is derived from an induced channel that takes only the cut-set edges into account. We note that Cai and Yeung previously gave generalizations of the Hamming, Singleton, and Gibert-Varshamov bounds to the network setting using edge-cuts in \cite{YC06,CY06}. The work in \cite{RK18} allows for any point-to-point bound to be ported via an edge-cut.

In the case of random noise in a single-source, single-terminal network, it is well-known that the cut-set upper bound on capacity given by the Max-Flow Min-Cut Theorem may be achieved simply by forwarding information at the intermediate network nodes; see e.g.~\cite{elgamalkim}. It has also been shown that in the scenarios of multiple terminals or where a malicious adversary may choose among \textit{all} network edges, cut-set bounds may be achieved via  \textit{linear} operations~(over the relevant finite field) at intermediate nodes (combined with rank-metric codes in the presence of a malicious adversary); see~\cite{SKK,MANIAC,epfl2}. In prior, preliminary work in this area \cite{beemer2021curious}, we demonstrated a minimal example of a network,
which we called the Diamond Network,
with a restricted adversary, where both (1) the Singleton cut-set bound (the best cut-set bound for this network) cannot be achieved, and (2) the actual 1-shot capacity cannot be achieved using linear operations at intermediate nodes. In fact, this example of network requires network \textit{decoding} in order to achieve capacity. Via an extension of the Diamond Network, termed the {Mirrored Diamond Network}, we found that it is possible with restricted adversaries for the 1-shot capacity to meet the Singleton cut-set bound, but still be forced to use non-linear network codes to achieve this rate.

Generally speaking, the act of limiting the location of adversarial action is enough to put an end to the tightness of the best currently known cut-set bounds, and also to destroy the ability to achieve capacity with linear network codes in combination with end-to-end coding/decoding. This paper is the first stepping stone aimed at understanding the effects of restricting an adversary to a proper subset of the networks' edges, and how to compute~(1-shot) capacities of such adversarial networks. 


The paper is organized as follows. Sections \ref{sec:motiv} and \ref{sec:channel} introduce our problem set-up and notation. Section \ref{sec:diamond} is devoted to the Diamond Network mentioned above, which gives the smallest example illustrating our results. In Subsection \ref{sec:info} we present an information-theoretic view of our problem for the case of random rather than adversarial noise, in order to generate further intuition.
To address the more general problem at hand, we will later in the paper~(in Section \ref{sec:double-cut-bd}) present a technique showing that the capacity of any network is upper bounded by the capacity of an induced, much less complicated network. The induced network utilized borrows from the original idea of cut-sets, but instead of studying the information flow through a single edge-cut, it considers the information flow between \textit{two} edge-cuts. We call the result we obtain as an application of this idea the Double-Cut-Set Bound.
The less-complicated networks induced by the bound's statement are introduced in Section \ref{sec:net-2-and-3} and the collection of families we study throughout the remainder of the paper are introduced in Subsection \ref{sec:families}. In Section \ref{sec:upper}, we propose new combinatorial techniques to obtain upper bounds for the 1-shot capacities of these families, and in Section \ref{sec:2level_lower}, we establish lower bounds. Section \ref{sec:linear} shows that there is a strong separation between linear and non-linear capacities, solidifying the necessity of network decoding in this setting. Finally, Section \ref{sec:open} is devoted to conclusions, a discussion of open problems, and possible future works.

\section{Problem Statement and Motivation}
\label{sec:motiv}

We focus on the typical scenario studied within the context of network coding; see~\cite{ahlswede,CY06,YC06,randomHo,linearNC,KK1,koettermedard,epfl1,SKK,WSK,YY07,YNY07,YangYeung2,jaggi2005,RK18} among many others.
Namely, one source of information attempts to transmit information packets to a collection of terminals through a network of intermediate nodes. The packets are 
elements of a finite alphabet $\mA$ and the network is acyclic and delay-free.
We are interested in solving the multicast problem (that is, each terminal demands all packets)
under the assumption that an omniscient adversary acts on the network edges according to some restrictions. 
These concepts will be formalized later 
in the paper; see Section~\ref{sec:channel}.
The goal of this section is to illustrate
the motivation behind our paper through an example.

Consider the single-source, two-terminal network $\mN$ depicted in Figure \ref{fig:ie1}. We want to compute the number of alphabet packets that the source $S$ can transmit to the terminal $T$ in a single transmission round, called the \textit{1-shot capacity}.

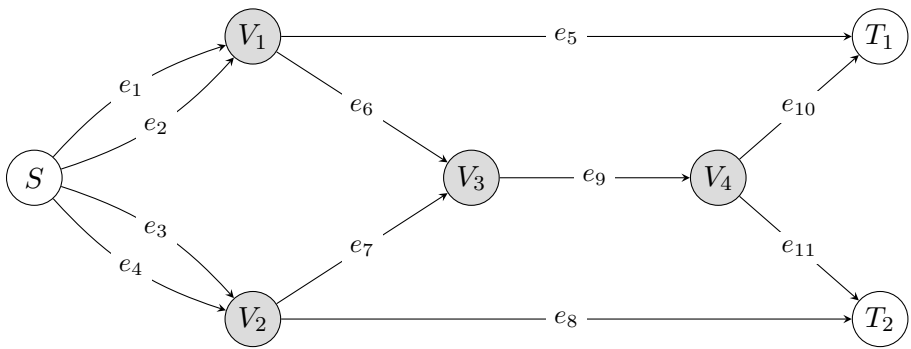
\begin{figure}[h!]
\centering
\begin{tikzpicture}
\tikzset{vertex/.style = {shape=circle,draw,inner sep=0pt,minimum size=1.9em}}
\tikzset{nnode/.style = {shape=circle,fill=myg,draw,inner sep=0pt,minimum
size=1.9em}}
\tikzset{edge/.style = {->,> = stealth}}
\tikzset{dedge/.style = {densely dotted,->,> = stealth}}
\tikzset{ddedge/.style = {dashed,->,> = stealth}}

\node[vertex] (S1) {$S$};

\node[shape=coordinate,right=\mynodespace of S1] (K) {};

\node[nnode,above=0.6\mynodespace of K] (V1) {$V_1$};

\node[nnode,below=0.6\mynodespace of K] (V2) {$V_2$};

\node[nnode,right=\mynodespace of K] (V3) {$V_3$};

\node[nnode,right=\mynodespace of V3] (V4) {$V_4$};


\node[vertex,right=3\mynodespace of V1 ] (T1) {$T_1$};

\node[vertex,right=3\mynodespace of V2] (T2) {$T_2$};

\draw[edge,bend left=15] (S1)  to node[fill=white, inner sep=3pt]{\small $e_1$} (V1);

\draw[edge,bend right=15] (S1)  to node[fill=white, inner sep=3pt]{\small $e_2$} (V1);

\draw[edge,bend left=15] (S1) to  node[fill=white, inner sep=3pt]{\small $e_3$} (V2);

\draw[edge,bend right=15] (S1) to  node[fill=white, inner sep=3pt]{\small $e_4$} (V2);

\draw[edge,bend left=0] (V1) to  node[fill=white, inner sep=3pt]{\small $e_6$} (V3);

\draw[edge,bend left=0] (V4)  to node[fill=white, inner sep=3pt]{\small $e_{10}$} (T1);

\draw[edge,bend left=0] (V4)  to node[fill=white, inner sep=3pt]{\small $e_{11}$} (T2);

\draw[edge,bend left=0] (V1)  to node[fill=white, inner sep=3pt]{\small $e_{5}$} (T1);

\draw[edge,bend left=0] (V2)  to node[fill=white, inner sep=3pt]{\small $e_{8}$} (T2);

\draw[edge,bend left=0] (V2) to  node[fill=white, inner sep=3pt]{\small $e_7$} (V3);

\draw[edge,bend left=0] (V3) to  node[fill=white, inner sep=3pt]{\small $e_{9}$} (V4);


\end{tikzpicture}
\caption{An example of a network.\label{fig:ie1}}
\end{figure}

If no adversary acts on the network,
then the traditional cut-set bounds 
are sharp, if the network alphabet is sufficiently large; see~\cite{ahlswede,linearNC,koettermedard}. Since the (edge) min-cut
between $S$ and any $T \in \{T_1,T_2\}$ is~2, no more than 2 packets can be sent in a single transmission round. Furthermore, a strategy that achieves this bound is obtained by routing packet~1 across paths
$e_1 \to e_5$ and $e_4 \to e_8$,
and packet 2
 across paths
$e_2 \to e_6 \to e_9 \to e_{10}$ and $e_3 \to e_7 \to e_9 \to e_{11}$.

This paper focuses on an adversarial model. That is, 
a malicious ``outside actor'' can corrupt up to $t$ information packets 
sent via the edges.
Note that the
term adversary has no cryptographic meaning in our setting and it simply models 
the situation where \textit{any} pattern of~$t$ errors needs to be corrected.

Now suppose that the network $\mN$
is vulnerable, with an adversary able to change the value of up to $t=1$ of the 
network edges.
Figure \ref{fig:ie2} represents the same network as Figure~\ref{fig:ie1}, but with vulnerable (dashed) edges.
This scenario has been extensively investigated within network coding with possibly multiple terminals and multiple sources; see for instance~\cite{MANIAC,RK18,YC06,CY06,SKK,KK1}. In particular, it follows from the Network Singleton Bound of \cite{CY06,YC06,RK18} that 
the network has capacity 0, meaning that the largest unambiguous code has size 1 (the terminology will be formalized later).

\begin{figure}[h!]
\centering
\begin{tikzpicture}
\tikzset{vertex/.style = {shape=circle,draw,inner sep=0pt,minimum size=1.9em}}
\tikzset{nnode/.style = {shape=circle,fill=myg,draw,inner sep=0pt,minimum
size=1.9em}}
\tikzset{edge/.style = {->,> = stealth}}
\tikzset{dedge/.style = {densely dotted,->,> = stealth}}
\tikzset{ddedge/.style = {dashed,->,> = stealth}}

\node[vertex] (S1) {$S$};

\node[shape=coordinate,right=\mynodespace of S1] (K) {};

\node[nnode,above=0.6\mynodespace of K] (V1) {$V_1$};

\node[nnode,below=0.6\mynodespace of K] (V2) {$V_2$};

\node[nnode,right=\mynodespace of K] (V3) {$V_3$};

\node[nnode,right=\mynodespace of V3] (V4) {$V_4$};


\node[vertex,right=3\mynodespace of V1 ] (T1) {$T_1$};

\node[vertex,right=3\mynodespace of V2] (T2) {$T_2$};

\draw[ddedge,bend left=15] (S1)  to node[fill=white, inner sep=3pt]{\small $e_1$} (V1);

\draw[ddedge,bend right=15] (S1)  to node[fill=white, inner sep=3pt]{\small $e_2$} (V1);

\draw[ddedge,bend left=15] (S1) to  node[fill=white, inner sep=3pt]{\small $e_3$} (V2);

\draw[ddedge,bend right=15] (S1) to  node[fill=white, inner sep=3pt]{\small $e_4$} (V2);

\draw[ddedge,bend left=0] (V1) to  node[fill=white, inner sep=3pt]{\small $e_6$} (V3);

\draw[ddedge,bend left=0] (V4)  to node[fill=white, inner sep=3pt]{\small $e_{10}$} (T1);

\draw[ddedge,bend left=0] (V4)  to node[fill=white, inner sep=3pt]{\small $e_{11}$} (T2);

\draw[ddedge,bend left=0] (V1)  to node[fill=white, inner sep=3pt]{\small $e_{5}$} (T1);

\draw[ddedge,bend left=0] (V2)  to node[fill=white, inner sep=3pt]{\small $e_{8}$} (T2);

\draw[ddedge,bend left=0] (V2) to  node[fill=white, inner sep=3pt]{\small $e_7$} (V3);

\draw[ddedge,bend left=0] (V3) to  node[fill=white, inner sep=3pt]{\small $e_{9}$} (V4);

\end{tikzpicture}
\caption{\label{fig:ie2} The network of Figure \ref{fig:ie1}, where all edges are vulnerable (dashed).}
\end{figure}
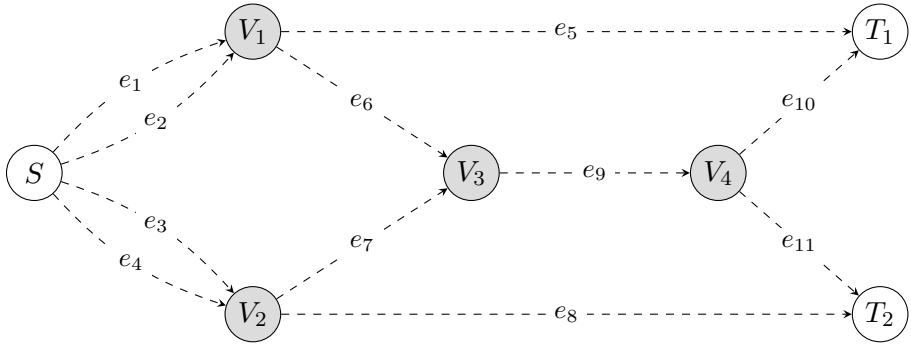

We recall that in the case of a network with multiple sources, multiple terminals, and an adversary able to corrupt up to $t$ of the network edges, the capacity region was computed in~\cite{MANIAC,RK18,epfl2}. In the scenario just described, a capacity-achieving scheme can be obtained by combining linear network coding with rank-metric end-to-end coding.
We will comment on this again in Theorem \ref{thm:mcm}.

\begin{figure}[h!]
\centering
\begin{tikzpicture}
\tikzset{vertex/.style = {shape=circle,draw,inner sep=0pt,minimum size=1.9em}}
\tikzset{nnode/.style = {shape=circle,fill=myg,draw,inner sep=0pt,minimum
size=1.9em}}
\tikzset{edge/.style = {->,> = stealth}}
\tikzset{dedge/.style = {densely dotted,->,> = stealth}}
\tikzset{ddedge/.style = {dashed,->,> = stealth}}

\node[vertex] (S1) {$S$};

\node[shape=coordinate,right=\mynodespace of S1] (K) {};

\node[nnode,above=0.6\mynodespace of K] (V1) {$V_1$};

\node[nnode,below=0.6\mynodespace of K] (V2) {$V_2$};

\node[nnode,right=\mynodespace of K] (V3) {$V_3$};

\node[nnode,right=\mynodespace of V3] (V4) {$V_4$};


\node[vertex,right=3\mynodespace of V1 ] (T1) {$T_1$};

\node[vertex,right=3\mynodespace of V2] (T2) {$T_2$};

\draw[ddedge,bend left=15] (S1)  to node[fill=white, inner sep=3pt]{\small $e_1$} (V1);

\draw[ddedge,bend right=15] (S1)  to node[fill=white, inner sep=3pt]{\small $e_2$} (V1);

\draw[ddedge,bend left=15] (S1) to  node[fill=white, inner sep=3pt]{\small $e_3$} (V2);

\draw[ddedge,bend right=15] (S1) to  node[fill=white, inner sep=3pt]{\small $e_4$} (V2);

\draw[ddedge,bend left=0] (V1) to  node[fill=white, inner sep=3pt]{\small $e_6$} (V3);

\draw[edge,bend left=0] (V4)  to node[fill=white, inner sep=3pt]{\small $e_{10}$} (T1);

\draw[edge,bend left=0] (V4)  to node[fill=white, inner sep=3pt]{\small $e_{11}$} (T2);

\draw[edge,bend left=0] (V1)  to node[fill=white, inner sep=3pt]{\small $e_{5}$} (T1);

\draw[edge,bend left=0] (V2)  to node[fill=white, inner sep=3pt]{\small $e_{8}$} (T2);

\draw[ddedge,bend left=0] (V2) to  node[fill=white, inner sep=3pt]{\small $e_7$} (V3);

\draw[ddedge,bend left=0] (V3) to  node[fill=white, inner sep=3pt]{\small $e_{9}$} (V4);


\end{tikzpicture} 
\caption{{The network of Figure \ref{fig:ie1}, where only the dashed edges are vulnerable.}\label{fig:introex1}}
\end{figure}
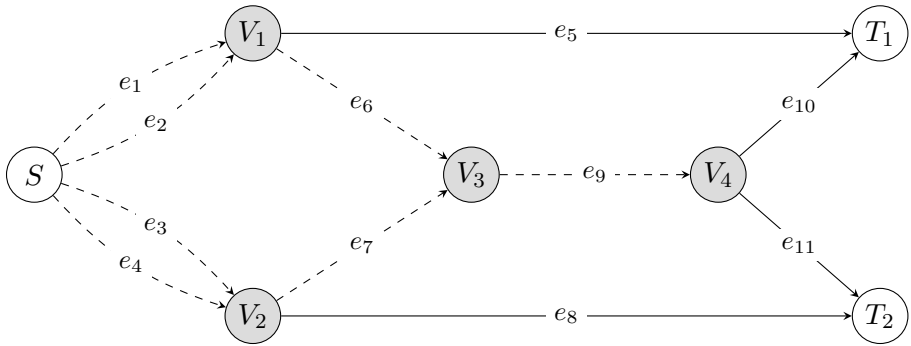

We now turn to the scenario that motivates this paper. The network remains vulnerable, but this time the adversary can only corrupt at most one of the \textit{dashed} edges in Figure~\ref{fig:introex1}; the solid edges are \textit{not} vulnerable.
Our main question is the same (what is the largest number of alphabet packets that can be transmitted?), but the answer is less obvious/known than before. By restricting the adversary to operate on a proper subset of the edges, one expects the capacity to increase with respect to the ``unrestricted'' situation. Therefore a natural question is whether the rate of one packet in a single channel use can be achieved.
As we will show in Theorem~\ref{computC},
this is not possible: instead, the partially vulnerable network has capacity $\log_{|\mA|} (|\mA|-1)$,
where~$\mA$ denotes the network alphabet.
As we will see, capacity can be achieved by making nodes $V_1$, $V_2$ and $V_3$ partially 
\textit{decode} received information
(which explains the title of our paper).
This is in sharp contrast with the case of an unrestricted adversary, where capacity can be achieved with end-to-end encoding/decoding.

The goal of this paper is to develop the theoretical framework needed to study networks that are partially vulnerable to adversarial noise, comparing results and strategies 
with those currently available in contexts that are to date much better understood.

\section{Channels and Networks}
\label{sec:channel}

In this section we include some preliminary definitions and results that will be needed throughout the entire paper. This will also allow us to establish the notation and to state the problems we will investigate in rigorous mathematical language.
This section is divided into two subsections. The first is devoted to arbitrary~(adversarial) channels, while in the second we focus our attention on communication networks and their capacities.

\subsection{Adversarial Channels}
\label{subsect:channels}

In our treatment, we will use the definition of (adversarial) channels proposed in~\cite{RK18}, based on the notion of~\textit{fan-out sets}.
This concept essentially dates back to Shannon's fundamental paper on the zero-error capacity of a channel~\cite{shannon_zero}. 
Under this approach, a channel is fully specified by the collection of symbols $y$ that can be received when a given symbol $x$ is transmitted. Considering transition probabilities does not make sense in this model since the noise is assumed to be of an ``adversarial'' nature. The latter assumption conveniently models all those scenarios where \textit{any} error pattern of a given type must be corrected
and no positive probability of unsuccessful decoding is tolerated. 
This is the scenario considered (often implicitly) in standard network coding references such as~\cite{SKK,YC06,CY06,KK1} among many others.
We will briefly consider a probabilistic regime in Subsection~\ref{sec:info} with the goal of forming intuition about certain information theory phenomena we will observe.

\begin{definition}
\label{dd1}
A (\textbf{discrete}, \textbf{adversarial}) \textbf{channel} is a map
$\Omega: \mX \to 2^{\mY} \setminus \{\emptyset\}$, where~$\mX$ and~$\mY$ are finite non-empty sets called the \textbf{input} and \textbf{output alphabets} respectively.
The notation for such a channel is $\Omega: \mX \dashrightarrow \mY$. We say that $\Omega$ is  \textbf{deterministic} if $|\Omega(x)|=1$ for all $x \in \mX$.
We call~$\Omega(x)$ the \textbf{fan-out set} of $x$.
\end{definition}

Note that a deterministic channel $\Omega: \mX \dashrightarrow \mY$ can be naturally identified with a function~$\mX \to \mY$, which we denote by $\Omega$ as well.

\begin{definition}
\label{dd2}
A \textbf{(outer) code} for a channel $\Omega: \mX \dashrightarrow \mY$ is a non-empty subset $\mC \subseteq \mX$. We say that~$\mC$ is
\textbf{unambiguous} if $\Omega(x) \cap \Omega(x') =\emptyset$
for all $x, x' \in \mC$ with $x \neq x'$. 
\end{definition}

The base-two logarithm of the largest size of an unambiguous code for a given channel is its~\textit{1-shot capacity}. In the graph theory representation of channels proposed by Shannon in~\cite{shannon_zero}, the~1-shot capacity coincides with the base-two logarithm of the largest cardinality of an independent set in the corresponding graph. We refer to~\cite[Section~II]{RK18}
for a more detailed discussion.

\begin{definition}
\label{def:future}
The (\textbf{$1$-shot}) \textbf{capacity} of a channel $\Omega:\mX \dashrightarrow \mY$ is the real number
$$\CC_1(\Omega)=\max\left\{\log_2 |\mC| \; : \; \mC \subseteq \mX \mbox{ is an unambiguous code for $\Omega$}\right\}.$$
\end{definition}

We give an example to illustrate the previous definitions.

\begin{example}
Let $\mX=\mY=\{0,1,2,3,4,5,6,7\}$. Define a channel $\Omega: \mX \dto \mY$ by setting
$$\Omega(x)=
\begin{cases}
      \{0,2\}  & \text{if} \ \ x = 0, \\
      \{0,1,4,6\}  & \text{if} \ \ x = 1, \\
      \{2,3,5\}  & \text{if} \ \ x = 2, \\
      \{2,3,4,7\}  & \text{if} \ \ x = 3,
\end{cases}
\qquad  \qquad
\Omega(x)=
\begin{cases}      
      \{2,3,4,6\}  & \text{if} \ \ x = 4, \\
      \{0,1,5\}  & \text{if} \ \ x = 5, \\
      \{6\}  & \text{if} \ \ x = 6, \\
      \{0,1,5,7\}  & \text{if} \ \ x = 7. \\
\end{cases}$$
Clearly, $\Omega$ is not deterministic. It can be checked that the only unambigious code for $\Omega$ of size~3 is~$\mC=\{3,5,6\}$, and that there are no unambiguous codes of size 4. Therefore we have~$\CC_1(\Omega)=\log_2 3$.
\end{example}

In this paper we focus solely on the  
1-shot capacity of 
channels. 
While other capacity notions can be considered as well (e.g., the \textit{zero-error capacity}), in the networking context these are significantly more technical to treat
than the 1-shot capacity, especially when focusing on restricted noise. We therefore omit them
in this first paper on network decoding and leave them as a future research direction; see Section \ref{sec:open}.

We next describe how channels can be compared and combined with each other, referring to~\cite{RK18} for a more complete treatment.
 
\begin{definition} \label{deffiner}
Let $\Omega_1,\Omega_2 : \mX \dashrightarrow \mY$ be channels.
We say that $\Omega_1$ is \textbf{finer}
than $\Omega_2$ (or that~$\Omega_2$ is \textbf{coarser} than $\Omega_1$) if $\Omega_1(x) \subseteq \Omega_2(x)$ for all $x \in \mX$.
The notation is $\Omega_1 \le \Omega_2$.
\end{definition}

Finer channels have larger capacity, as the following simple result states.

\begin{proposition}
\label{prop:finer}
Let $\Omega_1,\Omega_2 : \mX \dashrightarrow \mY$ be channels with $\Omega_1 \le \Omega_2$. We have $\CC_1(\Omega_1) \ge \CC_1(\Omega_2)$. 
\end{proposition}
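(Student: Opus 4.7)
The plan is to show that the set of unambiguous codes for $\Omega_2$ is contained in the set of unambiguous codes for $\Omega_1$, from which the inequality on capacities follows immediately by taking a maximum-size unambiguous code for $\Omega_2$.

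More concretely, I would start by fixing an unambiguous code $\mC \subseteq \mX$ for $\Omega_2$ of maximum size, so that $\log_2 |\mC| = \CC_1(\Omega_2)$ by Definition~\ref{def:future}. For any two distinct $x, x' \in \mC$, the unambiguity of $\mC$ with respect to $\Omega_2$ gives $\Omega_2(x) \cap \Omega_2(x') = \emptyset$. Since $\Omega_1 \le \Omega_2$ means $\Omega_1(x) \subseteq \Omega_2(x)$ and $\Omega_1(x') \subseteq \Omega_2(x')$ by Definition~\ref{deffiner}, I would then observe
$$\Omega_1(x) \cap \Omega_1(x') \subseteq \Omega_2(x) \cap \Omega_2(x') = \emptyset,$$
so $\mC$ is also unambiguous for $\Omega_1$. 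Hence $\CC_1(\Omega_1) \ge \log_2|\mC| = \CC_1(\Omega_2)$, which is the desired inequality.

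There is no real obstacle here: the statement follows directly from the monotonicity of set intersection under inclusion, combined with the definitions of \emph{finer channel} and \emph{unambiguous code}. The only thing worth being careful about is ensuring that $\mC$ is still a valid code for $\Omega_1$ in the sense of Definition~\ref{dd2}, which holds since $\mC$ is by construction a non-empty subset of the common input alphabet $\mX$.
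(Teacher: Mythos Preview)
Your proof is correct and is exactly the natural argument. The paper does not spell out a proof of this proposition at all, simply calling it a ``simple result'' and moving on; your argument is the obvious one and is surely what the authors had in mind.
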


Channels with compatible output/input alphabets can be concatenated with each other via the following construction. 

\begin{definition}
Let $\Omega_1:\mX_1 \dashrightarrow \mY_1$ and $\Omega_2:\mX_2 \dashrightarrow \mY_2$
be channels, with $\mY_1 \subseteq \mX_2$.
The \textbf{concatenation} of
$\Omega_1$ and $\Omega_2$ is the channel
$\Omega_1 \blacktriangleright \Omega_2 : \mX_1 \dashrightarrow \mY_2$ defined by
$$(\Omega_1 \blacktriangleright \Omega_2)(x):= \bigcup_{y \in \Omega_1(x)} \Omega_2(y).$$
\end{definition}

The concatenation of channels is associative in the following precise sense.

\begin{proposition}
\label{prop:11}
Let $\Omega_i:\mX_i \dashrightarrow \mY_i$ be channels, for $i \in \{1,2,3\}$,
with $\mY_i \subseteq \mX_{i+1}$ for 
$i \in \{1,2\}$. We have
$(\Omega_1 \blacktriangleright \Omega_2) \blacktriangleright \Omega_3 = 
\Omega_1 \blacktriangleright (\Omega_2 \blacktriangleright \Omega_3)$. 
\end{proposition}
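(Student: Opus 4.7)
The plan is to verify the identity pointwise by unrolling the definition of concatenation on both sides and using the elementary set-theoretic fact that a union indexed by a union is a double union, i.e.\ $\bigcup_{z \in \bigcup_{y \in A} B_y} C_z = \bigcup_{y \in A} \bigcup_{z \in B_y} C_z$.

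First I would fix an arbitrary input $x \in \mX_1$ and expand the left-hand side. By the definition of concatenation applied to the outer pair,
\[
\bigl((\Omega_1 \blacktriangleright \Omega_2) \blacktriangleright \Omega_3\bigr)(x) \;=\; \bigcup_{z \in (\Omega_1 \blacktriangleright \Omega_2)(x)} \Omega_3(z).
\]
Substituting the inner definition $(\Omega_1 \blacktriangleright \Omega_2)(x) = \bigcup_{y \in \Omega_1(x)} \Omega_2(y)$ and swapping the order of the nested unions, this becomes $\bigcup_{y \in \Omega_1(x)} \bigcup_{z \in \Omega_2(y)} \Omega_3(z)$.

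Next I would expand the right-hand side analogously. The definition applied to the outer pair gives
\[
\bigl(\Omega_1 \blacktriangleright (\Omega_2 \blacktriangleright \Omega_3)\bigr)(x) \;=\; \bigcup_{y \in \Omega_1(x)} (\Omega_2 \blacktriangleright \Omega_3)(y),
\]
and substituting $(\Omega_2 \blacktriangleright \Omega_3)(y) = \bigcup_{z \in \Omega_2(y)} \Omega_3(z)$ yields exactly the same double union $\bigcup_{y \in \Omega_1(x)} \bigcup_{z \in \Omega_2(y)} \Omega_3(z)$. Comparing the two expressions finishes the proof.

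The only preliminary check is that every concatenation appearing in the statement is well defined. The outer concatenation on the left needs the output alphabet of $\Omega_1 \blacktriangleright \Omega_2$ (which is $\mY_2$) to lie inside $\mX_3$, and the inner concatenation on the right needs $\mY_2 \subseteq \mX_3$; both are guaranteed by the hypothesis $\mY_i \subseteq \mX_{i+1}$ for $i \in \{1,2\}$. There is no real obstacle here: the argument is a routine manipulation of nested unions, and the proposition is essentially the observation that taking images under a fan-out map distributes over unions.
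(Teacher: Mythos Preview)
Your proof is correct. The paper states this proposition without proof, and your pointwise expansion via nested unions is exactly the natural (and essentially only) way to verify it.
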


The previous result allows us to 
write expressions such as $\Omega_1 \blacktriangleright \Omega_2 \blacktriangleright \Omega_3$
without parentheses, when all concatenations are defined. 

We conclude this subsection with a 
discrete
version 
of the \textit{data processing inequality} from classical information theory; see e.g.~\cite[Section 2.8]{coverthomas}.

\begin{proposition}
\label{dpi}
Let $\Omega_1:\mX_1 \dashrightarrow \mY_1$ and $\Omega_2:\mX_2 \dashrightarrow \mY_2$
be channels, with $\mY_1 \subseteq \mX_2$.
We have~$\CC_1(\Omega_1 \blacktriangleright \Omega_2) \le \min\{\CC_1(\Omega_1), \, 
\CC_1(\Omega_2)\}$. 
\end{proposition}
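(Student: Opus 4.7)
The plan is to prove the two inequalities $\CC_1(\Omega_1 \blacktriangleright \Omega_2) \le \CC_1(\Omega_1)$ and $\CC_1(\Omega_1 \blacktriangleright \Omega_2) \le \CC_1(\Omega_2)$ separately, in both cases by taking an optimal unambiguous code $\mC$ for the concatenation and producing from it an unambiguous code of at least the same cardinality for $\Omega_1$ and for $\Omega_2$.

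For the first inequality, I would argue directly that $\mC$ itself is unambiguous for $\Omega_1$. Suppose for contradiction that there exist distinct $x, x' \in \mC$ with $\Omega_1(x) \cap \Omega_1(x') \neq \emptyset$, and pick $y$ in this intersection. Since $\Omega_2(y) \neq \emptyset$ by Definition~\ref{dd1}, any element of $\Omega_2(y)$ lies in both $(\Omega_1 \blacktriangleright \Omega_2)(x)$ and $(\Omega_1 \blacktriangleright \Omega_2)(x')$ by the definition of concatenation, contradicting the unambiguousness of $\mC$ for $\Omega_1 \blacktriangleright \Omega_2$. Hence $\CC_1(\Omega_1) \ge \log_2|\mC| = \CC_1(\Omega_1 \blacktriangleright \Omega_2)$.

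For the second inequality, I would build an unambiguous code $\mC' \subseteq \mX_2$ for $\Omega_2$ from $\mC$. For each $x \in \mC$, use the fact that $\Omega_1(x) \neq \emptyset$ to choose some $y_x \in \Omega_1(x) \subseteq \mY_1 \subseteq \mX_2$, and set $\mC' := \{y_x : x \in \mC\}$. For distinct $x, x' \in \mC$, note $\Omega_2(y_x) \subseteq (\Omega_1 \blacktriangleright \Omega_2)(x)$ and similarly for $x'$, so $\Omega_2(y_x) \cap \Omega_2(y_{x'}) = \emptyset$ by unambiguousness of $\mC$. The same reasoning (using that $\Omega_2(y_x) \neq \emptyset$) forces the $y_x$'s to be pairwise distinct, so $|\mC'| = |\mC|$, yielding $\CC_1(\Omega_2) \ge \log_2|\mC|$.

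There is no real obstacle here; the only subtle point is to invoke the non-emptiness of fan-out sets (built into Definition~\ref{dd1}) at the two places where it is needed, namely to derive the contradiction in the first part and to guarantee both the disjointness of the $\Omega_2(y_x)$ and the distinctness of the $y_x$ in the second part. Combining the two inequalities with the definition of minimum gives the claim.
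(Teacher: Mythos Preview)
Your proof is correct. Note that the paper does not actually supply a proof for this proposition; it is stated as a discrete analogue of the classical data processing inequality and left without argument. Your argument is the standard one: it correctly invokes the non-emptiness of fan-out sets from Definition~\ref{dd1} at the two places where it is needed, and the construction of $\mC'$ for the second inequality is exactly what one would expect.
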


\subsection{Networks and Their Capacities}

In this subsection we formally define 
communication networks, network codes, and the channels they induce.
Our approach is inspired by~\cite{RK18}, even though the notation used in this paper differs slightly. We omit some of the details and refer the interested reader directly to~\cite{RK18}.

\begin{definition}
\label{def:network}
A (\textbf{single-source}) \textbf{network}  is a 4-tuple $\mN=(\mV,\mE, S, \bfT)$ where:
\begin{enumerate}[label=(\Alph*)] 
\item $(\mV,\mE)$ is a finite, directed, acyclic multigraph,
\item $S \in \mV$ is the \textbf{source},
\item ${\bf T} \subseteq \mV$ is the set of \textbf{terminals} or \textbf{sinks},
\end{enumerate}
Note that we allow multiple parallel directed edges. We also assume that the following hold.
\begin{enumerate}[label=(\Alph*)] \setcounter{enumi}{3}
 \item $|{\bf T}| \ge 1$, $S \notin  {\bf T}$.
\item \label{prnE} For any $T \in {\bf T}$, there exists a directed path from $S$ to $T$.
\item The source does not have incoming edges, and terminals do not have outgoing edges. \label{prnF}
\item For every vertex $V \in \mV \setminus (\{S\} \cup \bfT)$, there exists a directed path from
$S$ to $V$ and a directed path from $V$ to $T$ for some $T \in {\bf T}$. \label{prnG}
\end{enumerate}
The elements of $\mV$ are called \textbf{vertices} or \textbf{nodes}. The elements of 
$\mV \setminus (\{S\} \cup {\bf T})$ are called \textbf{intermediate} nodes. 
A (\textbf{network}) \textbf{alphabet} is a finite set $\mA$ with $|\mA| \ge 2$.
The elements of~$\mA$ are called \textbf{symbols} or \textbf{packets}. We say that $\mN$ is a \textbf{single-terminal} network if $|\bfT| = 1$.
\end{definition}

The network alphabet is interpreted as the set of symbols that can be sent over the edges of the network.

\begin{notation} \label{not:fixN}
Throughout the paper, $\mN=(\mV,\mE,S,{\bf T})$ will always denote a network and $\mA$ an alphabet, as in Definition~\ref{def:network}, unless otherwise stated. We let
$$\mincut_\mN(V,V')$$
be the minimum cardinality of an edge-cut (a set of edges that cuts the connection) between vertices $V,V' \in \mV$. We denote the
set of incoming and outgoing edges of $V \in \mV$ by $\inn(V)$ and~$\out(V)$, respectively, and their cardinalities by $\degin(V)$ and $\degout(V)$.
These cardinalities are called the \textbf{in-degree} and \textbf{out-degree} of the vertex $V$, respectively.
\end{notation}

The following concepts will be crucial in our approach.

\begin{definition} \label{def:prece}
The edges of a network $\mN=(\mV,\mE, S, \bfT)$ can be partially ordered as follows. For $e,e' \in \mE$, we say that $e$ \textbf{precedes} $e'$ if 
there exists a directed path in $\mN$ that starts with $e$ and ends with~$e'$. The notation is 
$e \preccurlyeq e'$.
\end{definition}

\begin{notation} \label{not:ext}
Following the notation of Definition~\ref{def:prece}, it is well known in graph theory that
the  partial order on $\mE$ can be extended to a (not necessarily unique) total order. By definition, such an extension $\le$ satisfies the following property:
$e \preccurlyeq e'$ implies  $e \le e'$.
Throughout the paper we will assume that such an order extension has been fixed in the various networks and we denote it by $\le$ (none of the results of this paper depend on the particular choice of the total order).
Moreover, we illustrate the chosen total order via the labeling of the edges; see, for example, Figure~\ref{fig:ie1}.
\end{notation}

In our model, the intermediate nodes of a network process incoming packets according to prescribed functions. We do not assume any restrictions on these functions. In particular, even when $\mA$ is a linear space over a given finite field, we do \textit{not} require the functions to be linear over the underlying field.
This is in strong contrast with the most common approach taken in the context of network coding; see, for instance, 
\cite{YNY07,Zhang,randomHo,randomlocations,ho2008,linearNC}. In fact, as we will argue in Section~\ref{sec:linear},
using non-linear network codes (e.g. \textit{decoding} at intermediate nodes) is often \textit{needed} to achieve capacity in the scenarios studied in this paper.

\begin{definition}
\label{def:nc}
Let $\mN=(\mV,\mE, S, \bfT)$ be a network and $\mA$ an alphabet. A \textbf{network code} for  $(\mN,\mA)$ is a family $\mF=\{\mF_V \st V \in \mV \setminus (\{S\} \cup {\bf T})\}$ of functions,
where $$\mF_V : \mA^{\degin(V)} \to \mA^{\degout(V)} \quad \mbox{for all $V \in \mV \setminus (\{S\} \cup {\bf T})$}.$$
\end{definition}

A network code $\mF$ fully specifies how the intermediate nodes of a network process information packets. Note that the interpretation of each function $\mF_V$ is unique precisely thanks to the choice of the total order $\le$; see Notation~\ref{not:ext}.

\begin{example}
\label{ex:adv}
Consider the network depicted in Figure~\ref{easyex}, consisting of one source, one terminal, and one intermediate node. The edges are ordered according to their indices.
\begin{figure}[h!]
\centering
\begin{tikzpicture}
\tikzset{vertex/.style = {shape=circle,draw,inner sep=0pt,minimum size=1.9em}}
\tikzset{nnode/.style = {shape=circle,fill=myg,draw,inner sep=0pt,minimum
size=1.9em}}
\tikzset{edge/.style = {->,> = stealth}}
\tikzset{dedge/.style = {densely dotted,->,> = stealth}}
\tikzset{ddedge/.style = {dashed,->,> = stealth}}

\node[vertex] (S1) {$S$};

\node[shape=coordinate,right=\mynodespace*0.5 of S1] (K) {};

\node[nnode,right=\mynodespace of K] (V) {$V$};

\node[vertex,right=\mynodespace*0.5 of V3] (T) {$T$};

\draw[edge,bend left=25] (S1)  to node[fill=white, inner sep=3pt]{\small $e_1$} (V);

\draw[edge,bend right=25] (S1)  to node[fill=white, inner sep=3pt]{\small $e_3$} (V);

\draw[edge,bend right=0] (S1)  to node[fill=white, inner sep=3pt]{\small $e_2$} (V);

\draw[edge,bend left=0] (V)  to node[fill=white, inner sep=3pt]{\small $e_{4}$} (T);
\end{tikzpicture} 
\caption{Network for Example~\ref{ex:adv}.\label{easyex}}
\end{figure}
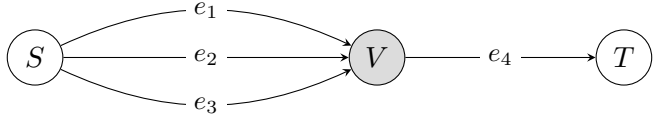
The way vertex $V$ processes information is fully specified by a function $\mF_V:\mA^3 \to \mA$, thanks to the choice of the total order.
For example, if $\mA=\F_5$ and $\mF_V(x_1,x_2,x_3)=x_1+2x_2+3x_3$
for all~$(x_1,x_2,x_3) \in \mA^3$, then vertex $V$ sends over edge $e_4$ the field element obtained by summing the field element collected on edge $e_1$ with twice the field element collected on edge $e_2$ and three times the field element collected on edge $e_3$.
\end{example}

This paper focuses on networks $\mN=(\mV,\mE, S, \bfT)$ affected by \textit{potentially restricted} adversarial noise. More precisely, we assume that at most $t$ of the alphabet symbols on a given edge set $\mU \subseteq \mE$ can be changed into any other alphabet symbols. We are interested in computing the largest number of packets that can be multicasted to all terminals in a single channel use.
As already mentioned,
the notion of error probability does not make sense in this context, since the noise is adversarial in nature.
We can make the described problem 
rigorous with the aid of the following notation, which connects networks and network codes to the notion of (adversarial) channels introduced in
 Subsection~\ref{subsect:channels}.

\begin{notation} \label{not:netch}
Let $\mN=(\mV,\mE, S, \bfT)$ be a network, $\mA$ an alphabet, $T \in \bd{T}$ a terminal, $\mF$ a network code for $(\mN,\mA)$, $\mU \subseteq \mE$ an edge set, and $t \ge 0$ an integer. We denote by
$$\Omega[\mN, \mA, \mF, S \to T,\mU,t] : \mA^{\degout(S)} \dashrightarrow \mA^{\degin(T)}$$
the channel representing the transfer from $S$ to terminal $T \in \bd{T}$, when the network code $\mF$ is used by the vertices and at most $t$ packets from the edges in $\mU$ are corrupted. In this context, we call $t$ the \textbf{adversarial power}.
\end{notation}
The following example illustrates how to formally describe the channel introduced in Notation~\ref{not:netch}. 
\begin{example}
\label{ex:ad}
Let $\mN$ be the network in Figure~\ref{fig:ad} and $\mA$ be an alphabet. We consider an adversary capable of corrupting up to one of the dashed edges, that is, one of the edges in~$\mU =\{e_1,e_2,e_3\}$. 
Let $\mF_{V_1}: \mA \to \mA$ be the identity function and let $\mF_{V_2}: \mA \to \mA$ be a function returning a constant value $a \in \mA$. 
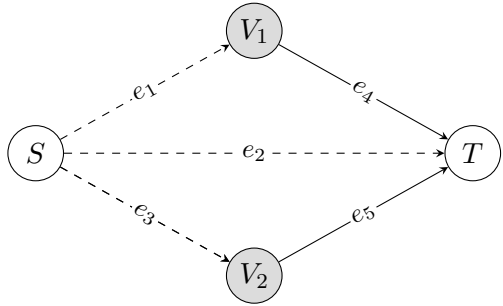
\begin{figure}[htbh]
\centering
\begin{tikzpicture}
\tikzset{vertex/.style = {shape=circle,draw,inner sep=0pt,minimum size=1.9em}}
\tikzset{nnode/.style = {shape=circle,fill=myg,draw,inner sep=0pt,minimum
size=1.9em}}
\tikzset{edge/.style = {->,> = stealth}}
\tikzset{dedge/.style = {densely dotted,->,> = stealth}}
\tikzset{ddedge/.style = {dashed,->,> = stealth}}

\node[vertex] (S1) {$S$};

\node[shape=coordinate,right=\mynodespace of S1] (K) {};

\node[nnode,above=0.5\mynodespace of K] (V1) {$V_1$};

\node[nnode,below=0.5\mynodespace of K] (V2) {$V_2$};

\node[vertex,right=\mynodespace of K] (T) {$T$};

\draw[ddedge,bend left=0] (S1)  to node[sloped,fill=white, inner sep=1pt]{\small $e_1$} (V1);

\draw[ddedge,bend left=0] (S1) to  node[sloped,fill=white, inner sep=1pt]{\small $e_2$} (T);

\draw[ddedge,bend right=0] (S1)  to node[sloped,fill=white, inner sep=1pt]{\small $e_3$} (V2);

\draw[ddedge,bend right=0] (S1)  to node[sloped,fill=white, inner sep=1pt]{\small $e_3$} (V2);

\draw[edge,bend left=0] (V1)  to node[sloped,fill=white, inner sep=1pt]{\small $e_4$} (T);

\draw[edge,bend left=0] (V2)  to node[sloped,fill=white, inner sep=1pt]{\small $e_{5}$} (T);

\end{tikzpicture} 
\caption{{{Network for Example~\ref{ex:ad}.}}}\label{fig:ad}
\end{figure}
This scenario is fully modeled by the channel
$\Omega[\mN, \mA, \mF, S \to T,\mU,1] : \mA^{3} \dashrightarrow \mA^{2}$,
which we now describe.
 For $x=(x_1,x_2,x_3) \in \mA^3$, we have that
$\Omega[\mN, \mA, \mF, S \to T,\mU,1](x)$  is the set of all alphabet vectors $y=(y_1,y_2,a) \in \mA^3$ for which 
$\dH((y_1,y_2),(x_1,x_2)) \le 1$, where~$\dH$ denotes the Hamming distance; see~\cite{macwilliams1977theory}.
\end{example}

We are finally ready to give a rigorous definition for the 1-shot capacity of a network. This is the main quantity we are concerned with in this paper.

\begin{definition} \label{def:capacities}
Let $\mN=(\mV,\mE, S, \bfT)$ be a network, $\mA$ an alphabet,
$\mU \subseteq \mE$ an edge set,
and~$t \ge 0$ an integer. 
The (\textbf{1-shot}) \textbf{capacity}
of $(\mN,\mA,\mU,t)$
is the largest 
real number~$\kappa$ for which there exists an \textbf{outer code} $$\mC \subseteq \mA^{\degout(S)}$$
and a network code $\mF$ for~$(\mN,\mA)$ 
with $\kappa=\log_{|\mA|}(|\mC|)$ such that
$\mC$ is unambiguous for each channel
$\Omega[\mN,\mA,\mF,S \to T,\mU,t]$, $T \in \bd{T}$. The notation for this largest $\kappa$ is
$$\CC_1(\mN,\mA,\mU,t).$$
The elements of the outer code $\mC$
are called \textbf{codewords}.
\end{definition}

The following bound, which is not sharp in general, is an immediate consequence of the definitions. 

\begin{proposition} \label{prop:aux}
Following the notations of Definition \ref{def:future} and Definition~\ref{def:capacities}, we have
$$\CC_1(\mN,\mA,\mU,t) \le \, \min_T \, \max_\mF \, \CC_1(\Omega[\mN,\mA,\mF,S \to T, \mU,t]),$$
where the minimum is taken over all network terminals $T \in \bfT$ and the maximum is taken over all network codes $\mF$ for $(\mN,\mA)$. 
\end{proposition}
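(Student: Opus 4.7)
The plan is to unwrap the definitions and push the quantifiers around carefully. Let $(\mC,\mF)$ be any pair, with $\mC \subseteq \mA^{\degout(S)}$ an outer code and $\mF$ a network code for $(\mN,\mA)$, such that $\mC$ is unambiguous for the channel $\Omega[\mN,\mA,\mF,S \to T,\mU,t]$ for every terminal $T \in \bfT$. By Definition~\ref{def:capacities} it suffices to show that
\[
\log_{|\mA|}(|\mC|) \;\le\; \min_{T \in \bfT} \, \max_{\mF'} \, \CC_1\bigl(\Omega[\mN,\mA,\mF',S \to T, \mU, t]\bigr),
\]
since taking the supremum of the left-hand side over all admissible pairs $(\mC,\mF)$ yields the stated bound.

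Fix any $T \in \bfT$. By assumption $\mC$ is an unambiguous code for the channel $\Omega[\mN,\mA,\mF,S \to T,\mU,t]$, so Definition~\ref{def:future} gives
\[
\log(|\mC|) \;\le\; \CC_1\bigl(\Omega[\mN,\mA,\mF,S \to T,\mU,t]\bigr),
\]
where one reads the logarithm in whatever base is used uniformly for $\CC_1$ of a channel (the argument is base-independent). Since $\mF$ is a particular network code, it is trivially dominated by the maximum over all network codes, giving
\[
\log(|\mC|) \;\le\; \max_{\mF'} \, \CC_1\bigl(\Omega[\mN,\mA,\mF',S \to T,\mU,t]\bigr).
\]

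The right-hand side above now depends only on $T$, and the inequality holds for every $T \in \bfT$; therefore we may take the minimum over $T$ on the right, obtaining
\[
\log(|\mC|) \;\le\; \min_{T \in \bfT} \, \max_{\mF'} \, \CC_1\bigl(\Omega[\mN,\mA,\mF',S \to T,\mU,t]\bigr).
\]
Taking the supremum over all admissible $(\mC,\mF)$ on the left yields Proposition~\ref{prop:aux}. The argument is essentially an exercise in quantifier manipulation; the only point that requires care is that in the definition of $\CC_1(\mN,\mA,\mU,t)$ a single network code $\mF$ must serve every terminal simultaneously, which forces the $\min_T$ to sit \emph{outside} the $\max_\mF$ on the right-hand side — this is precisely why the bound is not sharp in general, as the paper anticipates.
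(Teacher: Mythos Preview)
Your proof is correct and is precisely the quantifier-unpacking the paper has in mind; indeed, the paper does not give a proof at all but simply states that the bound ``is an immediate consequence of the definitions.'' One small remark: the two capacity notions in Definitions~\ref{def:future} and~\ref{def:capacities} are stated with different logarithm bases ($2$ versus $|\mA|$), so rather than saying the argument is base-independent, you could note that $\log_{|\mA|}|\mC| \le \log_2|\mC|$ since $|\mA|\ge 2$, which is all that is needed for the inequality to go through.
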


The main goal of this paper is to initiate the study of the quantity $\CC_1(\mN,\mA,\mU,t)$ for an arbitrary tuple $(\mN,\mA,\mU,t)$,
where $t \ge 0$ is an integer, $\mA$ is an alphabet and $\mU$ is a \textit{proper} subset of the edges of the network $\mN$.

The main difference between this work and previous work in the same field lies precisely in the restriction of the noise to the set $\mU$. Recall moreover that when all edges are vulnerable, i.e., when $\mE=\mU$, 
the problem 
of computing~$\CC_1(\mN,\mA,\mE,t)$
can be completely solved
by combining cut-set bounds
with \textit{linear} network coding and rank-metric codes (for the achievability).
More precisely, the following hold.

\begin{theorem}[\text{see \cite{SKK}}]
\label{thm:mcm}
Let $\mN=(\mV,\mE, S, \bfT)$ be a network, $\mA$ an alphabet,
and $t \ge 0$ an integer. 
Let $\mu=\min_{T \in \bfT} \mincut_{\mN}(S,T)$.
Suppose that~$\mA=\F_{q^m}$,
with~$m \ge \mu$ and $q$ sufficiently large ($q \ge |{\bf T}|-1$ suffices).
Then
$$\CC_1(\mN,\mA,\mE,t) = \max\{0, \, \mu-2t\}.$$
\end{theorem}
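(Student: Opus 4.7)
The plan is to prove both directions of the claimed equality separately. For the upper bound $\CC_1(\mN, \mA, \mE, t) \le \max\{0, \mu - 2t\}$ I would apply a cut-set/Singleton argument via Proposition~\ref{prop:aux}; for the lower bound I would exhibit a scheme combining linear network coding over $\F_q$ with a rank-metric (Gabidulin) outer code over $\F_{q^m}$.

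For the upper bound, pick a terminal $T^\star \in \bfT$ attaining $\mincut_\mN(S,T^\star) = \mu$ and fix an edge cut $\mC^\star$ of size $\mu$ between $S$ and $T^\star$. Given any network code $\mF$ and unambiguous outer code $\mC$, define the map $\pi\colon \mC \to \mA^\mu$ sending $x$ to the $\mu$-tuple of symbols carried on $\mC^\star$ when $x$ is injected at $S$ and $\mF$ is applied at intermediate nodes. If two distinct codewords $x, x' \in \mC$ satisfy $\dH(\pi(x), \pi(x')) \le 2t$, choose $v \in \mA^\mu$ at Hamming distance $\le t$ from both; since $\mE = \mU$, the adversary may corrupt the $\le 2t$ edges of $\mC^\star$ on which $\pi(x)$ and $\pi(x')$ differ, forcing the cut-symbols to equal $v$ in either scenario and hence producing identical outputs at $T^\star$, a contradiction. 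Thus $\pi(\mC)$ is injective onto a Hamming-metric code in $\mA^\mu$ of minimum distance $\ge 2t+1$, and the classical Singleton bound gives $|\mC| \le |\mA|^{\max\{0,\mu - 2t\}}$. Taking $\log_{|\mA|}$ yields the desired inequality.

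For the lower bound, assume $\mu > 2t$ (otherwise the trivial outer code $|\mC| = 1$ suffices). Restrict all intermediate functions $\mF_V$ to be $\F_q$-linear. A standard multicast result~\cite{linearNC} asserts that when $q \ge |\bfT| - 1$ there exist local coding coefficients for which the induced transfer matrix $M_T \in \F_q^{\degin(T) \times \mu}$ from $S$ to each terminal $T$ has rank $\mincut_\mN(S,T) \ge \mu$. Choose $\mC$ to be a Gabidulin code in $\F_q^{\mu \times m}$ of $\F_{q^m}$-dimension $\mu - 2t$ and rank distance $2t+1$ (this requires $m \ge \mu$). The source encodes a message as $X \in \F_q^{\mu \times m}$, interprets each row as an element of $\mA = \F_{q^m}$, and transmits these $\mu$ packets along a fixed choice of $\mu$ out-edges (zero-padding the remaining out-edges, if any). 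The terminal receives $M_T X + E_T$, where $E_T \in \F_q^{\degin(T) \times m}$ is the additive error matrix produced by the adversary's corruptions on at most $t$ edges; by linearity of the network operations, $E_T$ factors through a rank-$\le t$ source and thus has $\F_q$-rank at most $t$. Since $M_T$ has column rank $\mu$, the Gabidulin decoder recovers $X$ from any such received matrix, proving $\CC_1(\mN,\mA,\mE,t) \ge \mu - 2t$.

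The hardest part is the linear-coding ingredient for the lower bound: producing a single network code whose transfer matrix achieves the min-cut simultaneously at \emph{every} terminal over a field of size $|\bfT| - 1$. This is where the multicast structure enters and is the content of the cited reference; once this is in place, the reduction from edge corruptions to a rank-$\le t$ additive error $E_T$ is a routine consequence of the linearity of the transfer, and the rest is a direct application of Gabidulin-code rank-error correction. The upper bound, by contrast, is essentially the classical Singleton argument lifted to the network via a single edge cut and does not require $\mA$ to have any algebraic structure.
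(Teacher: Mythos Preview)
The paper does not give its own proof of this theorem; it simply cites~\cite{SKK} (and, for the upper bound, the statement is a special case of Theorem~\ref{sbound} with $\mU=\mE$). Your two-step outline is the standard argument and is essentially correct.

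One point in the lower bound needs cleaning up. You first assert that the transfer matrix $M_T$ has $\mu$ columns, and then say the source ``transmits these $\mu$ packets along a fixed choice of $\mu$ out-edges (zero-padding the remaining out-edges)''. In general this step fails: with several terminals there need not exist a single set of $\mu$ outgoing edges of $S$ whose use alone preserves the min-cut to \emph{every} terminal. For instance, take $S$ with three out-edges to $A,B,C$, terminal $T_1$ reachable only via $A$ and $B$, and terminal $T_2$ reachable only via $B$ and $C$; both min-cuts equal~$2$, yet no two out-edges of $S$ give min-cut~$2$ to both terminals simultaneously. The correct formulation is to treat the source encoding as an $\F_q$-linear map $\F_q^{\mu}\to\F_q^{\degout(S)}$ designed \emph{jointly} with the intermediate-node coefficients so that the composite transfer $M_T\in\F_q^{\degin(T)\times\mu}$ has rank $\mu$ at every terminal; this is precisely what the multicast theorem you invoke provides. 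With that adjustment, your reduction of $t$ edge corruptions to an additive error of $\F_q$-rank at most $t$ and the Gabidulin decoding step go through as written.
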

Moreover, it has been proven in \cite{SKK} that the capacity value $\max\{0, \, \mu-2t\}$ can be attained by taking as a network code 
$\mF$ a collection of $\F_q$-linear functions. In the case of an adversary having access to only a proper subset of the network edges, the generalization of Theorem \ref{thm:mcm} can be derived and is stated as follows.  

\begin{theorem}[Generalized Network Singleton Bound; see~\cite{RK18}] \label{sbound}
Let $\mN=(\mV,\mE, S, \bfT)$ be a network, $\mA$ an alphabet,~$\mU \subseteq \mE$,
and $t \ge 0$ an integer. 
We have
\begin{equation*}
    \CC_1(\mN,\mA,\mU,t) \le \min_{T \in \bfT} \, \min_{\mE'} \left(
|\mE'\setminus \mU| + \max\{0,|\mE' \cap \mU|-2t\} \right),
\end{equation*}
where $\mE' \subseteq \mE$ ranges over edge-cuts between $S$ and $T$.
\end{theorem}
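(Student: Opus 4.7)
The plan is to fix a terminal $T \in \bfT$ and an edge-cut $\mE' \subseteq \mE$ between $S$ and $T$, and to prove that for every network code $\mF$ for $(\mN,\mA)$ and every unambiguous outer code $\mC \subseteq \mA^{\degout(S)}$ for $\Omega[\mN,\mA,\mF,S \to T,\mU,t]$ one has
$$\log_{|\mA|}|\mC| \le |\mE' \setminus \mU| + \max\{0,|\mE' \cap \mU| - 2t\}.$$
Taking the minimum over all such $T$ and $\mE'$ then yields the stated bound.

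My first step would be to introduce the \emph{noise-free cut map} $v : \mA^{\degout(S)} \to \mA^{\mE'}$ that sends a source input $x$ to the tuple of packets carried on the edges of $\mE'$ when no adversary acts. Since $\mN$ is acyclic and $\mE'$ separates $S$ from $T$, a straightforward topological induction on the set of vertices whose every $S$-path passes through $\mE'$ shows that, under $\mF$, every edge value on the terminal side of the cut, and in particular the noiseless received vector at $T$, is a deterministic function of $v(x)$ alone. An immediate consequence is that $v$ is injective on $\mC$: if $v(x)=v(x')$ for distinct $x,x' \in \mC$, then the noise-free received vectors at $T$ coincide, and this common vector belongs to both $\Omega[\mN,\mA,\mF,S \to T,\mU,t](x)$ and $\Omega[\mN,\mA,\mF,S \to T,\mU,t](x')$, contradicting unambiguity.

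The crux is a Singleton-type forbidden configuration: I claim that for any distinct $x, x' \in \mC$ the vectors $v(x), v(x') \in \mA^{\mE'}$ either disagree on some coordinate indexed by $\mE' \setminus \mU$ or differ on at least $2t+1$ coordinates indexed by $\mE' \cap \mU$. Were this to fail, let $D \subseteq \mE' \cap \mU$ be the disagreement set, so $|D| \le 2t$; split $D = A \sqcup A'$ with $|A|,|A'| \le t$. An adversary of power $t$, starting from $x$, can corrupt only the $|A| \le t$ edges in $A \subseteq \mE' \cap \mU \subseteq \mU$ and flip those coordinates of $v(x)$ to match $v(x')$; starting from $x'$, it corrupts the $|A'| \le t$ edges in $A'$ and flips them to match $v(x)$. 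Both attacks produce the same vector on $\mE'$ and hence, by the propagation property above, the same received vector at $T$ --- contradicting unambiguity.

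To finish, I would invoke the classical Singleton bound. For each fixed $w \in \mA^{\mE' \setminus \mU}$, the set of vectors in $v(\mC)$ whose projection to $\mE' \setminus \mU$ equals $w$, when restricted to $\mE' \cap \mU$, is a code of minimum Hamming distance at least $2t+1$ and therefore of size at most $|\mA|^{\max\{0,|\mE' \cap \mU| - 2t\}}$. Summing over the $|\mA|^{|\mE' \setminus \mU|}$ possible values of $w$ and using injectivity of $v$ on $\mC$ gives $|\mC| = |v(\mC)| \le |\mA|^{|\mE' \setminus \mU| + \max\{0, |\mE' \cap \mU| - 2t\}}$, and taking $\log_{|\mA|}$ yields the desired inequality. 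I expect the main obstacle to lie in the third step: verifying that one can confuse $x$ and $x'$ by touching only edges in $\mE' \cap \mU$ and that the budget $t$ is respected on each side of the attack. The propagation argument in step one, while elementary, also requires some care since a generic edge-cut does not split the vertex set into a clean ``before'' and ``after.''
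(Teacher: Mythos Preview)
The paper does not give its own proof of this statement; it is quoted from~\cite{RK18}. Your argument is essentially the standard one and is correct in outline, so there is nothing to ``compare'' against here.

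There is one point in the attack construction that deserves more care than you currently give it. When the adversary, starting from $x$, corrupts the edges in $A \subseteq \mE' \cap \mU$, you assert that the resulting vector of values on $\mE'$ is obtained from $v(x)$ by overwriting exactly the coordinates in $A$. This is only immediate when $\mE'$ is an \emph{antichain} with respect to $\preccurlyeq$: if some $e_1 \in A$ precedes some $e_2 \in \mE' \setminus A$, then corrupting $e_1$ will in general also alter the value carried on $e_2$ (via the network code on the path between them), so the two attacks need not produce the same vector on $\mE'$. You flag a related concern for step one but not for the attack itself.

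The fix is easy and you should spell it out: it suffices to prove the bound for antichain cuts, because any edge-cut $\mE'$ contains an antichain subset $\mE''$ that is still an edge-cut between $S$ and $T$ (iteratively discard from $\mE'$ any edge having a strict predecessor in $\mE'$), and since $|\mE'' \setminus \mU| \le |\mE' \setminus \mU|$ and $|\mE'' \cap \mU| \le |\mE' \cap \mU|$, the quantity $|\mE'' \setminus \mU| + \max\{0,|\mE'' \cap \mU|-2t\}$ is no larger than the corresponding quantity for $\mE'$. Hence the minimum in the bound is attained at antichain cuts, and for those your propagation and attack arguments go through verbatim.
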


Next, we give an example to illustrate Definition \ref{def:capacities} that also makes use of the Generalized Network Singleton Bound of Theorem \ref{sbound}. 
\begin{example}
Consider the network $\mN$ depicted in Figure \ref{fig:ad}. We will show that $$\CC_1(\mN,\mA,\mU,t)=1.$$ We choose the network code to consist of identity functions $\mF_{V_1}$ and $\mF_{V_2}$. Let $\mC$ be the $3$-times repetition code, that is,
$\mC=\{(x,x,x) \mid x \in \mA\}.$ Since at most $1$ symbol from the edges of $\mU$ is corrupted, it can easily be seen that $\mC$ is unambiguous for the channel $\Omega[\mN,\mA,\mF,S \to T,\mU,1]$. Since~$|\mC|=|\mA|$, this shows that $\CC_1(\mN,\mA,\mU,t) \ge 1$. Choosing $\mE'=\{e_1,e_2,e_3\}$ in Theorem~\ref{sbound}, we have $\CC_1(\mN,\mA,\mU,t) \le 1$, yielding the desired result.
\end{example}

\section{The Curious Case of the Diamond Network}
\label{sec:diamond}

This section is devoted to the smallest example of network that illustrates the problem we focus on in this paper. We call it the \textbf{Diamond Network}, due to its shape, and denote it by~$\mathfrak{A}_1$. The choice for the notation will become clear in Subsection~\ref{sec:families}, where we will introduce a family of networks (Family~\ref{fam:a})
of which the Diamond Network is the ``first'' member.
The Diamond Network is depicted in Figure~\ref{fig:diamond}.

\begin{figure}[htbh]
    \centering
        \centering
\begin{tikzpicture}
\tikzset{vertex/.style = {shape=circle,draw,inner sep=0pt,minimum size=1.9em}}
\tikzset{nnode/.style = {shape=circle,fill=myg,draw,inner sep=0pt,minimum
size=1.9em}}
\tikzset{edge/.style = {->,> = stealth}}
\tikzset{dedge/.style = {densely dotted,->,> = stealth}}
\tikzset{ddedge/.style = {dashed,->,> = stealth}}

\node[vertex] (S1) {$S$};

\node[shape=coordinate,right=\mynodespace of S1] (K) {};

\node[nnode,above=0.35\mynodespace of K] (V1) {$V_1$};

\node[nnode,below=0.35\mynodespace of K] (V2) {$V_2$};

\node[vertex,right=\mynodespace of K] (T) {$T$};

\draw[edge,bend left=0] (S1)  to node[sloped,fill=white, inner sep=1pt]{\small $e_1$} (V1);

\draw[edge,bend left=15] (S1) to  node[sloped,fill=white, inner sep=1pt]{\small $e_2$} (V2);


\draw[edge,bend right=15] (S1)  to node[sloped,fill=white, inner sep=1pt]{\small $e_3$} (V2);

\draw[edge,bend left=0] (V1)  to node[sloped,fill=white, inner sep=1pt]{\small $e_4$} (T);

\draw[edge,bend left=0] (V2)  to node[sloped,fill=white, inner sep=1pt]{\small $e_{5}$} (T);
\end{tikzpicture} 
    \caption{The Diamond Network $\mathfrak{A}_1$.}
    \label{fig:diamond}
\end{figure}
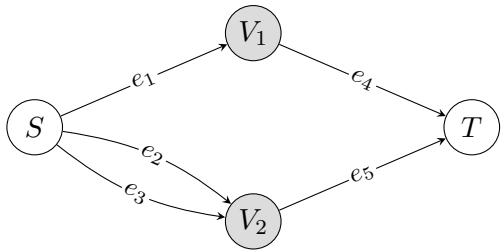

\subsection{The Capacity of the Diamond Network}

We are interested in computing the capacity of the Diamond Network $\mathfrak{A}_1$ of Figure~\ref{fig:diamond} when $\mA$ is an arbitrary alphabet,
$\mU=\{e_1,e_2,e_3\}$ is the set of vulnerable edges, and 
$t=1$.
Previously, the best known upper bound for the capacity of $\mathfrak{A}_1$ in this context was
\begin{equation} \label{boundDN}
\CC_1(\mathfrak{A}_1,\mA,\mU,1) \le 1,
\end{equation}
which follows from Theorem~\ref{sbound}. It was however shown in the preliminary work~\cite{beemer2021curious} that the upper bound
in~\eqref{boundDN} is not tight, regardless of the choice of alphabet $\mA$. 
More precisely, the following hold.

\begin{theorem}
\label{thm:diamond_cap}
For the Diamond Network $\mathfrak{A}_1$ of Figure \ref{fig:diamond}, any alphabet $\mA$, and $\mU=\{e_1,e_2,e_3\}$, we have
$$\CC_1(\mathfrak{A}_1,\mA,\mU,1) = \log_{|\mA|} \, (|\mA|-1).$$
\end{theorem}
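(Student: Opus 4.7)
The plan is to prove the two matching inequalities $\CC_1(\mathfrak{A}_1,\mA,\mU,1)\ge\log_{|\mA|}(|\mA|-1)$ and $\CC_1(\mathfrak{A}_1,\mA,\mU,1)\le\log_{|\mA|}(|\mA|-1)$ separately. Since Theorem~\ref{sbound} only yields $\CC_1(\mathfrak{A}_1,\mA,\mU,1)\le 1$, the task amounts to sharpening the Generalized Network Singleton Bound for this specific network and exhibiting a matching construction.

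For the lower bound I would fix any $a_0\in\mA$ and take the outer code $\mC=\{(a,a,a):a\in\mA\setminus\{a_0\}\}$, of size $|\mA|-1$. The network code is $\mF_{V_1}=\Id_\mA$ and $\mF_{V_2}(y_2,y_3)=y_2$ when $y_2=y_3$ and $\mF_{V_2}(y_2,y_3)=a_0$ otherwise; note that $V_2$ performs a local \textbf{decoding} step, which is why the construction falls outside of forwarding or linear network coding. A direct computation of the channel $\Omega[\mathfrak{A}_1,\mA,\mF,S\to T,\mU,1]$ shows that the fan-out of the codeword $(a,a,a)$ is $(\mA\times\{a\})\cup\{(a,a_0)\}$; as $a$ ranges over $\mA\setminus\{a_0\}$ these sets are pairwise disjoint (each coordinate alone rules out collisions), so $\mC$ is unambiguous.

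For the upper bound, fix an arbitrary network code, write $f=\mF_{V_1}$, $g=\mF_{V_2}$, and for $c=(x_1,x_2,x_3)\in\mC$ put
$$\alpha_c=f(x_1),\qquad \beta_c=g(x_2,x_3),\qquad B_c=g(x_2,\mA)\cup g(\mA,x_3).$$
Enumerating the symbols the adversary can force on $e_1$, $e_2$, $e_3$ respectively, the terminal fan-out is
$$\Omega(c)=\bigl(f(\mA)\times\{\beta_c\}\bigr)\cup\bigl(\{\alpha_c\}\times B_c\bigr).$$
Unambiguity, applied to each pair $c\ne c'$, first forces $\beta_c\ne\beta_{c'}$ from the ``column-column'' intersection, and then forces $\beta_c\notin B_{c'}$ from the ``column of $c$'' meeting the ``row of $c'$'', using that $\alpha_{c'}=f(x_1^{(c')})\in f(\mA)$ automatically. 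Now suppose for contradiction that $|\mC|=|\mA|$. Then the $\beta_c$ exhaust $\mA$, and the constraints $\beta_c\notin B_{c'}$ collapse each $B_{c'}$ to $\{\beta_{c'}\}$, meaning that $g$ is constant equal to $\beta_{c'}$ on both the row $x_3^{(c')}$ and the column $x_2^{(c')}$. Evaluating $g$ at $(x_2^{(c)},x_3^{(c')})$ via each of these two constants would give both $\beta_c$ and $\beta_{c'}$, contradicting $\beta_c\ne\beta_{c'}$. Hence $|\mC|\le|\mA|-1$.

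The main obstacle is the upper bound: the generic cut-set tools available in the paper only deliver $|\mC|\le|\mA|$, so one must exploit the asymmetry that $e_1$ is processed in isolation at $V_1$ while $e_2,e_3$ are jointly processed at $V_2$. My plan handles this by passing to the two-dimensional ``column $\cup$ row'' geometry of the terminal fan-out and converting column-row disjointness into the rigid constraint $\beta_c\notin B_{c'}$. The delicate point is noting that $\alpha_{c'}\in f(\mA)$ automatically, so the adversary can always align the first coordinate; this is what globalises the constraint over all of $\mC$ and forces the two-valued contradiction at $g(x_2^{(c)},x_3^{(c')})$.
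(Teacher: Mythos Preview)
Your proof is correct. The achievability argument matches the scheme the paper describes (repetition code, with $V_2$ emitting a reserved alarm symbol on a mismatch).

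For the upper bound, your route differs from the one the paper gives in Section~\ref{sec:upper}. There (Theorem~\ref{thm:notmet} specialised to $s=1$) the paper first uses Lemma~\ref{lem:id} to replace $\mF_1$ by the identity, then applies the First Packing Bound (Corollary~\ref{cor:ub}) to obtain $(q-1)|\mC|+\sum_{x\in\mC}|\mF_2(B^\HH_1(\pi_2(x)))|\le q^{2}$, and finally argues via a volume count that two balls $B^\HH_1(\pi_2(x))$ must overlap, forcing one summand to be at least $2$ and hence $|\mC|<q$. You instead work directly with the fan-out geometry: you never normalise $f$ (only the trivial inclusion $\alpha_{c'}\in f(\mA)$ is used), you extract the sharper pairwise constraint $\beta_c\notin B_{c'}$, and you close with a row--column rigidity argument on $g$ at the point $(x_2^{(c)},x_3^{(c')})$ rather than with a global packing inequality. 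Your approach is more elementary and self-contained for this single network; the paper's packing framework is set up so that the same machinery handles all members of Family~\ref{fam:b} uniformly.
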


An intuitive idea of why the capacity of the Diamond Network is strictly less than one can be seen by considering that information arriving at the terminal through $e_4$ is completely useless without information arriving through $e_5$. The opposite is also true. Thus, we must have some cooperation between the two different ``routes'' in order to achieve a positive capacity. Unfortunately, because $V_2$ has one more incoming than outgoing edge, the cooperation implicitly suggested by the Generalized Network Singleton Bound of Theorem~\ref{sbound} is impossible: a repetition code sent across $e_1$, $e_2$, and $e_3$ will fall short of guaranteed correction at the terminal. Cooperation is still possible, but it will come at a cost. To see achievability of Theorem \ref{thm:diamond_cap}, consider sending a repetition code across $e_1$, $e_2$, and $e_3$. Intermediate node~$V_1$ forwards its received symbol; $V_2$ forwards if the two incoming symbols match, and sends a special reserved~``alarm'' symbol if they do not. The terminal looks to see whether the alarm symbol was sent across~$e_5$. If so, it trusts $e_4$. If not, it trusts $e_5$. The (necessary) sacrifice of one alphabet symbol to be used as an alarm, or locator of the adversary, results in a rate~of $\log_{|\mA|} (|\mA|-1)$. A proof that this is the best rate possible was first presented in~\cite{beemer2021curious}, and the result is also shown in a new, more general way in Section \ref{sec:upper}.

Interestingly, when we add an additional edge to the Diamond Network, resulting in the so-called \textbf{Mirrored Diamond Network} $\mathfrak{D}_1$ of Figure~\ref{fig:mirrored}, the Generalized Network Singleton Bound of Theorem~\ref{sbound} becomes tight (with the
analogous adversarial action). More precisely, the following holds. It should be noted that the case of adding an extra incoming edge to $V_2$ of Figure \ref{fig:diamond} is covered by Corollary \ref{cor:conf}.

\begin{theorem}
For the Mirrored Diamond Network $\mathfrak{D}_1$ of Figure \ref{fig:mirrored}, any network alphabet $\mA$, and~$\mU=\{e_1,e_2,e_3,e_4\}$, we have
$$\CC_1(\mathfrak{D}_1,\mA,\mU,1) = 1.$$
\end{theorem}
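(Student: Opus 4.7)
The plan is to match the Generalized Network Singleton Bound (Theorem~\ref{sbound}) with an explicit capacity-achieving scheme. For the upper bound, I would invoke Theorem~\ref{sbound} with the edge-cut $\mE'$ between $S$ and $T$ consisting of the two vulnerable source-side edges entering $V_1$ together with the single non-vulnerable edge from $V_2$ to $T$. Then $|\mE'\setminus\mU|=1$ and $|\mE'\cap\mU|=2$, so the bound reads $1+\max\{0,2-2\}=1$ and gives $\CC_1(\mathfrak{D}_1,\mA,\mU,1)\le 1$.

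For the matching lower bound, I would fix any symbol $\star\in\mA$ and take the repetition outer code $\mC=\{(x,x,x,x)\st x\in\mA\}$, together with the network code $\mF$ in which both $V_1$ and $V_2$ implement the same rule: on inputs $(y,y')$, forward $y$ if $y=y'$ and forward $\star$ otherwise. Since $|\mC|=|\mA|$, this gives the rate $\log_{|\mA|}|\mA|=1$. The point is that in $\mathfrak{D}_1$, in contrast with $\mathfrak{A}_1$, both $V_1$ and $V_2$ have two incoming vulnerable edges, so each can locally detect whether the adversary has touched its own inputs; with $t=1$, at most one of $V_1,V_2$ is affected per round.

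The remaining step is to verify that $\mC$ is unambiguous for the channel $\Omega[\mathfrak{D}_1,\mA,\mF,S\to T,\mU,1]$. A short case split on the location of the (at most one) corrupted edge shows that the fan-out set of the codeword $(x,x,x,x)$ is contained in $\{(x,x),(\star,x),(x,\star)\}$, and disjointness of these sets for distinct $x,x'\in\mA$ is then immediate from $x\neq x'$ (including the degenerate case $x=\star$, in which the fan-out collapses to $\{(\star,\star)\}$). I do not anticipate a real obstacle here, as the argument is entirely combinatorial once the right cut is chosen for the upper bound and the right locator symbol $\star$ for the lower bound. The instructive aspect of the proof is the contrast with Theorem~\ref{thm:diamond_cap}: because the extra incoming edge at $V_1$ allows $V_1$ to also use $\star$ as a locator, the first terminal coordinate is constrained to lie in $\{x,\star\}$ rather than in all of $\mA$, which removes the collision that otherwise forces the loss of an alphabet symbol at the outer-code level in the Diamond Network.
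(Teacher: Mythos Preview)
Your proposal is correct and follows essentially the same approach as the paper: both use Theorem~\ref{sbound} for the upper bound and the four-times repetition code together with the ``forward-if-match, alarm-otherwise'' rule at $V_1$ and $V_2$ for the lower bound. The paper phrases achievability via an explicit decoding rule at the terminal (trust the non-alarm edge; if both are alarm, decode to the alarm symbol), whereas you verify unambiguity directly by computing the fan-out sets; these are equivalent presentations of the same scheme.
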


By Theorem~\ref{sbound}, the previous result may be shown by simply exhibiting an explicit scheme achieving the upper bound of 1. We send a repetition code across $e_1$, $e_2$, $e_3$, and $e_4$. Each of~$V_1$ and~$V_2$ forwards if the two incoming symbols match, and sends a reserved alarm symbol if they do not. The terminal trusts the edge without the alarm symbol; if both send the alarm symbol, the terminal decodes to that symbol. Notice that we again make use of an alarm symbol, but that this symbol could also be sent as easily as any other alphabet symbol. This marks a striking difference from the alarm symbol used in the achievability of the Diamond Network capacity, which is instead sacrificed.

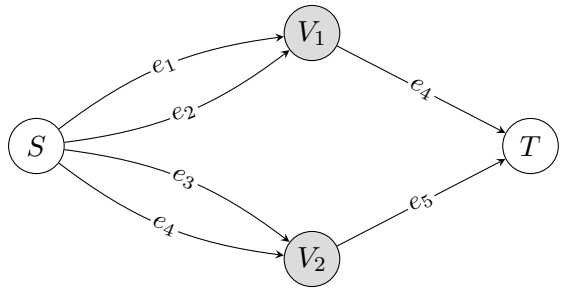
\begin{figure}[htbh]
    \centering
        \centering
\begin{tikzpicture}
\tikzset{vertex/.style = {shape=circle,draw,inner sep=0pt,minimum size=1.9em}}
\tikzset{nnode/.style = {shape=circle,fill=myg,draw,inner sep=0pt,minimum
size=1.9em}}
\tikzset{edge/.style = {->,> = stealth}}
\tikzset{dedge/.style = {densely dotted,->,> = stealth}}
\tikzset{ddedge/.style = {dashed,->,> = stealth}}

\node[vertex] (S1) {$S$};

\node[shape=coordinate,right=\mynodespace*1.3 of S1] (K) {};

\node[nnode,above=0.45\mynodespace of K] (V1) {$V_1$};

\node[nnode,below=0.45\mynodespace of K] (V2) {$V_2$};

\node[vertex,right=\mynodespace of K] (T) {$T$};

\draw[edge,bend left=15] (S1)  to node[sloped,fill=white, inner sep=1pt]{\small $e_1$} (V1);

\draw[edge,bend right=15] (S1)  to node[sloped,fill=white, inner sep=1pt]{\small $e_2$} (V1);

\draw[edge,bend left=15] (S1) to  node[sloped,fill=white, inner sep=1pt]{\small $e_3$} (V2);


\draw[edge,bend right=15] (S1)  to node[sloped,fill=white, inner sep=1pt]{\small $e_4$} (V2);

\draw[edge,bend left=0] (V1)  to node[sloped,fill=white, inner sep=1pt]{\small $e_4$} (T);

\draw[edge,bend left=0] (V2)  to node[sloped,fill=white, inner sep=1pt]{\small $e_{5}$} (T);
\end{tikzpicture} 
    \caption{The Mirrored Diamond Network $\mathfrak{D}_1$.}
    \label{fig:mirrored}
\end{figure}

The example of the Mirrored Diamond Network of Figure \ref{fig:mirrored} indicates that the \textit{bottleneck} vertex $V_2$ of the Diamond Network of Figure \ref{fig:diamond} does not tell the whole story about whether the Network Singleton Bound is achievable. Instead, something more subtle is occurring with the manner in which information ``streams'' are split within the network. One may naturally wonder about the impact of adding additional edges to $V_1$ and/or $V_2$; we leave an exploration of a variety of such families to later sections. We next look at the scenario of random noise to build further intuition.

\subsection{Information Theory Intuition}
\label{sec:info}

Partial information-theoretic intuition for the fact that the Generalized Network Singleton Bound of Theorem \ref{sbound} cannot be achieved in some cases when an adversary is restricted to a particular portion of the network can be seen even in the case of random noise (rather than adversarial). In this subsection, we will briefly consider the standard information-theoretic definition of capacity. That is, capacity will be defined as the supremum of rates for which an asymptotically vanishing decoding error probability (as opposed to zero error) is achievable.
We do not include all the fundamental information theory definitions, instead referring the reader to e.g.~\cite{coverthomas} for more details.

\begin{example}
\label{ex:info_thy}
Consider a unicast network with a single terminal and one intermediate node as illustrated in Figure \ref{fig:inf-thy-1a}. Suppose that the first three edges of the network experience random, binary symmetric noise. That is, in standard information theory notation and terminology, each dashed edge indicates a Binary Symmetric Channel with transition probability $p$, denoted BSC($p$), while the two edges from the intermediate node to the terminal are noiseless.
In the sequel, we also let
$$H(p)=-p\log_2(p)-(1-p)\log_2(1-p)$$ denote the entropy function associated with the transition probability~$p$.

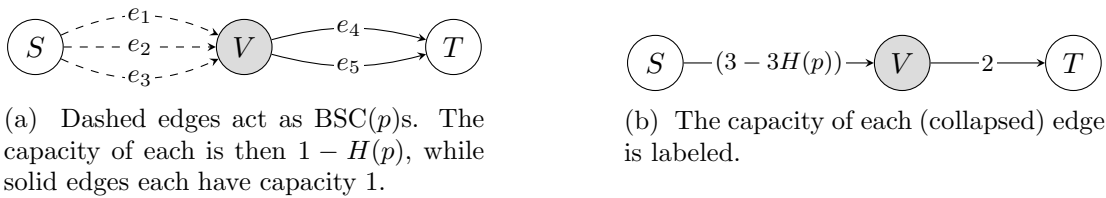
\begin{figure}[hbtp]
    \centering
\begin{subfigure}{.4\textwidth}
\centering
\begin{tikzpicture}
\tikzset{vertex/.style = {shape=circle,draw,inner sep=0pt,minimum size=1.9em}}
\tikzset{nnode/.style = {shape=circle,fill=myg,draw,inner sep=0pt,minimum
size=1.9em}}
\tikzset{edge/.style = {->,> = stealth}}
\tikzset{dedge/.style = {densely dotted,->,> = stealth}}
\tikzset{ddedge/.style = {dashed,->,> = stealth}}

\node[vertex] (S1) {$S$};

\node[nnode,right=\mynodespace*0.8 of S1] (V1) {$V$};

\node[vertex,right=\mynodespace*0.6 of K] (T) {$T$};

\draw[ddedge,bend left=0] (S1)  to node[sloped,fill=white, inner sep=1pt]{\small $e_2$} (V1);

\draw[ddedge,bend left=25] (S1) to  node[sloped,fill=white, inner sep=1pt]{\small $e_1$} (V1);


\draw[ddedge,bend right=25] (S1)  to node[sloped,fill=white, inner sep=1pt]{\small $e_3$} (V1);

\draw[edge,bend left=15] (V1)  to node[sloped,fill=white, inner sep=1pt]{\small $e_4$} (T);

\draw[edge,bend right=15] (V1)  to node[sloped,fill=white, inner sep=1pt]{\small $e_{5}$} (T);
\end{tikzpicture} 
\caption{\label{fig:inf-thy-1a} Dashed edges act as BSC($p$)s. The capacity of each is then $1-H(p)$, while solid edges each have capacity~$1$.}
\end{subfigure}
\hspace{0.1\textwidth}
\begin{subfigure}{.4\textwidth}
\centering
\begin{tikzpicture}
\tikzset{vertex/.style = {shape=circle,draw,inner sep=0pt,minimum size=1.9em}}
\tikzset{nnode/.style = {shape=circle,fill=myg,draw,inner sep=0pt,minimum
size=1.9em}}
\tikzset{edge/.style = {->,> = stealth}}
\tikzset{dedge/.style = {densely dotted,->,> = stealth}}
\tikzset{ddedge/.style = {dashed,->,> = stealth}}

\node[vertex] (S1) {$S$};

\node[nnode,right=\mynodespace of S1] (V1) {$V$};

\node[vertex,right=\mynodespace*0.6 of K] (T) {$T$};

\draw[edge,bend left=0] (S1)  to node[sloped,fill=white, inner sep=1pt]{\footnotesize $(3-3H(p))$} (V1);

\draw[edge,bend left=0] (V1)  to node[sloped,fill=white, inner sep=1pt]{\footnotesize $2$} (T);
\end{tikzpicture}
\caption{\label{fig:inf-thy-1b} The capacity of each (collapsed) edge is labeled.}
\end{subfigure}
\caption{\label{fig:inf-thy-1} A network with multi-edges, along with its simplified version with collapsed multi-edges labeled with their capacities.}
\end{figure}

 Each of the multi-edge-sets $\{e_1,e_2,e_3\}$ and $\{e_4,e_5\}$ can then be considered as collapsed to a single edge of capacity $3(1-H(p))$ and $2$, respectively (see Figure \ref{fig:inf-thy-1b}). Recall that the Max-Flow Min-Cut Theorem (see e.g. \cite[Theorem 15.2]{elgamalkim}) 
 states that the capacity of the network is equal to the minimum over all edge-cuts of the network of the sum of the capacities of the edges in the cut. Thus, the capacity of our network is equal to $\min\{2,3-3H(p)\}$.

Next, we split the intermediate node into two nodes, as in Figure \ref{fig:inf-thy-2}. Again making use of the Max-Flow Min-Cut Theorem, the new network's capacity is equal to $$\min\{1,1-H(p)\}+\min\{1,2(1-H(p))\}=1-H(p)+\min\{1,2-2H(p)\}.$$ One can easily determine that this value is upper bounded by the capacity of the network in Figure~\ref{fig:inf-thy-1} for all $0\leq p\leq 0.5$. Furthermore, when $0< H(p) < 0.5$ (i.e., when $p$ is positive and less than approximately $0.11$), this bound is strict. In other words, splitting the intermediate node reduces capacity for an interval of small transition probabilities. 

\begin{figure}[hbtp]
    \centering
\begin{tikzpicture}
\tikzset{vertex/.style = {shape=circle,draw,inner sep=0pt,minimum size=1.9em}}
\tikzset{nnode/.style = {shape=circle,fill=myg,draw,inner sep=0pt,minimum
size=1.9em}}
\tikzset{edge/.style = {->,> = stealth}}
\tikzset{dedge/.style = {densely dotted,->,> = stealth}}
\tikzset{ddedge/.style = {dashed,->,> = stealth}}

\node[vertex] (S1) {$S$};

\node[shape=coordinate,right=\mynodespace of S1] (K) {};

\node[nnode,above=0.25\mynodespace of K] (V1) {$V_1$};

\node[nnode,below=0.25\mynodespace of K] (V2) {$V_2$};

\node[vertex,right=\mynodespace of K] (T) {$T$};

\draw[ddedge,bend left=0] (S1)  to node[sloped,fill=white, inner sep=1pt]{\small $e_1$} (V1);

\draw[ddedge,bend left=15] (S1) to  node[sloped,fill=white, inner sep=1pt]{\small $e_2$} (V2);


\draw[ddedge,bend right=15] (S1)  to node[sloped,fill=white, inner sep=1pt]{\small $e_3$} (V2);

\draw[edge,bend left=0] (V1)  to node[sloped,fill=white, inner sep=1pt]{\small $e_4$} (T);

\draw[edge,bend left=0] (V2)  to node[sloped,fill=white, inner sep=1pt]{\small $e_{5}$} (T);
\end{tikzpicture} 
\caption{\label{fig:inf-thy-2} Vertex $V$ of Figure \ref{fig:inf-thy-1a} is split into two intermediate nodes.}
\end{figure}
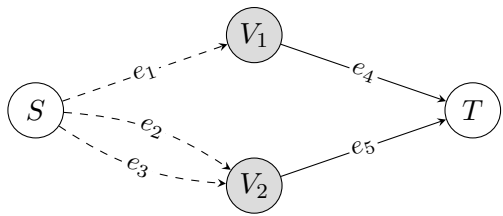

As a first natural generalization, suppose the network of Figure \ref{fig:inf-thy-1} had $n+1$ edges from source to the intermediate node, $n$ edges from intermediate node to terminal, and that the~(extension of the) network shown in Figure~\ref{fig:inf-thy-2} peeled off just a single edge from each layer, resulting in~$\degin(V_1)=\degout(V_1)=1$, $\degin(V_2)=n-1$, and $\degout(V_2)=n-2$. Then the capacity gap between the first and second networks would be non-zero for all $0<H(p)<1/(n-1)$. This gap is illustrated for $n \in \{3,5,7\}$ in Figure~\ref{fig:inf-thy-3}. Denote by ``Scenario 1'' the original network for the given value of~$n$, and by ``Scenario 2'' the corresponding network with split intermediate node. In Section~\ref{sec:families} we return to this generalization with adversarial as opposed to random noise; there, it is termed Family~\ref{ex:s}.

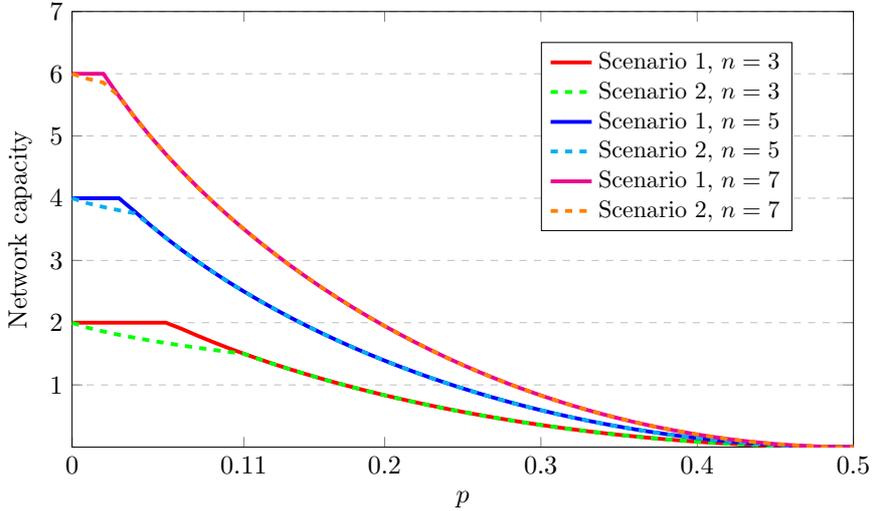
\begin{figure}[h]
\centering
\begin{tikzpicture}[scale=0.9]
\begin{axis}[legend style={at={(0.6,0.93)}, anchor = north west},
legend cell align={left},
width=13cm,height=8cm,
xlabel={$p$},
ylabel={Network capacity},
xmin=0, xmax=0.5,
ymin=0, ymax=7,
xtick={0,0.11,0.2,0.3, 0.4, 0.5},
ytick={1,2,3,4,5,6,7},
ymajorgrids=true,
grid style=dashed,
every axis plot/.append style={thick}, yticklabel style={/pgf/number format/fixed}
]
\addplot[color=red,style={ultra thick}]
coordinates { ( 0.0, 2 )
( 0.01, 2.0 )
( 0.02, 2.0 )
( 0.03, 2.0 )
( 0.04, 2.0 )
( 0.05, 2.0 )
( 0.06, 2.0 )
( 0.07, 1.902229047 )
( 0.08, 1.793462429 )
( 0.09, 1.690590549 )
( 0.1, 1.593013219 )
( 0.11, 1.500252126 )
( 0.12, 1.411917404 )
( 0.13, 1.327685445 )
( 0.14, 1.247283565 )
( 0.15, 1.170479086 )
( 0.16, 1.097071336 )
( 0.17, 1.026885664 )
( 0.18, 0.959768863 )
( 0.19, 0.89558562 )
( 0.2, 0.834215715 )
( 0.21, 0.77555178 )
( 0.22, 0.719497491 )
( 0.23, 0.665966089 )
( 0.24, 0.614879162 )
( 0.25, 0.566165627 )
( 0.26, 0.519760883 )
( 0.27, 0.475606091 )
( 0.28, 0.433647568 )
( 0.29, 0.393836261 )
( 0.3, 0.356127302 )
( 0.31, 0.320479625 )
( 0.32, 0.286855627 )
( 0.33, 0.255220882 )
( 0.34, 0.225543885 )
( 0.35000000000000003, 0.197795834 )
( 0.36, 0.171950432 )
( 0.37, 0.147983722 )
( 0.38, 0.125873933 )
( 0.39, 0.105601354 )
( 0.4, 0.087148217 )
( 0.41000000000000003, 0.070498594 )
( 0.42, 0.055638315 )
( 0.43, 0.042554888 )
( 0.44, 0.031237436 )
( 0.45, 0.021676638 )
( 0.46, 0.013864684 )
( 0.47000000000000003, 0.007795234 )
( 0.48, 0.003463392 )
( 0.49, 0.000865675 )
( 0.5, 0.0 )
};
\addplot[color=green,dashed,style={ultra thick}]
coordinates {( 0.0, 2 )
( 0.01, 1.919206864 )
( 0.02, 1.858559457 )
( 0.03, 1.805608142 )
( 0.04, 1.757707811 )
( 0.05, 1.713603043 )
( 0.06, 1.672555081 )
( 0.07, 1.634076349 )
( 0.08, 1.59782081 )
( 0.09, 1.563530183 )
( 0.1, 1.531004406 )
( 0.11, 1.500084042 )
( 0.12, 1.411917404 )
( 0.13, 1.327685445 )
( 0.14, 1.247283565 )
( 0.15, 1.170479086 )
( 0.16, 1.097071336 )
( 0.17, 1.026885664 )
( 0.18, 0.959768863 )
( 0.19, 0.89558562 )
( 0.2, 0.834215715 )
( 0.21, 0.77555178 )
( 0.22, 0.719497491 )
( 0.23, 0.665966089 )
( 0.24, 0.614879162 )
( 0.25, 0.566165627 )
( 0.26, 0.519760883 )
( 0.27, 0.475606091 )
( 0.28, 0.433647568 )
( 0.29, 0.393836261 )
( 0.3, 0.356127302 )
( 0.31, 0.320479625 )
( 0.32, 0.286855627 )
( 0.33, 0.255220882 )
( 0.34, 0.225543885 )
( 0.35000000000000003, 0.197795834 )
( 0.36, 0.171950432 )
( 0.37, 0.147983722 )
( 0.38, 0.125873933 )
( 0.39, 0.105601354 )
( 0.4, 0.087148217 )
( 0.41000000000000003, 0.070498594 )
( 0.42, 0.055638315 )
( 0.43, 0.042554888 )
( 0.44, 0.031237436 )
( 0.45, 0.021676638 )
( 0.46, 0.013864684 )
( 0.47000000000000003, 0.007795234 )
( 0.48, 0.003463392 )
( 0.49, 0.000865675 )
( 0.5, 0.0 )
};

\addplot[color=blue,style={ultra thick}]
coordinates {
( 0.0, 4 )
( 0.01, 4.0 )
( 0.02, 4.0 )
( 0.03, 4.0 )
( 0.04, 3.788539055 )
( 0.05, 3.568015214 )
( 0.06, 3.362775404 )
( 0.07, 3.170381745 )
( 0.08, 2.989104049 )
( 0.09, 2.817650915 )
( 0.1, 2.655022032 )
( 0.11, 2.500420209 )
( 0.12, 2.353195674 )
( 0.13, 2.212809075 )
( 0.14, 2.078805942 )
( 0.15, 1.950798476 )
( 0.16, 1.828452227 )
( 0.17, 1.711476106 )
( 0.18, 1.599614771 )
( 0.19, 1.492642701 )
( 0.2, 1.390359526 )
( 0.21, 1.2925863 )
( 0.22, 1.199162485 )
( 0.23, 1.109943482 )
( 0.24, 1.024798603 )
( 0.25, 0.943609378 )
( 0.26, 0.866268138 )
( 0.27, 0.792676819 )
( 0.28, 0.722745947 )
( 0.29, 0.656393768 )
( 0.3, 0.593545504 )
( 0.31, 0.534132708 )
( 0.32, 0.478092711 )
( 0.33, 0.425368136 )
( 0.34, 0.375906475 )
( 0.35000000000000003, 0.329659723 )
( 0.36, 0.286584054 )
( 0.37, 0.246639537 )
( 0.38, 0.209789889 )
( 0.39, 0.176002257 )
( 0.4, 0.145247028 )
( 0.41000000000000003, 0.117497656 )
( 0.42, 0.092730525 )
( 0.43, 0.070924814 )
( 0.44, 0.052062394 )
( 0.45, 0.03612773 )
( 0.46, 0.023107806 )
( 0.47000000000000003, 0.012992057 )
( 0.48, 0.00577232 )
( 0.49, 0.001442791 )
( 0.5, 0.0 )
};

\addplot[color=cyan,dashed,style={ultra thick}]
coordinates {( 0.0, 4 )
( 0.01, 3.919206864 )
( 0.02, 3.858559457 )
( 0.03, 3.805608142 )
( 0.04, 3.757707811 )
( 0.05, 3.568015214 )
( 0.06, 3.362775404 )
( 0.07, 3.170381745 )
( 0.08, 2.989104049 )
( 0.09, 2.817650915 )
( 0.1, 2.655022032 )
( 0.11, 2.500420209 )
( 0.12, 2.353195674 )
( 0.13, 2.212809075 )
( 0.14, 2.078805942 )
( 0.15, 1.950798476 )
( 0.16, 1.828452227 )
( 0.17, 1.711476106 )
( 0.18, 1.599614771 )
( 0.19, 1.492642701 )
( 0.2, 1.390359526 )
( 0.21, 1.2925863 )
( 0.22, 1.199162485 )
( 0.23, 1.109943482 )
( 0.24, 1.024798603 )
( 0.25, 0.943609378 )
( 0.26, 0.866268138 )
( 0.27, 0.792676819 )
( 0.28, 0.722745947 )
( 0.29, 0.656393768 )
( 0.3, 0.593545504 )
( 0.31, 0.534132708 )
( 0.32, 0.478092711 )
( 0.33, 0.425368136 )
( 0.34, 0.375906475 )
( 0.35000000000000003, 0.329659723 )
( 0.36, 0.286584054 )
( 0.37, 0.246639537 )
( 0.38, 0.209789889 )
( 0.39, 0.176002257 )
( 0.4, 0.145247028 )
( 0.41000000000000003, 0.117497656 )
( 0.42, 0.092730525 )
( 0.43, 0.070924814 )
( 0.44, 0.052062394 )
( 0.45, 0.03612773 )
( 0.46, 0.023107806 )
( 0.47000000000000003, 0.012992057 )
( 0.48, 0.00577232 )
( 0.49, 0.001442791 )
( 0.5, 0.0 )
};

\addplot[color=magenta,style={ultra thick}]
coordinates {( 0.0, 6 )
( 0.01, 6.0 )
( 0.02, 6.0 )
( 0.03, 5.639256995 )
( 0.04, 5.303954676 )
( 0.05, 4.9952213 )
( 0.06, 4.707885566 )
( 0.07, 4.438534444 )
( 0.08, 4.184745669 )
( 0.09, 3.944711281 )
( 0.1, 3.717030845 )
( 0.11, 3.500588293 )
( 0.12, 3.294473943 )
( 0.13, 3.097932705 )
( 0.14, 2.910328319 )
( 0.15, 2.731117867 )
( 0.16, 2.559833118 )
( 0.17, 2.396066549 )
( 0.18, 2.23946068 )
( 0.19, 2.089699781 )
( 0.2, 1.946503336 )
( 0.21, 1.80962082 )
( 0.22, 1.678827479 )
( 0.23, 1.553920875 )
( 0.24, 1.434718044 )
( 0.25, 1.321053129 )
( 0.26, 1.212775393 )
( 0.27, 1.109747547 )
( 0.28, 1.011844326 )
( 0.29, 0.918951276 )
( 0.3, 0.830963705 )
( 0.31, 0.747785791 )
( 0.32, 0.669329796 )
( 0.33, 0.595515391 )
( 0.34, 0.526269065 )
( 0.35000000000000003, 0.461523612 )
( 0.36, 0.401217675 )
( 0.37, 0.345295351 )
( 0.38, 0.293705844 )
( 0.39, 0.24640316 )
( 0.4, 0.203345839 )
( 0.41000000000000003, 0.164496719 )
( 0.42, 0.129822735 )
( 0.43, 0.09929474 )
( 0.44, 0.072887351 )
( 0.45, 0.050578822 )
( 0.46, 0.032350928 )
( 0.47000000000000003, 0.01818888 )
( 0.48, 0.008081248 )
( 0.49, 0.002019908 )
( 0.5, 0.0 )
};

\addplot[color=orange,dashed,style={ultra thick}]
coordinates {( 0.0, 6 )
( 0.01, 5.919206864 )
( 0.02, 5.858559457 )
( 0.03, 5.639256995 )
( 0.04, 5.303954676 )
( 0.05, 4.9952213 )
( 0.06, 4.707885566 )
( 0.07, 4.438534444 )
( 0.08, 4.184745669 )
( 0.09, 3.944711281 )
( 0.1, 3.717030845 )
( 0.11, 3.500588293 )
( 0.12, 3.294473943 )
( 0.13, 3.097932705 )
( 0.14, 2.910328319 )
( 0.15, 2.731117867 )
( 0.16, 2.559833118 )
( 0.17, 2.396066549 )
( 0.18, 2.23946068 )
( 0.19, 2.089699781 )
( 0.2, 1.946503336 )
( 0.21, 1.80962082 )
( 0.22, 1.678827479 )
( 0.23, 1.553920875 )
( 0.24, 1.434718044 )
( 0.25, 1.321053129 )
( 0.26, 1.212775393 )
( 0.27, 1.109747547 )
( 0.28, 1.011844326 )
( 0.29, 0.918951276 )
( 0.3, 0.830963705 )
( 0.31, 0.747785791 )
( 0.32, 0.669329796 )
( 0.33, 0.595515391 )
( 0.34, 0.526269065 )
( 0.35000000000000003, 0.461523612 )
( 0.36, 0.401217675 )
( 0.37, 0.345295351 )
( 0.38, 0.293705844 )
( 0.39, 0.24640316 )
( 0.4, 0.203345839 )
( 0.41000000000000003, 0.164496719 )
( 0.42, 0.129822735 )
( 0.43, 0.09929474 )
( 0.44, 0.072887351 )
( 0.45, 0.050578822 )
( 0.46, 0.032350928 )
( 0.47000000000000003, 0.01818888 )
( 0.48, 0.008081248 )
( 0.49, 0.002019908 )
( 0.5, 0.0 )
};
\legend{\small{Scenario 1, $n=3$}, \small{Scenario 2, $n=3$},\small{Scenario 1, $n=5$},\small{Scenario 2, $n=5$},\small{Scenario 1, $n=7$},\small{Scenario 2, $n=7$}}
\end{axis}
\end{tikzpicture}
\caption{\label{fig:inf-thy-3} Capacity gaps between the first generalized networks for $n \in \{3,5,7\}$.}
\end{figure}

A second possible generalization is as follows: suppose the network of Figure \ref{fig:inf-thy-1} had $3n$ edges from source to intermediate node, and $2n$ edges from intermediate node to terminal, and that the network of Figure \ref{fig:inf-thy-2} peeled off $n$ edges from each layer so that~$\degin(V_1)=\degout(V_1)=n$, $\degin(V_2)=2n$, and $\degout(V_2)=n$. Interestingly, the capacity gap between the first and second networks would be non-zero for all $0<H(p)<0.5$, regardless of the value of $n$. This is illustrated for~$n \in \{3,5,7\}$ in Figure~\ref{fig:inf-thy-4}. Denote by ``Scenario 1'' the original network for the given value of~$n$, and by ``Scenario 2'' the corresponding network with split intermediate node. In Section~\ref{sec:families} we return to this generalization with adversarial as opposed to random noise; there, it is termed Family~\ref{fam:a}.
\end{example}

\begin{figure}[h!]
\centering
\begin{tikzpicture}[scale=0.9]
\begin{axis}[legend style={at={(0.6,0.93)}, anchor = north west},
legend cell align={left},
width=13cm,height=8cm,
xlabel={$p$},
ylabel={Network capacity},
xmin=0, xmax=0.5,
ymin=0, ymax=16,
xtick={0,0.11,0.2,0.3, 0.4, 0.5},
ytick={2,4,6,8,10,12,14,16},
ymajorgrids=true,
grid style=dashed,
every axis plot/.append style={thick}, yticklabel style={/pgf/number format/fixed}
]
\addplot[color=red,style={ultra thick}]
coordinates { ( 0.0, 6 )
( 0.01, 6.0 )
( 0.02, 6.0 )
( 0.03, 6.0 )
( 0.04, 6.0 )
( 0.05, 6.0 )
( 0.06, 6.0 )
( 0.07, 5.706687142 )
( 0.08, 5.380387288 )
( 0.09, 5.071771646 )
( 0.1, 4.779039658 )
( 0.11, 4.500756377 )
( 0.12, 4.235752212 )
( 0.13, 3.983056335 )
( 0.14, 3.741850695 )
( 0.15, 3.511437258 )
( 0.16, 3.291214008 )
( 0.17, 3.080656991 )
( 0.18, 2.879306588 )
( 0.19, 2.686756861 )
( 0.2, 2.502647146 )
( 0.21, 2.326655341 )
( 0.22, 2.158492473 )
( 0.23, 1.997898268 )
( 0.24, 1.844637486 )
( 0.25, 1.69849688 )
( 0.26, 1.559282648 )
( 0.27, 1.426818274 )
( 0.28, 1.300942705 )
( 0.29, 1.181508783 )
( 0.3, 1.068381907 )
( 0.31, 0.961438875 )
( 0.32, 0.86056688 )
( 0.33, 0.765662645 )
( 0.34, 0.676631655 )
( 0.35000000000000003, 0.593387502 )
( 0.36, 0.515851297 )
( 0.37, 0.443951166 )
( 0.38, 0.3776218 )
( 0.39, 0.316804063 )
( 0.4, 0.26144465 )
( 0.41000000000000003, 0.211495781 )
( 0.42, 0.166914945 )
( 0.43, 0.127664665 )
( 0.44, 0.093712309 )
( 0.45, 0.065029914 )
( 0.46, 0.041594051 )
( 0.47000000000000003, 0.023385703 )
( 0.48, 0.010390176 )
( 0.49, 0.002597024 )
( 0.5, 0.0 )
};
\addplot[color=green,dashed,style={ultra thick}]
coordinates {
( 0.0, 6 )
( 0.01, 5.757620592 )
( 0.02, 5.575678372 )
( 0.03, 5.416824427 )
( 0.04, 5.273123433 )
( 0.05, 5.140809129 )
( 0.06, 5.017665243 )
( 0.07, 4.902229047 )
( 0.08, 4.793462429 )
( 0.09, 4.690590549 )
( 0.1, 4.593013219 )
( 0.11, 4.500252126 )
( 0.12, 4.235752212 )
( 0.13, 3.983056335 )
( 0.14, 3.741850695 )
( 0.15, 3.511437258 )
( 0.16, 3.291214008 )
( 0.17, 3.080656991 )
( 0.18, 2.879306588 )
( 0.19, 2.686756861 )
( 0.2, 2.502647146 )
( 0.21, 2.326655341 )
( 0.22, 2.158492473 )
( 0.23, 1.997898268 )
( 0.24, 1.844637486 )
( 0.25, 1.69849688 )
( 0.26, 1.559282648 )
( 0.27, 1.426818274 )
( 0.28, 1.300942705 )
( 0.29, 1.181508783 )
( 0.3, 1.068381907 )
( 0.31, 0.961438875 )
( 0.32, 0.86056688 )
( 0.33, 0.765662645 )
( 0.34, 0.676631655 )
( 0.35000000000000003, 0.593387502 )
( 0.36, 0.515851297 )
( 0.37, 0.443951166 )
( 0.38, 0.3776218 )
( 0.39, 0.316804063 )
( 0.4, 0.26144465 )
( 0.41000000000000003, 0.211495781 )
( 0.42, 0.166914945 )
( 0.43, 0.127664665 )
( 0.44, 0.093712309 )
( 0.45, 0.065029914 )
( 0.46, 0.041594051 )
( 0.47000000000000003, 0.023385703 )
( 0.48, 0.010390176 )
( 0.49, 0.002597024 )
( 0.5, 0.0 )
};

\addplot[color=blue,style={ultra thick}]
coordinates {

( 0.0, 10 )
( 0.01, 10.0 )
( 0.02, 10.0 )
( 0.03, 10.0 )
( 0.04, 10.0 )
( 0.05, 10.0 )
( 0.06, 10.0 )
( 0.07, 9.511145236 )
( 0.08, 8.967312147 )
( 0.09, 8.452952744 )
( 0.1, 7.965066096 )
( 0.11, 7.501260628 )
( 0.12, 7.059587021 )
( 0.13, 6.638427225 )
( 0.14, 6.236417825 )
( 0.15, 5.852395429 )
( 0.16, 5.48535668 )
( 0.17, 5.134428319 )
( 0.18, 4.798844314 )
( 0.19, 4.477928102 )
( 0.2, 4.171078577 )
( 0.21, 3.877758901 )
( 0.22, 3.597487456 )
( 0.23, 3.329830447 )
( 0.24, 3.074395809 )
( 0.25, 2.830828133 )
( 0.26, 2.598804413 )
( 0.27, 2.378030457 )
( 0.28, 2.168237842 )
( 0.29, 1.969181305 )
( 0.3, 1.780636512 )
( 0.31, 1.602398124 )
( 0.32, 1.434278134 )
( 0.33, 1.276104408 )
( 0.34, 1.127719425 )
( 0.35000000000000003, 0.988979169 )
( 0.36, 0.859752161 )
( 0.37, 0.73991861 )
( 0.38, 0.629369667 )
( 0.39, 0.528006772 )
( 0.4, 0.435741083 )
( 0.41000000000000003, 0.352492969 )
( 0.42, 0.278191574 )
( 0.43, 0.212774442 )
( 0.44, 0.156187182 )
( 0.45, 0.10838319 )
( 0.46, 0.069323418 )
( 0.47000000000000003, 0.038976171 )
( 0.48, 0.01731696 )
( 0.49, 0.004328374 )
( 0.5, 0.0 )
};

\addplot[color=cyan,dashed,style={ultra thick}]
coordinates {
( 0.0, 10 )
( 0.01, 9.596034321 )
( 0.02, 9.292797287 )
( 0.03, 9.028040711 )
( 0.04, 8.788539055 )
( 0.05, 8.568015214 )
( 0.06, 8.362775404 )
( 0.07, 8.170381745 )
( 0.08, 7.989104049 )
( 0.09, 7.817650915 )
( 0.1, 7.655022032 )
( 0.11, 7.500420209 )
( 0.12, 7.059587021 )
( 0.13, 6.638427225 )
( 0.14, 6.236417825 )
( 0.15, 5.852395429 )
( 0.16, 5.48535668 )
( 0.17, 5.134428319 )
( 0.18, 4.798844314 )
( 0.19, 4.477928102 )
( 0.2, 4.171078577 )
( 0.21, 3.877758901 )
( 0.22, 3.597487456 )
( 0.23, 3.329830447 )
( 0.24, 3.074395809 )
( 0.25, 2.830828133 )
( 0.26, 2.598804413 )
( 0.27, 2.378030457 )
( 0.28, 2.168237842 )
( 0.29, 1.969181305 )
( 0.3, 1.780636512 )
( 0.31, 1.602398124 )
( 0.32, 1.434278134 )
( 0.33, 1.276104408 )
( 0.34, 1.127719425 )
( 0.35000000000000003, 0.988979169 )
( 0.36, 0.859752161 )
( 0.37, 0.73991861 )
( 0.38, 0.629369667 )
( 0.39, 0.528006772 )
( 0.4, 0.435741083 )
( 0.41000000000000003, 0.352492969 )
( 0.42, 0.278191574 )
( 0.43, 0.212774442 )
( 0.44, 0.156187182 )
( 0.45, 0.10838319 )
( 0.46, 0.069323418 )
( 0.47000000000000003, 0.038976171 )
( 0.48, 0.01731696 )
( 0.49, 0.004328374 )
( 0.5, 0.0 )
};

\addplot[color=magenta,style={ultra thick}]
coordinates {
( 0.0, 14 )
( 0.01, 14.0 )
( 0.02, 14.0 )
( 0.03, 14.0 )
( 0.04, 14.0 )
( 0.05, 14.0 )
( 0.06, 14.0 )
( 0.07, 13.31560333 )
( 0.08, 12.55423701 )
( 0.09, 11.83413384 )
( 0.1, 11.15109253 )
( 0.11, 10.50176488 )
( 0.12, 9.883421829 )
( 0.13, 9.293798114 )
( 0.14, 8.730984956 )
( 0.15, 8.193353601 )
( 0.16, 7.679499353 )
( 0.17, 7.188199646 )
( 0.18, 6.71838204 )
( 0.19, 6.269099342 )
( 0.2, 5.839510007 )
( 0.21, 5.428862461 )
( 0.22, 5.036482438 )
( 0.23, 4.661762626 )
( 0.24, 4.304154133 )
( 0.25, 3.963159386 )
( 0.26, 3.638326178 )
( 0.27, 3.32924264 )
( 0.28, 3.035532978 )
( 0.29, 2.756853827 )
( 0.3, 2.492891116 )
( 0.31, 2.243357374 )
( 0.32, 2.007989388 )
( 0.33, 1.786546172 )
( 0.34, 1.578807196 )
( 0.35000000000000003, 1.384570837 )
( 0.36, 1.203653026 )
( 0.37, 1.035886054 )
( 0.38, 0.881117533 )
( 0.39, 0.739209481 )
( 0.4, 0.610037516 )
( 0.41000000000000003, 0.493490156 )
( 0.42, 0.389468204 )
( 0.43, 0.297884219 )
( 0.44, 0.218662054 )
( 0.45, 0.151736466 )
( 0.46, 0.097052785 )
( 0.47000000000000003, 0.05456664 )
( 0.48, 0.024243744 )
( 0.49, 0.006059723 )
( 0.5, 0.0 )
};

\addplot[color=orange,dashed,style={ultra thick}]
coordinates {
( 0.0, 14 )
( 0.01, 13.43444805 )
( 0.02, 13.0099162 )
( 0.03, 12.639257 )
( 0.04, 12.30395468 )
( 0.05, 11.9952213 )
( 0.06, 11.70788557 )
( 0.07, 11.43853444 )
( 0.08, 11.18474567 )
( 0.09, 10.94471128 )
( 0.1, 10.71703084 )
( 0.11, 10.50058829 )
( 0.12, 9.883421829 )
( 0.13, 9.293798114 )
( 0.14, 8.730984956 )
( 0.15, 8.193353601 )
( 0.16, 7.679499353 )
( 0.17, 7.188199646 )
( 0.18, 6.71838204 )
( 0.19, 6.269099342 )
( 0.2, 5.839510007 )
( 0.21, 5.428862461 )
( 0.22, 5.036482438 )
( 0.23, 4.661762626 )
( 0.24, 4.304154133 )
( 0.25, 3.963159386 )
( 0.26, 3.638326178 )
( 0.27, 3.32924264 )
( 0.28, 3.035532978 )
( 0.29, 2.756853827 )
( 0.3, 2.492891116 )
( 0.31, 2.243357374 )
( 0.32, 2.007989388 )
( 0.33, 1.786546172 )
( 0.34, 1.578807196 )
( 0.35000000000000003, 1.384570837 )
( 0.36, 1.203653026 )
( 0.37, 1.035886054 )
( 0.38, 0.881117533 )
( 0.39, 0.739209481 )
( 0.4, 0.610037516 )
( 0.41000000000000003, 0.493490156 )
( 0.42, 0.389468204 )
( 0.43, 0.297884219 )
( 0.44, 0.218662054 )
( 0.45, 0.151736466 )
( 0.46, 0.097052785 )
( 0.47000000000000003, 0.05456664 )
( 0.48, 0.024243744 )
( 0.49, 0.006059723 )
( 0.5, 0.0 )
};
\legend{\small{Scenario 1, $n=3$}, \small{Scenario 2, $n=3$},\small{Scenario 1, $n=5$},\small{Scenario 2, $n=5$},\small{Scenario 1, $n=7$},\small{Scenario 2, $n=7$}}
\end{axis}
\end{tikzpicture}
\caption{\label{fig:inf-thy-4} Capacity gaps between the second generalized networks for $n \in \{3,5,7\}$.}
\end{figure}
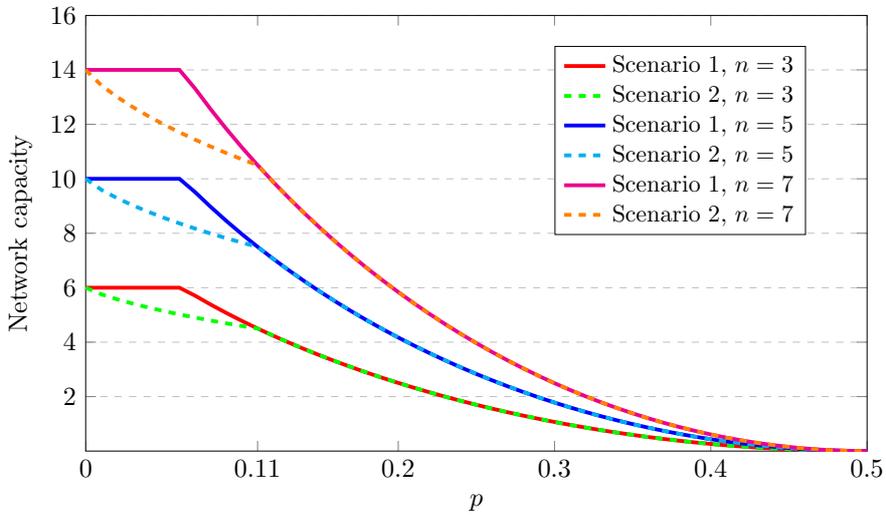

In both generalizations, we see that the intermediate node split in Example \ref{ex:info_thy} prevents edges $\{e_{1},e_{2},e_{3}\}$ from cooperating to send messages when the BSC transition probability is small: in the random noise scenario, the difference is due to a lower-capacity mixed-vulnerability edge-cut being present in the split network. This gap is mirrored by the gap in the 1-shot capacity we observe in adversarial networks with restricted adversaries, though the reason for the gap differs. In the case of a restricted adversary on these two networks who may corrupt up to one vulnerable edge (with no random noise), the Generalized Network Singleton Bound of Theorem~\ref{sbound} is achievable for the network in Scenario 1, while it is not achievable for the network in Scenario~2 (see Theorem~\ref{thm:diamond_cap}). For higher adversarial power, as for higher transition probability in the case of random noise, the two have matching capacities. In the case of adversarial noise, the difference in capacities for limited adversarial power is due to something beyond edge-cut differences, which are already baked into the Generalized Network Singleton Bound. Capacities with restricted adversaries are no longer additive, and so we must preserve the split structure by looking beyond both the Max-Flow Min-Cut Theorem and the Generalized Network Singleton Bound in order to establish improved upper bounds on capacity.

\section{Networks with Two and Three Levels}
\label{sec:net-2-and-3}

In this section we focus on families of networks having 2 or 3 levels, a property which is defined formally below. Throughout this section, we assume basic graph theory knowledge; see e.g.~\cite{west2001introduction}. We show that one can upper bound the capacity of a special class of 3-level networks by the capacity of a corresponding 2-level network. We then define five families of 2-level networks, which will be key players of this paper and whose capacities will be computed or estimated in later sections. 

In Section \ref{sec:double-cut-bd} we will show how the results of this section can be ``ported'' to arbitrary networks using a general method that 
describes the information transfer from one edge-set to another; see in particular the Double-Cut-Set Bound of Theorem~\ref{thm:dcsb}.
All of this will also allow us to 
compute the capacity of the network
that opened the paper.

\subsection{$m$-Level Networks}

\begin{definition}
\label{def:n-level}
Let $\mN=(\mV,\mE,S,\mathbf{T})$ be a network and let $V,V' \in \mV$. We say that $V'$ \textbf{covers}~$V$ if
$(V,V') \in \mE$. 
We call $\mN$ an \textbf{$m$-level} network 
if $\mV$ can be partitioned into $m+1$ sets $\mV_{0},\ldots,\mV_{m}$ such that $\mV_{0}=\{S\}$, $\mV_{m}=\mathbf{T}$, and each node in $\mV_{k}$,  for $k\in \{1,...,m-1\}$,
is only covered by elements of
$\mV_{k+1}$ and only covers elements of $\mV_{k-1}$. We call $\mV_k$ the \textbf{$k$-th layer} of $\mN$.
\end{definition}

Notice that in an $m$-level network, any path from $S$ to any $T\in \mathbf{T}$ is of length $m$. 
Moreover, the value of $m$ and the layers $\mV_k$, for $k \in \{0,...,m\}$,
in Definition~\ref{def:n-level} are uniquely determined by the network~$\mN$.
Many of the results of this paper rely on  particular classes of 2-level and~3-level networks, which we now define.

\begin{definition}
\label{def:special_3level}
\label{def:special_2level}
A 2-level network is 
\textbf{simple}
if it has a single terminal.
A 3-level network is 
\textbf{simple} 
if it has 
a single terminal,
each intermediate node at
distance 1 from the source has in-degree equal to~1, and each intermediate node at distance~1 from the terminal has out-degree equal to~1. \end{definition}

In order to denote 2- and 3-level networks more compactly, we will utilize (simplified) adjacency matrices of the bipartite subgraphs induced by subsequent node layers. First we present the most general notation for an $m$-level network, then discuss the particular cases of~2- and~3-level networks. 

\begin{notation} \label{notmtx}
Let $\mN_m=(\mV,\mE,S,\bfT)$ be an $m$-level network and let $\mV_0,...,\mV_m$ be as in Definition~\ref{def:n-level} (the subscript in $\mN_m$ has the sole function of stressing the number of levels).
Fix an enumeration of the elements of each $\mV_k$, $k \in \{0,...,m\}$.
We denote by $\smash{M^{m,k}}$ the matrix representing the graph induced by the nodes in layers $k-1$ and $k$ of $\mN_m$, for $k\in \{1,...,m\}$.
Specifically, $\smash{M^{m,k}}$ has dimensions $\smash{|\mV_{k-1}|\times |\mV_{k}|}$, and $\smash{M^{m,k}_{ij}=\ell}$ if and only if there are $\ell$ edges from node $i$ of $\mV_{k-1}$ to node $j$ of $\mV_{k}$. We can then denote the network by $\smash{(M^{m,1},M^{m,2},\ldots,M^{m,m})}$. It is easy to check that $\mN_m$ is uniquely determined by this representation.

In a 2-level network, we have two adjacency matrices $\smash{M^{2,1}}$ and $\smash{M^{2,2}}$, and in a 3-level network we have three adjacency matrices $\smash{M^{3,1}}$ and $\smash{M^{3,2}}$, and $\smash{M^{3,3}}$. Notice that in a simple~3-level network, $\smash{M^{3,1}}$ and $\smash{M^{3,3}}$ will always be all-ones vectors, and so we may drop them from the notation. With a slight abuse of notation, we will denote $\smash{M^{2,2}}$ as a row vector in a simple~2-level network (instead of as a column vector).
\end{notation}

We give two examples to illustrate the previous notation.

\begin{example} \label{ex:simplified}
Consider the Diamond Network of Section \ref{sec:diamond}; see Figure~\ref{fig:diamond}. Following Notation~\ref{notmtx}, we may represent this network as $\smash{([1,2],[1,1]^\top)}$. Because the network is simple, we may abuse notation and simplify this to $\smash{([1,2],[1,1])}$. Similarly, the Mirrored Diamond Network (see Figure~\ref{fig:mirrored}) may be represented as $\smash{([2,2],[1,1])}$.
\end{example}

\begin{example} \label{ex:Hexagon}
Consider the 3-level network shown in Figure \ref{fig:Hexagon}. Following Notation~\ref{notmtx}, we may represent this network as
\[ 
\left(\begin{bmatrix} 
1 & 1 & 1 & 1 & 1 & 1
\end{bmatrix},
\begin{bmatrix} 
1 & 1 & 0 & 0  \\ 
1 & 1 & 0 & 0  \\ 
1 & 1 & 0 & 0  \\ 
1 & 1 & 0 & 0 \\ 
0 & 0 & 1 & 1  \\ 
0 & 0 & 1 & 1
\end{bmatrix},
\begin{bmatrix} 
1 \\ 1 \\ 1 \\ 1  
\end{bmatrix}\right)
.\]
More simply, the simple 3-level network may be represented using only the center matrix.
\end{example}

\begin{figure}[htbp]
\centering
\scalebox{0.90}{
\begin{tikzpicture}

\tikzset{vertex/.style = {shape=circle,draw,inner sep=0pt,minimum size=1.9em}}
\tikzset{nnode/.style = {shape=circle,fill=myg,draw,inner sep=0pt,minimum
size=1.9em}}
\tikzset{edge/.style = {->,> = stealth}}
\tikzset{dedge/.style = {densely dotted,->,> = stealth}}
\tikzset{ddedge/.style = {dashed,->,> = stealth}}

\node[vertex] (S1) {$S$};

\node[shape=coordinate,right=\mynodespace of S1] (K) {};

\node[nnode,above=1.2\mynodespace of K] (V1) {$V_1$};

\node[nnode,above=0.7\mynodespace of K] (V2) {$V_2$};

\node[nnode,above=0.3\mynodespace of K] (V3) {$V_3$};

\node[nnode,below=0\mynodespace of K] (V4) {$V_4$};

\node[nnode,below=0.4\mynodespace of K] (V5) {$V_5$};

\node[nnode,below=0.8\mynodespace of K] (V6) {$V_6$};

\node[vertex,right=3.5\mynodespace of S1] (T) {$T$};

\node[nnode,right=1.3\mynodespace of V1 ] (V7) {$V_7$};

\node[nnode,right=1.3\mynodespace of V3] (V8) {$V_8$};

\node[nnode,right=1.3\mynodespace of V4] (V9) {$V_9$};

\node[nnode,right=1.3\mynodespace of V6] (V10) {$V_{10}$};

\draw[ddedge,bend left=0] (S1)  to node[sloped,fill=white, inner sep=0pt]{} (V1);

\draw[ddedge,bend right=0] (S1)  to node[sloped,fill=white, inner sep=0pt]{} (V2);

\draw[ddedge,bend left=0] (S1) to  node[sloped,fill=white, inner sep=0pt]{} (V3);

\draw[ddedge,bend right=0] (S1) to  node[sloped,fill=white, inner sep=0pt]{} (V4);

\draw[ddedge,bend right=0] (S1) to  node[sloped,fill=white, inner sep=0pt]{} (V5);

\draw[ddedge,bend right=0] (S1) to  node[sloped,fill=white, inner sep=0pt]{} (V6);

\draw[edge,bend left=0] (V1) to  node{} (V7);

\draw[edge,bend left=0] (V1) to  node{} (V8);

\draw[edge,bend left=0] (V2) to  node{} (V7);

\draw[edge,bend left=0] (V2) to  node{} (V8);

\draw[edge,bend left=0] (V3) to  node{} (V7);

\draw[edge,bend left=0] (V3) to  node{} (V8);

\draw[edge,bend left=0] (V4) to  node{} (V7);

\draw[edge,bend left=0] (V4) to  node{} (V8);

\draw[edge,bend right=0] (V5) to  node{} (V9);

\draw[edge,bend right=0] (V5) to  node{} (V10);

\draw[edge,bend right=0] (V6) to  node{} (V9);

\draw[edge,bend right=0] (V6) to  node{} (V10);

\draw[edge,bend left=0] (V7) to  node{} (T);

\draw[edge,bend left=0] (V8) to  node{} (T);

\draw[edge,bend left=0] (V9) to  node{} (T);

\draw[edge,bend left=0] (V10) to  node{} (T);

\end{tikzpicture} 

}
\caption{Network for Examples~\ref{ex:Hexagon} and \ref{ex:vulne}. \label{fig:Hexagon}}
\end{figure}
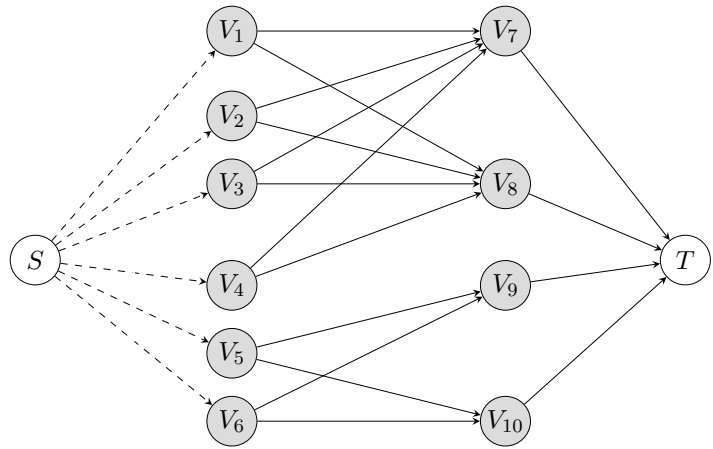

\subsection{Reduction from 3- to 2-Level Networks}
\label{sec:3to2reduc}

In this subsection we describe a procedure to obtain a simple 2-level network from a simple~3-level network. In Section \ref{sec:double-cut-bd} we will show that, under certain assumptions, the capacity of any network can be upper bounded by the capacity of a simple 3-level network constructed from it. Using the procedure described in this subsection, we will be able to upper bound
the capacity of an arbitrary network with that of an \textit{induced} simple 2-level network (obtaining sharp bounds in some cases).

Let $\mN_3$ be a simple 3-level network
defined by matrix $\smash{M^{3,2}}$, along with all-ones matrices~$\smash{M^{3,1}}$~and~$\smash{M^{3,3}}$ (see Notation~\ref{notmtx}). 
We construct a simple
2-level network $\mN_2$, defined via~$\smash{M^{2,1}}$ and~$\smash{M^{2,2}}$ as follows. Consider the bipartite graph~$\smash{G^{3,2}}$ corresponding to adjacency matrix~$\smash{M^{3,2}}$; if~$\smash{G^{3,2}}$ has $\ell$ connected components, then let~$\smash{M^{2,1}}$ and~$\smash{M^{2,2}}$ both have dimensions~$1\times \ell$ (where we are considering the simplified representation for a simple 2-level network; see Example~\ref{ex:simplified}). Let~$\smash{M^{2,1}_{1i}=a}$ if and only if the $i$th connected component of~$\smash{G^{3,2}}$ has~$a$ vertices in~$\mV_{1}$, and let~$\smash{M^{2,2}_{1i}=b}$ if and only if the $i$th connected component of~$\smash{G^{3,2}}$ has~$b$ vertices in~$\mV_{2}$.
Observe that the sum of the entries of~$\smash{M^{2,1}}$ is equal to the sum of the entries of~$\smash{M^{3,1}}$, and similarly with~$\smash{M^{2,2}}$ and~$\smash{M^{3,3}}$.

\begin{definition}\label{def:associated}
We call the network $\mN_2$ constructed above 
the 2-level network \textbf{associated} with the 3-level network $\mN_3$.
\end{definition}

\begin{example} 
Consider the network of Figure \ref{fig:Hexagon}. The corresponding 2-level network is depicted in Figure~\ref{fig:Hexagon-2level}.
\end{example}

\begin{figure}[htbp]
\centering

\begin{tikzpicture}
\tikzset{vertex/.style = {shape=circle,draw,inner sep=0pt,minimum size=1.9em}}
\tikzset{nnode/.style = {shape=circle,fill=myg,draw,inner sep=0pt,minimum
size=1.9em}}
\tikzset{edge/.style = {->,> = stealth}}
\tikzset{dedge/.style = {densely dotted,->,> = stealth}}
\tikzset{ddedge/.style = {dashed,->,> = stealth}}

\node[vertex] (S1) {$S$};

\node[shape=coordinate,right=\mynodespace of S1] (K) {};

\node[nnode,above=0.5\mynodespace of K] (V1) {$V_1$};
\node[nnode,below=0.5\mynodespace of K] (V2) {$V_2$};

\node[vertex,right=2\mynodespace of S1] (T) {$T$};

\draw[ddedge,bend left=10] (S1)  to node[sloped,fill=white, inner sep=0pt]{$e_2$} (V1);
\draw[ddedge,bend right=10] (S1)  to node[sloped,fill=white, inner sep=0pt]{$e_3$} (V1);
\draw[ddedge,bend left=30] (S1) to  node[sloped,fill=white, inner sep=0pt]{$e_1$} (V1);
\draw[ddedge,bend right=30] (S1) to  node[sloped,fill=white, inner sep=0pt]{$e_4$} (V1);

\draw[edge,bend right=15] (V1) to  node[sloped,fill=white, inner sep=0pt]{$e_8$} (T);
\draw[edge,bend left=15] (V1) to  node[sloped,fill=white, inner sep=0pt]{$e_7$} (T);

\draw[ddedge,bend right=15] (S1) to  node[sloped,fill=white, inner sep=0pt]{$e_6$} (V2);
\draw[ddedge,bend left=15] (S1) to  node[sloped, fill=white, inner sep=0pt]{$e_5$} (V2);

\draw[edge,bend right=15] (V2) to  node[sloped,fill=white, inner sep=0pt]{$e_{10}$} (T);
\draw[edge,bend left=15] (V2) to  node[sloped,fill=white, inner sep=0pt]{$e_9$} (T);
\end{tikzpicture} 

\caption{{{The simple 2-level network corresponding to the (simple, 3-level) network 
of Example~\ref{ex:Hexagon}.}}}\label{fig:Hexagon-2level}
\end{figure}
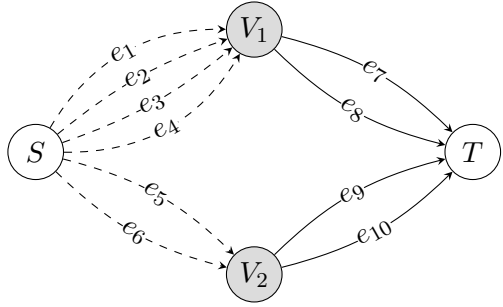

While this deterministic process results in a unique simple 2-level network given a simple~3-level network, there may be multiple 3-level networks that result in the same 2-level network. 
This however does not affect the following statement, which gives an upper bound for the capacity of a 3-level network 
in terms of the capacity of the corresponding 2-level network, when the vulnerable edges are in both cases those directly connected with the source. While the argument 
extends to more generalized choices of the vulnerable edges, in this paper we concentrate on this simplified scenario for ease of exposition.


\begin{theorem}
\label{thm:channel}
Let $\mN_3$ be a simple 3-level network, and let $\mN_2$ be the simple 2-level network associated to it.
Let $\mU_3$ and $\mU_2$ be the set of edges directly connected to the sources of $\mN_3$ and~$\mN_2$, respectively. Then for all network alphabets $\mA$ and for all $t \ge 0$ we have 
\[\CC_1(\mN_3,\mA,\mU_3,t)\leq \CC_1(\mN_2,\mA,\mU_2,t).\]
\end{theorem}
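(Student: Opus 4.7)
The plan is to prove the inequality by producing, from any network code $\mF^{(3)}$ for $(\mN_3,\mA)$, a simulating network code $\mF^{(2)}$ for $(\mN_2,\mA)$ inducing (up to a bijective reindexing of edges) the \emph{same} adversarial channel. This is where the hypothesis that $\mN_3$ is simple is crucial: each $\mV_1$-node has in-degree $1$ and each $\mV_2$-node has out-degree $1$, so within each connected component $C$ of the bipartite graph $G^{3,2}$, the full propagation of packets through $C$ is determined by the $a$ source-side inputs to $C$. The intermediate node $W_C$ of $\mN_2$ thus has all the power needed to replay this propagation in a single function application.

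First I would set up two bijections. Let $\sigma$ map the source-outgoing edges of $\mN_3$ to those of $\mN_2$, sending the $a$ edges that enter the $\mV_1$-nodes of each component $C$ bijectively to the $a$ edges entering the associated intermediate node $W_C$; this is well-defined since the entry sums of $M^{3,1}$ and $M^{2,1}$ agree. Analogously, let $\tau$ match the terminal-incoming edges of $\mN_3$ with those of $\mN_2$, grouping the $b$ edges leaving the $\mV_2$-nodes of $C$ with the $b$ edges leaving $W_C$. By Notation~\ref{not:ext} I am free to fix the total orders on the two edge sets so that $\sigma$ and $\tau$ are order-preserving, which is required to compare the network code functions consistently.

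Next, given $\mF^{(3)}$, I would define $\mF^{(2)}_{W_C}: \mA^a \to \mA^b$ by the evident internal simulation: interpret the $a$ input coordinates as the packets arriving at the $\mV_1$-nodes of $C$; apply each in-degree-$1$ function $\mF^{(3)}_{V_i}$ to obtain the packets on the $M^{3,2}$-edges of $C$; then apply each out-degree-$1$ function $\mF^{(3)}_{V'_j}$ at the $\mV_2$-nodes of $C$ to produce the $b$ outgoing packets, ordered via $\tau$. I would then verify that the fan-out sets of the induced channels coincide: any corruption of at most $t$ edges in $\mU_3$ corresponds bijectively under $\sigma$ to a corruption of at most $t$ edges in $\mU_2$, and by construction $\mF^{(2)}_{W_C}$ computes exactly the $\mN_3$-propagation inside $C$, so the two networks emit the same terminal vector under corresponding corruptions.

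With equal fan-out sets, unambiguity of any $\mC \subseteq \mA^{\degout(S)}$ for the channel $\Omega[\mN_3,\mA,\mF^{(3)},S\to T,\mU_3,t]$ transfers immediately to unambiguity of $\sigma(\mC)$ for $\Omega[\mN_2,\mA,\mF^{(2)},S\to T,\mU_2,t]$, of the same cardinality. Taking $\log_{|\mA|}$ and invoking Definition~\ref{def:capacities} yields the claimed inequality. The only genuine subtlety is coherence of the edge orderings prescribed by Notation~\ref{not:ext}, since the network code formalism of Definition~\ref{def:nc} is order-sensitive; this is pure bookkeeping rather than a real obstacle, and once those orders are fixed the argument reduces to a straightforward simulation check.
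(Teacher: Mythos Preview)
Your proposal is correct and follows essentially the same approach as the paper: build a network code for $\mN_2$ by having each intermediate node $W_C$ simulate, via composition, the full propagation through the corresponding connected component of $G^{3,2}$, then observe that the fan-out sets agree and hence unambiguity of any outer code transfers. The paper's proof spells out the composition with explicit $V^3_{ij}$, $V^3_{ijk}$ indexing rather than your bijections $\sigma,\tau$, but the underlying argument is identical.
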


\begin{example} \label{ex:vulne}
Consider the network of Figure \ref{fig:Hexagon} and
the corresponding 2-level network in Figure~\ref{fig:Hexagon-2level}.
Suppose that the vulnerable edges in both networks are those 
directly connected to the source,
and in both cases we allow up to $t$ corrupted edges.
Then Theorem~\ref{thm:channel} implies that the capacity of the network
of Figure \ref{fig:Hexagon}
is upper bounded by the capacity of the network of Figure \ref{fig:Hexagon-2level}.
\end{example}

\begin{proof}[Proof of Theorem~\ref{thm:channel}]
Let $\mC_3$ be an outer code and $\mF_3$ be a network code for $(\mN_3,\mA)$ such that~$\mC_3$ is unambiguous for the channel $\Omega[\mN_3,\mA,\mF_3,S \to T,\mU_3,t]$. 

Let $\smash{M^{3,2}}$ be the matrix defining $\smash{\mN_3}$, and let $\smash{G^{3,2}}$ denote the bipartite graph with (simplified) adjacency matrix $\smash{M^{3,2}}$. Let $\smash{V^3_{ij}}$ denote the $j$th node in the right part of the $i$th connected component of~$\smash{G^{3,2}}$, and let $\smash{\mF_{V^3_{ij}}}$ denote the function at $\smash{V^3_{ij}}$ defined by $\smash{\mF_3}$. Let the neighborhood of $\smash{V^3_{ij}}$ in~$\smash{G^{3,2}}$ contain the (ordered) set of vertices \[\smash{V^3_{ij1},\, V^3_{ij2}, \, \ldots,\, V^3_{ij\degin(V^3_{ij})}},\] where $\smash{\degin(V^3_{ij})}$ is the in-degree of $\smash{V^3_{ij}}$. Then, for each $\smash{1\leq k\leq \degin(V^3_{ij})}$, denote the network code function at $\smash{V^3_{ijk}}$ by $\smash{\mF_{V^3_{ijk}}}$. 
Notice that every $\smash{\mF_{V^3_{ijk}}}$ is a function with domain $\mA$ and codomain~$\mA^{\degout(V^3_{ijk})}$, while each function  $\smash{\mF_{V^3_{ij}}}$ has domain $\mA^{\degin(V^3_{ij})}$ and codomain $\mA$.
Note that every node in the left part of~$\smash{G^{3,2}}$ has a label $V^{3}_{ijk}$ for some $i$ and $j$ due to assumption~\ref{prnG} of Definition~\ref{def:network}. Each such node can have multiple labels~$\smash{V^3_{ijk}}$ and~$\smash{V^3_{ij'k'}}$ where $(j,k)\neq (j',k')$; of course, we stipulate that $$\mF_{V^3_{ijk}}=\mF_{V^3_{ij'k'}}.$$ 
We claim there exists $\mF_2$ such that $\mC_3$ is also unambiguous for $\Omega[\mN_2,\mA,\mF_2,S \to T,\mU_2,t]$. Indeed, define $\mF_2$ for each intermediate node $V_i$ in $\mN_2$ that corresponds to connected component~$i$ of~$G^{3,2}$ as an appropriate composition of functions at nodes in $\mV_1$ and $\mV_2$ of the 3-level network. More technically, we define $\mF_{V_i}$ as follows (here the product symbols denote the Cartesian product): 
\begin{align*}
    \mF_{V_i} : \mA^{\degin(V_i)} &\to \mA^{\degout(V_i)},\\
    x&\mapsto \prod_{j=1}^{\degout(V_i)}
    \mF_{V^3_{ij}}\left(\prod_{k=1}^{\degin(V^3_{ij})} \mF_{V^3_{ijk}}(x_{ijk})\vert_{V_{ij}^3}\right),
\end{align*}
where $x_{ijk}$ is the coordinate of the vector $x$ corresponding to node $\smash{V^{3}_{ijk}}$ in the left part of the~$i$th connected component of $\smash{G^{3,2}}$, and  $\smash{\mF_{V^3_{ijk}}(x_{ijk})\vert_{V_{ij}^3}}$ is the restriction of $\smash{\mF_{V^3_{ijk}}(x_{ijk})}$ to the coordinate corresponding to $\smash{V_{ij}^3}$.

We claim that the fan-out set of any $x\in \mC_3$ over the channel $\Omega[\mN_2,\mA,\mF_2,S \to T,\mU_2,t]$ is exactly equal to the fan-out set of $x$ over the channel $\Omega[\mN_3,\mA,\mF_3,S \to T,\mU_3,t]$. This follows directly from the definitions of $\mF_2$ and $\mF_3$, and the fact that both networks are corrupted in up to $t$ positions from their first layers.
Suppose then, by way of contradiction, that $\mC_3$ is not unambiguous for $\Omega[\mN_2,\mA,\mF_2,S \to T,\mU_2,t]$.  That is, there exist $x,x'\in \mC_3$ such that~$x\neq x'$ but the intersection of the fan-out sets of $x$ and $x'$ is nonempty. Then $\mC_3$ was not unambiguous for $\Omega[\mN_3,\mA,\mF_3,S \to T,\mU_3,t]$ to begin with. We conclude that $\mC_3$ is unambiguous for~$\Omega[\mN_2,\mA,\mF_2,S \to T,\mU_2,t]$.
\end{proof}

The proof above contains rather heavy notation and terminology. We therefore illustrate it with an example.

\begin{figure}[htbp]
\centering
\scalebox{0.90}{
\begin{tikzpicture}

\tikzset{vertex/.style = {shape=circle,draw,inner sep=0pt,minimum size=1.9em}}
\tikzset{nnode/.style = {shape=circle,fill=myg,draw,inner sep=0pt,minimum
size=1.9em}}
\tikzset{edge/.style = {->,> = stealth}}
\tikzset{dedge/.style = {densely dotted,->,> = stealth}}
\tikzset{ddedge/.style = {dashed,->,> = stealth}}

\node[vertex] (S1) {$S$};

\node[shape=coordinate,right=\mynodespace of S1] (K) {};

\node[nnode,above=1.2\mynodespace of K] (V1) {$V_{111}^3$};

\node[nnode,above=0.7\mynodespace of K] (V2) {$V_{112}^3$};

\node[nnode,above=0.3\mynodespace of K] (V3) {$V_{113}^3$};

\node[nnode,below=0\mynodespace of K] (V4) {$V_{114}^3$};

\node[nnode,below=0.4\mynodespace of K] (V5) {$V_{211}^3$};

\node[nnode,below=0.8\mynodespace of K] (V6) {$V_{212}^3$};

\node[vertex,right=3.5\mynodespace of S1] (T) {$T$};

\node[nnode,right=1.3\mynodespace of V1 ] (V7) {$V_{11}^3$};

\node[nnode,right=1.3\mynodespace of V3] (V8) {$V_{12}^3$};

\node[nnode,right=1.3\mynodespace of V4] (V9) {$V_{21}^3$};

\node[nnode,right=1.3\mynodespace of V6] (V10) {$V_{22}^3$};

\draw[ddedge,bend left=0] (S1)  to node[sloped,fill=white, inner sep=1pt]{$x_{111}$} (V1);

\draw[ddedge,bend right=0] (S1)  to node[sloped,fill=white, inner sep=1pt]{$x_{112}$} (V2);

\draw[ddedge,bend left=0] (S1) to  node[sloped,fill=white, inner sep=1pt]{$x_{113}$} (V3);

\draw[ddedge,bend right=0] (S1) to  node[sloped,fill=white, inner sep=1pt]{$x_{114}$} (V4);

\draw[ddedge,bend right=0] (S1) to  node[sloped,fill=white, inner sep=1pt]{$x_{211}$} (V5);

\draw[ddedge,bend right=0] (S1) to  node[sloped,fill=white, inner sep=1pt]{$x_{212}$} (V6);

\draw[edge,bend left=0] (V1) to  node{} (V7);

\draw[edge,bend left=0] (V1) to  node{} (V8);

\draw[edge,bend left=0] (V2) to  node{} (V7);

\draw[edge,bend left=0] (V2) to  node{} (V8);

\draw[edge,bend left=0] (V3) to  node{} (V7);

\draw[edge,bend left=0] (V3) to  node{} (V8);

\draw[edge,bend left=0] (V4) to  node{} (V7);

\draw[edge,bend left=0] (V4) to  node{} (V8);

\draw[edge,bend right=0] (V5) to  node{} (V9);

\draw[edge,bend right=0] (V5) to  node{} (V10);

\draw[edge,bend right=0] (V6) to  node{} (V9);

\draw[edge,bend right=0] (V6) to  node{} (V10);

\draw[edge,bend left=0] (V7) to  node{} (T);

\draw[edge,bend left=0] (V8) to  node{} (T);

\draw[edge,bend left=0] (V9) to  node{} (T);

\draw[edge,bend left=0] (V10) to  node{} (T);

\end{tikzpicture} 

}
\caption{Network for Example \ref{ex:hex}. \label{fig:Hex}}
\end{figure}

\begin{example}
\label{ex:hex}
Consider the labeling of vertices in Figure \ref{fig:Hex} of Network $\mN_3$. Suppose the capacity is achieved by a network code $$\{\mF_{V_{111}^3},
\, \mF_{V_{112}^3},
\, \mF_{V_{113}^3},
\, \mF_{V_{114}^3},
\, \mF_{V_{211}^3},
\, \mF_{V_{211}^3},
\, \mF_{V_{11}^3},
\, \mF_{V_{12}^3},
\, \mF_{V_{21}^3},
\, \mF_{V_{22}^3}\}$$ for $(\mN_3,\mA)$ and by an outer code $\mC_3\subseteq \mA^6$ unambiguous for the channel~$\Omega[\mN_3,\mA,\mF_3,S \to T,\mU_3,t]$. Let $x=(x_{111},x_{112},x_{113},x_{114},x_{211},x_{212}) \in \mC_3$ and consider the scheme in Figure \ref{fig:Hex}. The way functions $\mF_{V_1}$ and $\mF_{V_2}$ are defined in the proof of Theorem~\ref{thm:channel} gives that the alphabet symbols
\begin{gather*} 
\mF_{V_{11}^3}\big(\mF_{V_{111}^3}(x_{111})\vert_{V_{11}^3},\mF_{V_{112}^3}(x_{112})\vert_{V_{11}^3},\mF_{V_{113}^3}(x_{113})\vert_{V_{11}^3},\mF_{V_{114}^3}(x_{114})\vert_{V_{11}^3}\big), \\ 
\mF_{V_{12}^3}\big(\mF_{V_{111}^3}(x_{111})\vert_{V_{12}^3},\mF_{V_{112}^3}(x_{112})\vert_{V_{12}^3},\mF_{V_{113}^3}(x_{113})\vert_{V_{12}^3},\mF_{V_{114}^3}(x_{114})\vert_{V_{12}^3}\big), \\ 
\mF_{V_{21}^3}\big(\mF_{V_{211}^3}(x_{211})\vert_{V_{21}^3},\mF_{V_{212}^3}(x_{212})\vert_{V_{21}^3}\big), \mbox{ and } \\ 
\mF_{V_{22}^3}\big(\mF_{V_{211}^3}(x_{211})\vert_{V_{22}^3},\mF_{V_{212}^3}(x_{212})\vert_{V_{22}^3}\big)
\end{gather*}
are carried over the edges $e_7$, $e_8$, $e_9$ and $e_{10}$ respectively in Figure \ref{fig:Hexagon-2level}. Observe that the the fan-out set of any $x\in \mC_3$ over the channel $\Omega[\mN_2,\mA,\mF_2,S \to T,\mU_2,t]$ is exactly equal to the fan-out set of $x$ over the channel $\Omega[\mN_3,\mA,\mF_3,S \to T,\mU_3,t]$, as desired.
\end{example}

\subsection{Some Families of Simple 2-Level Networks}
\label{sec:families}

In this section, we introduce five families of simple 2-level networks. Thanks to Theorem~\ref{thm:channel},
any upper bound for the capacities of these translates into an upper bound for the capacities of a 3-level networks associated with them; see Definition~\ref{def:associated}.
The five families of networks introduced in this subsection should be regarded as the ``building blocks'' of the theory developed in this paper,
since in Section~\ref{sec:double-cut-bd} 
we will argue how to use them to obtain upper bounds for the capacities of larger networks.

We focus our attention on the scenario where the adversary acts on the edges directly connected to the source $S$, which we denote by $\mU_S$ throughout this section.
The families we introduce will be studied in detail in later sections, but are collected here for preparation and ease of reference. Each family is parametrized by a positive integer (denoted by $t$ or $s$ for reasons that will become clear below).

\begin{family}
\label{fam:a}
Define the simple 2-level networks
$$\mathfrak{A}_t=([t,2t],[t,t]), \quad t \ge 1.$$ 
Note that they reduce to
the Diamond Network of Section~\ref{sec:diamond} for $t=1$. The Generalized Network Singleton Bound of Theorem~\ref{sbound} reads
$\CC_1(\mathfrak{A}_t,\mA,\mU_S,t) \le t$
for any alphabet $\mA$. Results related to this family can be found in Theorem \ref{thm:meta} and Proposition \ref{prop:atleasta}.
\end{family}

\begin{family}
\label{ex:s} \label{fam:b}
Define the simple 2-level networks $$\mathfrak{B}_s=([1,s+1],[1,s]), \quad s \ge 1.$$
The case where $s=1$ yields the Diamond Network of Section~\ref{sec:diamond}. The Generalized Network Singleton Bound of Theorem~\ref{sbound} for $t=1$
reads
$\CC_1(\mathfrak{B}_s,\mA,\mU_S,1) \le s$
for any alphabet $\mA$. Note that for this family we will always take $t=1$, which explains our choice of using a different index,~$s$, for the family members. Results related to this family can be found in Theorem \ref{thm:notmet} and Corollary \ref{cor:sbs}.
\end{family}

\begin{family}
\label{ex:u} \label{fam:c}
Define the simple 2-level networks $$\mathfrak{C}_t=([t,t+1],[t,t]), \quad t \ge 2.$$ 
The case $t=1$ is covered in $\mathfrak{A}_1$ of Family \ref{fam:a} and thus formally excluded here for a reason we will explain in Remark \ref{rem:exclude}. The Generalized Network Singleton Bound of Theorem~\ref{sbound} 
reads
$\CC_1(\mathfrak{C}_t,\mA,\mU_S,t) \le 1$
for any alphabet $\mA$. Our result related to this family can be found in Theorem \ref{thm:metc}.
\end{family}

\begin{family}
\label{fam:d}
Define the simple 2-level networks $$\mathfrak{D}_t=([2t,2t],[1,1]), \quad t \ge 1.$$
The case where $t=1$ yields the Mirrored Diamond Network of Section~\ref{sec:diamond}. The Generalized Network Singleton Bound of Theorem~\ref{sbound} 
reads
$\CC_1(\mathfrak{D}_t,\mA,\mU_S,t) \le 1$
for any alphabet $\mA$. Results related to this family can be found in Theorems \ref{thm:metd} and \ref{thm:linmirr}.
\end{family}

\begin{family}
\label{fam:e}
Define the simple 2-level networks $$\mathfrak{E}_t=([t,t+1],[1,1]), \quad t \ge 1.$$
The case where $t=1$ yields the Diamond Network of Section~\ref{sec:diamond}.
The Generalized Network Singleton Bound of Theorem~\ref{sbound} 
reads
$\CC_1(\mathfrak{E}_t,\mA,\mU_S,t) \le 1$
for any alphabet $\mA$. Results related to this family can be found in Theorems \ref{thm:mete} and \ref{thm:8.4}.
\end{family}

As we will see, the results of this section and of the next show that the Generalized Network Singleton Bound of Theorem~\ref{sbound}
is \textit{never} sharp for Families~\ref{fam:a},~\ref{fam:b}, and~\ref{fam:e}, \textit{no matter} what the alphabet is.
The bound is, however, sharp for Families~\ref{fam:c} and~\ref{fam:d} under the assumption that the alphabet is a sufficiently large finite field, as we will show in Section~\ref{sec:2level_lower}.

\section{Simple 2-level Networks: Upper Bounds}
\label{sec:upper}

In this section we present upper bounds on the capacity of simple 2-level networks based on a variety of techniques.
We then apply these bounds to the 
families introduced in Subsection~\ref{sec:families}. We start by establishing the notation that we will follow in the sequel.

\begin{notation} \label{not:s5}
Throughout this section, $n \ge 2$ is an integer and $$\mN=(\mV,\mE,S,\{T\})=([a_1,\ldots,a_n],[b_1,\ldots,b_n])$$ is a simple 2-level network; see Definition~\ref{def:special_2level}.
We denote by $\mU_S \subseteq \mE$ the set of edges directly connected to the source $S$ and let  $\mA$ be a network alphabet.
We denote the intermediate nodes of $\mN$ by $V_1,\ldots,V_n$, which correspond to the structural parameters $a_1, \ldots, a_n$ and 
$b_1, \ldots, b_n$.
If~$\mF$ is a network code for $(\mN,\mA)$,
then we simply write $\mF_i$ for $\smash{\mF_{V_i}}$.
Observe moreover that the network code $\mF$ can be ``globally'' interpreted as a function
$$\mF: \mA^{a_1+\ldots +a_n} \to \mA^{b_1+\ldots +b_n},$$
although of course allowed functions $\mF$ are restricted by the topology of the underlying 2-level network.
\end{notation}

We start by spelling out the Generalized Network Singleton Bound of Theorem~\ref{sbound} specifically for simple 2-level networks.

\begin{corollary}[Generalized Network Singleton Bound for simple 2-level networks]
\label{cor:sing}
Following Notation~\ref{not:s5}, for all $t \ge 0$ we have
\[\CC_1(\mN,\mA,\mU_S,t) \leq
\min_{P_1\sqcup P_2=\{1,\ldots,n\}}\left(\sum_{i\in P_1} b_i+\max\left\{0,\sum_{i\in P_2} a_i - 2t\right\}\right),\] 
where the minimum is taken over all 2-partitions $P_1, P_2$ of the set $\{1,\ldots,n\}$.
\end{corollary}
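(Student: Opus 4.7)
The plan is to invoke the Generalized Network Singleton Bound of Theorem~\ref{sbound} after observing that for a simple 2-level network the outer minimum $\min_{T\in\bfT}$ collapses (there is a single terminal) and every $S$-to-$T$ edge-cut admits a particularly simple description in terms of the intermediate nodes $V_1,\ldots,V_n$.

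First, for each 2-partition $P_1\sqcup P_2 = \{1,\ldots,n\}$ I would construct the explicit edge-set
\[\mE'(P_1,P_2) \;=\; \bigcup_{i \in P_1} \out(V_i) \;\cup\; \bigcup_{i \in P_2} \inn(V_i).\]
This is an edge-cut between $S$ and $T$: every directed $S$-to-$T$ path in $\mN$ passes through exactly one intermediate vertex $V_i$, and hence consists of an edge of $\inn(V_i)$ followed by an edge of $\out(V_i)$; whether $i\in P_1$ or $i\in P_2$, at least one of these two edges belongs to $\mE'(P_1,P_2)$ by construction.

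Second, I would read off the two relevant cardinalities. Since $\mU_S = \bigcup_{i=1}^n \inn(V_i)$ and the sets $\out(V_i)$ consist of edges directed into $T$ (and thus are disjoint from $\mU_S$), we obtain
\[|\mE'(P_1,P_2) \cap \mU_S| = \sum_{i \in P_2} a_i, \qquad |\mE'(P_1,P_2) \setminus \mU_S| = \sum_{i \in P_1} b_i.\]
Substituting $\mE' = \mE'(P_1,P_2)$ into the bound of Theorem~\ref{sbound} then yields
\[\CC_1(\mN,\mA,\mU_S,t) \;\le\; \sum_{i \in P_1} b_i + \max\Bigl\{0,\,\sum_{i \in P_2} a_i - 2t\Bigr\},\]
and minimising the right-hand side over all 2-partitions delivers the corollary.

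There is no genuine obstacle here: the argument is a bookkeeping exercise that exploits the simple 2-level topology. The only remark worth making is that the restriction to partition-style cuts is not actually a loss of generality, because in order to block every path $S\to V_i\to T$ a cut must contain either all of $\inn(V_i)$ or all of $\out(V_i)$, and including any further edges only increases both $|\mE'\cap\mU_S|$ and $|\mE'\setminus\mU_S|$ and hence the right-hand side of Theorem~\ref{sbound}. So the minimum in the corollary in fact \emph{equals} the min-cut quantity of Theorem~\ref{sbound} specialised to $\mN$, although only the inequality is needed.
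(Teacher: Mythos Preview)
Your proposal is correct and is precisely the specialisation the paper has in mind: the corollary is stated there as ``spelling out'' Theorem~\ref{sbound} for simple 2-level networks, with no separate proof given, and your construction of the cuts $\mE'(P_1,P_2)$ together with the cardinality count is exactly how one makes that explicit. Your closing remark that these partition cuts in fact realise the minimum in Theorem~\ref{sbound} is also correct (via the bipartite-cover observation you sketch) and goes slightly beyond what is needed, but it is a nice addition.
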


The upper bounds we derive in this section use a ``mix'' of projection and packing arguments. We therefore continue by reminding the reader of the notions of the Hamming metric \textit{ball} and~\textit{shell}.

\begin{notation} \label{not:ballsetc}
Given an outer code $\smash{\mC \subseteq \mA^{a_1+a_2+\ldots+a_n}}$, we let $\smash{\pi_i(\mC)}$ be the projection of $\mC$ onto the $a_i$ coordinates corresponding to the edges to intermediate node $V_{i}$ of the simple 2-level network $\mN$. For example, $\smash{\pi_1(\mC)}$ is the projection onto the first $a_1$ coordinates of the codeword. Moreover, for a given~$x \in \mC$, we denote by $B^\HH_t(x)$ the \textbf{Hamming ball} of radius $t$ with center~$x$, and by $S^\HH_t(x)$ the \textbf{shell} of that ball. In symbols,
$$B^\HH_t(x)=\{y \in \mA^{a_1+a_2+\ldots+a_n} \st d^\HH(x,y) \le t\}, \qquad S^\HH_t(x)=\{y \in B^\HH_t(x) \st d^\HH(x,y) = t\},$$
where $d^\HH$ denotes the usual Hamming distance.
\end{notation}

The next observation focuses on the case $n=2$ to illustrate an idea that will be generalized immediately after in Remark \ref{rem:id} below.

\begin{remark} \label{rem:packing}
If $n=2$, then for all $t \ge 0$ 
an outer code $\mC \subseteq \mA^{a_1+a_2}$ is unambiguous for the channel $\Omega[\mN,\mA,\mF,S\to T,\mU_S,t]$ if and only if
$$\mF(x+e)=(\mF_1(\pi_1(x+e)), \mF_2(\pi_2(x+e))) \neq (\mF'_1(\pi_1(x'+e')), \mF_2(\pi_2(x'+e'))) =\mF(x'+e')$$ for all 
$x,x' \in \mC$ with $x \neq x'$ and for all $e,e' \in \mA^{a_1+a_2}$ of Hamming weight at most $t$. 
Therefore via a packing argument 
we obtain
\begin{equation}
\label{eqn:packing_2}
    \sum_{x \in \mC} |\mF\left(B^\HH_t(x)\right)| \le {|\mA|}^{b_1+b_2}.
\end{equation}
\end{remark}

We will work towards extending the packing bound idea outlined in Remark~\ref{rem:packing} to higher numbers of intermediate nodes using the properties of simple 2-level networks. We start with 
the following result.

\begin{lemma}
\label{lem:id}
Following Notation~\ref{not:s5}, suppose $n=2$ and $b_1 \ge a_1$.
Let $t \ge 0$ and suppose that~$\mC \subseteq \mA^{a_1+a_2}$ is unambiguous for the channel $\Omega[\mN,\mA,\mF,S\to T,\mU_S,t]$. Let $\mF'_1:\mA^{a_1} \to \mA^{b_1}$ be any injective map. Then $\mC$ is unambiguous for the channel $\Omega[\mN,\mA,\{\mF_1',\mF_2\},S\to T,\mU_S,t]$ as well.
\end{lemma}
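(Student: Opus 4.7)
The plan is a direct contrapositive argument exploiting the fact that injectivity of $\mF_1'$ lets the terminal recover the input to $V_1$ exactly, so replacing $\mF_1$ by an injective map can only preserve or strengthen the ability to discriminate between codewords. First I would observe that the hypothesis $b_1 \ge a_1$ is used solely to guarantee the existence of an injection $\mF_1' : \mA^{a_1} \to \mA^{b_1}$; once such an $\mF_1'$ is fixed, the rest of the argument does not use the hypothesis again.

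Next I would assume, for contradiction, that $\mC$ is \emph{not} unambiguous for $\Omega[\mN,\mA,\{\mF_1',\mF_2\},S\to T,\mU_S,t]$. Since $\mU_S$ consists of all the edges leaving $S$, this means there exist distinct $x,x' \in \mC$ and vectors $y,y' \in \mA^{a_1+a_2}$ with $\dH(y,x) \le t$ and $\dH(y',x') \le t$ such that
$$(\mF_1'(\pi_1(y)), \mF_2(\pi_2(y))) = (\mF_1'(\pi_1(y')), \mF_2(\pi_2(y'))).$$
Equating the first blocks and invoking injectivity of $\mF_1'$ forces $\pi_1(y) = \pi_1(y')$; applying any function, in particular $\mF_1$, then gives $\mF_1(\pi_1(y)) = \mF_1(\pi_1(y'))$. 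Combining this with the equality $\mF_2(\pi_2(y)) = \mF_2(\pi_2(y'))$ from the second block yields
$$(\mF_1(\pi_1(y)), \mF_2(\pi_2(y))) = (\mF_1(\pi_1(y')), \mF_2(\pi_2(y'))),$$
which places a common output in the fan-out sets of $x$ and $x'$ under the original channel $\Omega[\mN,\mA,\mF,S\to T,\mU_S,t]$.

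Finally I would conclude that this contradicts the unambiguity of $\mC$ under $\mF$, completing the proof. There is essentially no technical obstacle here: the lemma is a small ``data-processing'' remark stating that at an intermediate node whose out-degree is at least its in-degree, nothing is lost by refusing to collapse distinct inputs. The only point worth flagging is the explicit use of the fact that, because $\mN$ has two intermediate nodes and a single terminal, the terminal's received vector decomposes as an independent pair of outputs, one from each of $V_1$ and $V_2$, so that equality of outputs truly reduces to coordinatewise equality and the injectivity of $\mF_1'$ can be applied in isolation.
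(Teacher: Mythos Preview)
Your proof is correct and essentially identical to the paper's own argument: both proceed by contradiction, use injectivity of $\mF_1'$ to deduce $\pi_1(y)=\pi_1(y')$ from equality of the first output block, and then apply $\mF_1$ to obtain a collision under the original network code. The only cosmetic difference is that the paper writes the perturbed inputs as $x+e$, $x'+e'$ with $e,e'$ of Hamming weight at most $t$, whereas you write them as $y,y'$ at Hamming distance at most $t$ from $x,x'$.
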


\begin{proof}
Let $\mF=\{\mF_1,\mF_2\}$ and $\mF'=\{\mF'_1,\mF_2\}$. Towards a contradiction and using Remark~\ref{rem:packing}, suppose that
there are $x,x' \in \mC$ with $x \neq x'$ and $e,e' \in \mA^{a_1+a_2}$ of Hamming weight at most $t$ with~$\mF'(x+e)=\mF'(x'+e')$. This implies
$$(\mF'_1(\pi_1(x+e)), \mF_2(\pi_2(x+e)))=
(\mF'_1(\pi_1(x'+e')), \mF_2(\pi_2(x'+e'))).$$
Since $\mF'_1$ is injective, we have
$\pi_1(x+e) = \pi_1(x'+e')$ and thus
$$(\mF_1(\pi_1(x+e)), \mF_2(\pi_2(x+e)))=
(\mF_1(\pi_1(x'+e')), \mF_2(\pi_2(x'+e'))).$$
That is,
$\mF(x+e)=\mF(x'+e')$, a contradiction.
\end{proof}

\begin{remark}
\label{rem:id}
Lemma \ref{lem:id} extends easily to an arbitrary number of intermediate nodes, showing that whenever~$b_i \ge a_i$ for some $i$, without loss of generality we can assume $b_i=a_i$ and take the corresponding function $\mF_i$ to be the identity (ignoring extraneous outgoing edges). In other words, the capacity obtained by maximizing over all possible choices of $\mF$ is the same as the capacity obtained by maximizing over the $\mF$'s where $\mF_i$ is equal to the identity whenever $b_i \ge a_i$ (again, where some edges can be disregarded). This will simplify the analysis of the network families we study. In particular,
we can reduce the study of a 
simple 2-level network~$\mN=([a_1,...,a_n],[b_1,...,b_n])$ 
as in Notation~\ref{not:s5} to the study of
$$\mN'=([a_1,\ldots,a_r,a_{r+1},\ldots,a_n],[a_1,\ldots,a_r,b_{r+1},\ldots,b_n]),$$
where, up to a permutation of the vertices and edges,
\begin{equation*}
    r=\max\{i \st a_i \le b_i \mbox{ for all } 1 \le i \le r\}.
\end{equation*}
\end{remark}

The next result combines the observation
of Remark~\ref{rem:id} with a packing argument to derive an upper bound on the capacity of certain simple 2-level networks.

\begin{theorem}[First Packing Bound]
\label{thm:down} 
Following Notation~\ref{not:s5},
suppose that $a_i \le b_i$ for all $1\leq i \leq r$. Let $t \ge 0$ and let $\mC$ be an unambigious code for $\Omega[\mN,\mA,\mF,S \to T,\mU_S,t]$. Then
$$\sum_{\substack{t_1,\ldots,t_r \ge 0 \\ t_1+\ldots+t_r \le t}} \, \prod_{i=1}^r\binom{a_i}{t_i}(|\mA|-1)^{t_i} \, \sum_{x \in \mC} \, \prod_{j=r+1}^n \left|\mF_j\left(B^\HH_{t-(t_1+\ldots+t_r)}(\pi_j(x))\right)\right| \le |\mA|^{b_1+b_2+\ldots +b_n}.$$
\end{theorem}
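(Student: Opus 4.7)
My plan is to combine the reduction of Remark~\ref{rem:id} with the packing bound that arises from the unambiguity of $\mC$ in the output space $\mA^{b_1+\ldots+b_n}$. I would begin by invoking Remark~\ref{rem:id} on each index $i \le r$: since $a_i \le b_i$, we may assume without loss of generality that $\mF_i$ is the identity, and by discarding superfluous outgoing edges of $V_i$ we may also take $b_i = a_i$ for $i \le r$. Neither the capacity nor the unambiguity of $\mC$ is affected. After this reduction the first $r$ blocks of any output $\mF(x+e)$ equal $(x_1+e_1,\ldots,x_r+e_r)$, so the error-weight profile $(t_1,\ldots,t_r)=(|e_1|,\ldots,|e_r|)$ on the first $r$ nodes can be read directly off the corresponding output blocks; in particular, the packing bound $\sum_{x\in\mC}|\mF(B^\HH_t(x))| \le |\mA|^{b_1+\ldots+b_n}$ holds by pairwise disjointness of the fan-outs.

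The next step is to decompose each fan-out set by the weight profile. Since different $(t_1,\ldots,t_r)$ produce different first-$r$ output blocks, one obtains a disjoint union $\mF(B^\HH_t(x)) = \bigsqcup_{(t_1,\ldots,t_r):\,\sum_i t_i \le t} S_{x,(t_1,\ldots,t_r)}$. For each fixed pattern, the first $r$ blocks contribute a factor of $\prod_{i=1}^r \binom{a_i}{t_i}(|\mA|-1)^{t_i}$, as they range independently over the Hamming shells of weights $t_i$ around $\pi_i(x)$. The remaining adversarial budget $u = t-(t_1+\ldots+t_r)$ is available on the source edges feeding $V_{r+1},\ldots,V_n$, and by concentrating all $u$ corruptions at a single $V_j$ with $j>r$ the adversary can force the $j$-th output block to attain any element of $\mF_j(B^\HH_u(\pi_j(x)))$. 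Putting these contributions together as a product and summing over $x\in\mC$ and over the weight patterns produces the LHS of the claimed inequality, which is then chained with the packing bound $\sum_x|\mF(B^\HH_t(x))|\le |\mA|^{b_1+\ldots+b_n}$.

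The main obstacle I anticipate is legitimizing the product structure $\prod_{j>r}|\mF_j(B^\HH_u(\pi_j(x)))|$ for the last $n-r$ blocks, because the adversary's budget there is joint ($\sum_{j>r}|e_j|\le u$) rather than per-node ($|e_j|\le u$). I plan to address this by using the maximality of $r$ inherent in the reduction of Remark~\ref{rem:id} (so that each $\mF_j$ with $j>r$ is genuinely compressing, i.e.\ $b_j < a_j$) and by checking that, in this regime, distinct tuples in $\prod_{j>r}\mF_j(B^\HH_u(\pi_j(x)))$ correspond to distinct admissible terminal outputs that remain distinct across codewords; this is what licences the factorization in the final step and allows the enriched count to be packed into $\mA^{b_1+\ldots+b_n}$.
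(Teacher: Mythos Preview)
Your approach is exactly the paper's: invoke Remark~\ref{rem:id} to make $\mF_1,\ldots,\mF_r$ identities, decompose $\mF(B^\HH_t(x))$ as a disjoint union over shell profiles $(t_1,\ldots,t_r)$, count the shells on the first $r$ blocks, and then pack into $\mA^{b_1+\ldots+b_n}$. The paper handles the step you flag as the ``main obstacle'' by simply asserting the identity
\[
|\mF(B^\HH_t(x))| \;=\; \sum_{t_1+\ldots+t_r \le t}\ \prod_{i=1}^r \binom{a_i}{t_i}(|\mA|-1)^{t_i}\ \prod_{j=r+1}^n \bigl|\mF_j\bigl(B^\HH_{t-(t_1+\ldots+t_r)}(\pi_j(x))\bigr)\bigr|,
\]
with the one-line justification that ``all shells up to radius $t-(t_1+\ldots+t_r)$ will be included for each projection.'' When $n-r\le 1$ (in particular the case $(n,r)=(2,1)$ of Corollary~\ref{cor:ub}, which is the only instance the paper actually uses), the product on the right has at most one factor and this identity is immediate; your plan then goes through without further work.

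Your instinct that the product step is problematic when $n-r\ge 2$ is well placed, and neither your proposed fix via maximality/compression nor the paper's one-line justification closes it. In fact the stated inequality fails in that generality. Take $\mN=([1,2,2],[1,1,1])$, $\mA=\{0,1\}$, $t=1$, $r=1$, with $\mF_1=\mathrm{id}$ and $\mF_2(u,v)=\mF_3(u,v)=u$, and let $\mC=\{(0,0,0,0,0),(1,1,1,1,1)\}$. The two fan-out sets are $\{(0,0,0),(1,0,0),(0,1,0),(0,0,1)\}$ and $\{(1,1,1),(0,1,1),(1,0,1),(1,1,0)\}$, so $\mC$ is unambiguous; yet the left-hand side of the theorem evaluates to $10>8=2^{3}$. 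The reason is exactly the one you named: with a joint budget $u$ on the last $n-r$ blocks one only gets $\bigcup_{\sum_j t_j\le u}\prod_j\mF_j(S^\HH_{t_j})\subseteq \prod_j\mF_j(B^\HH_u)$, and this containment goes the wrong way for the final packing chain. So restrict your write-up to $n-r\le 1$; there the argument is complete, and that is all the paper needs downstream.
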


\begin{proof} Let $\mF=\{\mF_1,\ldots,\mF_r,\ldots,\mF_n\}$ be a network code for $(\mN,\mA)$, where the first $r$ functions correspond to the pairs~$(a_1,b_1),\ldots,(a_r,b_r)$. By Lemma \ref{lem:id}, we shall assume without loss of generality that $\mF_i$ 
is an injective map for $1 \le i \le r.$ By Remark \ref{rem:id}, we can assume them to be identity by ignoring the extraneous outgoing edges. For $x \in \mC$, we have 
$$B^\HH_t(x) = \bigsqcup_{t_1+\ldots +t_n \le t} \left[ S^\HH_{t_1}(\pi_1(x)) \times \cdots \times 
S^\HH_{t_n}(\pi_n(x)) \right],$$
where $\sqcup$ emphasizes that the union is disjoint. Then,
{\small
\begin{align*}
\mF(B^\HH_t(x)) &= \bigcup_{t_1+\ldots +t_n \le t} \mF \left[S^\HH_{t_1}(\pi_1(x)) \times \cdots \times 
S^\HH_{t_n}(\pi_n(x)) \right]\\
&= \bigcup_{t_1+\ldots +t_n \le t} \left[S^\HH_{t_1}(\pi_1(x)) \times \cdots \times 
S^\HH_{t_r}(\pi_r(x)) \times \mF_{r+1}(S^\HH_{t_{r+1}}(\pi_{r+1}(x)) \times \cdots \times \mF_{n}(S^\HH_{t_{n}}(\pi_{n}(x)) \right]\\
&= \bigcup_{t_1 + \ldots +t_r \le t} \  \bigcup_{t_{r+1}+\ldots+t_n\le t - (t_1 + \ldots +t_r)}  \Bigl[ S^\HH_{t_1}(\pi_1(x)) \times \cdots \times S^\HH_{t_r}(\pi_r(x)) \, \times \\ & \qquad \qquad 
\qquad 
\times \mF_{r+1}(S^\HH_{t_{r+1}}(\pi_{r+1}(x))  
\times \cdots \times  \mF_{n}(S^\HH_{t_{n}}(\pi_{n}(x))\Bigr].
\end{align*}
}
The first union in the last equality is disjoint due to~$\mF_1,\ldots,\mF_r$ being the identity and the fact that we are considering the shells.
So,
{\small
\begin{align*}
    \mF(B^\HH_t(x)) &= \bigsqcup_{t_1+\ldots+t_r \le t} \Bigl[S^\HH_{t_1}(\pi_1(x)) \times \cdots \times S^\HH_{t_r}(\pi_r(x)) \times \\ &
    \qquad \qquad \qquad 
    \bigcup_{t_{r+1}+\ldots+t_n \le t - (t_1 + \ldots + t_r)}\biggl( \mF_{r+1}(S^\HH_{t_{r+1}}(\pi_{r+1}(x))  
\times \cdots \times \mF_{n}(S^\HH_{t_{n}}(\pi_{n}(x))\biggr)\Bigr].
\end{align*}
}
By taking cardinalities in the previous identity 
we obtain
\[
    |\mF(B^\HH_t(x))|= \sum_{ t_1 + \ldots +t_r \le t} \, \prod_{i=1}^r\binom{a_i}{t_i}(|\mA|-1)^{t_i} \prod_{j=r+1}^n |\mF_j(B^\HH_{t-(t_1 + \ldots + t_r)}(\pi_j(x)))|,\]
where the terms in the second product come from considering that all shells up to the shell of radius $t-(t_1 + \ldots + t_r)$ will be included for each projection. Summing over $x \in \mC$, exchanging summations, and using the same argument as in \eqref{eqn:packing_2}, we obtain the desired result.
\end{proof}

In this paper, we 
are mainly interested in the case $(n,r)=(2,1)$ of the previous result. 
For convenience of the reader, we state this as a corollary of Theorem~\ref{thm:down}.

\begin{corollary}
\label{cor:ub}
Following Notation~\ref{not:s5}, suppose $n=2$ and $a_1\le b_1$.
Let $t \ge 1$ and let
$\mC$ be an unambiguous code for the channel $\Omega[\mN,\mA,\mF,S \to T,\mU_S,t]$. Then
$$\sum_{t_1=0}^t
\binom{a_1}{t_1}(|\mA|-1)^{t_1} \sum_{x \in \mC} \left|\mF_2\left(B^\HH_{t-t_1}(\pi_2(x)))\right)\right| \le |\mA|^{b_1+b_2}.$$
\end{corollary}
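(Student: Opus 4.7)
The plan is to obtain this statement as a direct specialization of Theorem~\ref{thm:down}, which is its natural parent. Setting $n=2$ and $r=1$ in the theorem produces exactly the claimed inequality; the hypothesis $a_1 \le b_1$ of the corollary matches the hypothesis ``$a_i \le b_i$ for all $1 \le i \le r$'' of Theorem~\ref{thm:down} with $r=1$.

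Concretely, I would carry out the substitution step by step. With $r=1$, the outer sum in Theorem~\ref{thm:down} runs over nonnegative integers $t_1$ with $t_1 \le t$, which is the single sum $\sum_{t_1=0}^{t}$. The first product $\prod_{i=1}^{r}\binom{a_i}{t_i}(|\mA|-1)^{t_i}$ collapses to the single factor $\binom{a_1}{t_1}(|\mA|-1)^{t_1}$. With $n=2$ and $r=1$, the inner product $\prod_{j=r+1}^{n}|\mF_j(B^\HH_{t-(t_1+\ldots+t_r)}(\pi_j(x)))|$ becomes the single factor $|\mF_2(B^\HH_{t-t_1}(\pi_2(x)))|$. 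Finally, the right-hand side $|\mA|^{b_1+\ldots+b_n}$ becomes $|\mA|^{b_1+b_2}$. Assembling the pieces yields the displayed inequality in the corollary.

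I do not expect any real obstacle here, since the result is packaged as an immediate corollary of Theorem~\ref{thm:down} for the convenience of the reader (this is the case most used in the subsequent sections). The only items worth double-checking are routine: that the indexing conventions of Notation~\ref{not:ballsetc} agree in both statements, that the condition $t \ge 1$ in the corollary is covered by $t \ge 0$ in the theorem, and that $\mC$ being unambiguous is the exact hypothesis the theorem requires. So the proof reduces to a one-line citation of Theorem~\ref{thm:down} followed by the substitution described above.
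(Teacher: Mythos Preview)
Your proposal is correct and matches the paper's approach exactly: the paper introduces this corollary as the case $(n,r)=(2,1)$ of Theorem~\ref{thm:down}, stated ``for convenience of the reader,'' with no separate proof given. Your substitution is precisely the intended derivation.
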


We next apply the bound of Theorem~\ref{thm:down} (in the form of Corollary~\ref{cor:ub}) to some of the network families 
introduced in Subsection~\ref{sec:families}.
With some additional information, Theorem~\ref{thm:down} gives us that the Generalized Network Singleton Bound of Corollary~\ref{cor:sing} is in general not achievable, no matter what the network alphabet~$\mA$~is. This is in strong contrast with what is commonly observed in classical coding theory, where the Singleton bound is always the sharpest (and in fact sharp) when the alphabet is sufficiently large.

\begin{theorem}
\label{thm:notmet}
Let $\mathfrak{B}_s=(\mV,\mE,S,\{T\})$ be a member of
Family~\ref{ex:s}. Let $\mA$ be any network alphabet and let $\mU_S$ be the  set of edges of $\mathfrak{B}_s$ directly connected to $S$.
We have
$$\CC_1(\mathfrak{B}_s,\mA,\mU_S,1) <s.$$
In particular, the Generalized Network Singleton Bound of Corollary~\ref{cor:sing} is not met.
\end{theorem}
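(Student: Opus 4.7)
The plan is to apply the First Packing Bound of Corollary~\ref{cor:ub} and argue by contradiction. I instantiate the bound with $n=2$, $a_1=b_1=1$, $a_2=s+1$, $b_2=s$, and $t=1$, noting that the condition $a_1 \le b_1$ is met so Remark~\ref{rem:id} lets us take $\mF_1$ to be the identity. The two terms in the sum over $t_1$ collapse to
$$\sum_{x \in \mC} \bigl|\mF_2\bigl(B^\HH_1(\pi_2(x))\bigr)\bigr| + (|\mA|-1)\,|\mC| \le |\mA|^{s+1}.$$
Suppose for contradiction that $|\mC| = |\mA|^s$. Rearranging yields $\sum_{x \in \mC}|\mF_2(B^\HH_1(\pi_2(x)))| \le |\mA|^s = |\mC|$. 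Since each summand is at least $1$, each must equal exactly $1$: the function $\mF_2$ is \textbf{constant} on the Hamming ball $B^\HH_1(\pi_2(x))$ for every $x \in \mC$.

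Next I extract two injectivity statements from unambiguity of $\mC$ combined with this constancy. First, $\pi_2$ is injective on $\mC$: if $\pi_2(x) = \pi_2(x')$ for distinct $x,x' \in \mC$, a single-symbol perturbation on the edge into $V_1$ equalizes the outputs of $x$ and $x'$ at the terminal, because $\mF_1$ is the identity and the $V_2$-branch is left untouched. Second, $\mF_2$ is injective on $\pi_2(\mC)$: if $\mF_2(\pi_2(x)) = \mF_2(\pi_2(x'))$ for distinct $x,x' \in \mC$, constancy pins the second coordinate of the output to this common value, while a single-symbol error on the $V_1$-edge lets the first coordinate take any value in $\mA$, so the fan-out sets of $x$ and $x'$ intersect. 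Combining this second fact with constancy of $\mF_2$ on balls of radius $1$, any two distinct points of $\pi_2(\mC)$ must be at Hamming distance at least $2$.

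I then close via a Singleton-style argument. The set $\pi_2(\mC) \subseteq \mA^{s+1}$ has cardinality $|\mA|^s$ (by injectivity of $\pi_2$) and minimum Hamming distance at least $2$, so the Singleton bound forces the minimum distance to be exactly $2$. Pick $y, y' \in \pi_2(\mC)$ differing in exactly two positions $i$ and $j$, and let $z$ agree with $y$ except in coordinate $i$, where it agrees with $y'$. Then $z \in B^\HH_1(y) \cap B^\HH_1(y')$, so constancy gives $\mF_2(y) = \mF_2(z) = \mF_2(y')$, contradicting injectivity of $\mF_2$ on $\pi_2(\mC)$. Thus $|\mC| < |\mA|^s$, yielding $\CC_1(\mathfrak{B}_s, \mA, \mU_S, 1) < s$.

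The main obstacle is the transition from the quantitative packing bound (which only yields that $\mF_2$ collapses radius-$1$ balls to single points) to a structural contradiction. The key insight is that the $V_1$-edge effectively acts as a ``free'' single-coordinate perturbation under the adversary; this upgrades ball-constancy into strong injectivity constraints on both $\pi_2$ (restricted to $\mC$) and $\mF_2$ (restricted to $\pi_2(\mC)$), after which the Singleton bound finishes the job.
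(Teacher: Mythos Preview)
Your proof is correct and follows essentially the same blueprint as the paper: apply Corollary~\ref{cor:ub}, assume $|\mC|=|\mA|^s$, establish that $\mF_2$ is injective on $\pi_2(\mC)$, and derive a contradiction by finding two codeword projections whose radius-$1$ Hamming balls overlap. The one substantive difference is how you locate the overlapping balls. The paper uses a direct volume count: since $\sum_{x\in\mC}|B^\HH_1(\pi_2(x))|=|\mA|^s\bigl((s+1)(|\mA|-1)+1\bigr)>|\mA|^{s+1}$, two balls must meet, and injectivity of $\mF_2$ on $\pi_2(\mC)$ then forces one image $\mF_2(B^\HH_1(\pi_2(x)))$ to have size at least $2$, so the packing inequality tightens to $|\mA|\,|\mC|+1\le|\mA|^{s+1}$. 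You instead push the packing bound to its extreme first (forcing every $|\mF_2(B^\HH_1(\pi_2(x)))|=1$), parlay constancy plus injectivity into a minimum-distance-$2$ condition on $\pi_2(\mC)$, and invoke the Singleton bound to exhibit a distance-$2$ pair. Your route is slightly longer but self-contained and makes the role of the Singleton bound explicit; the paper's volume count is more direct. One small remark: your invocation of ``constancy'' when proving $\mF_2$ is injective on $\pi_2(\mC)$ is unnecessary---that step goes through verbatim without it (as in the paper), since a single error on the $V_1$-edge already lets the first output coordinate range over all of $\mA$.
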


\begin{proof} 
Let $\mF$ be a network code for the pair $(\mathfrak{B}_s,\mA)$ and let
$q:=|\mA|$ to simplify the notation.
Suppose that $\mC$ is an unambiguous code for the channel 
$\Omega[\mathfrak{B}_s,\mA,\mF,S \to T,\mU_S,1]$. We want to show that $|\mC|<q^s$.
Corollary \ref{cor:ub} gives
\begin{equation} \label{touse}
    (q-1) \cdot |\mC| + \sum_{x \in \mC} \left|\mF_2\left(B^\HH_1(\pi_2(x))\right)\right| \le q^{s+1}.
\end{equation}
Suppose towards a contradiction that $|\mC|=q^s$.
We claim that there exists a codeword $x \in \mC$ for which the cardinality of 
$$\left\{\mF_2(\pi_2(z)) \st \dH(\pi_2(x),\pi_2(z))\leq 1,\ z\in \mA^{s+2}\right\}$$ 
is at least 2. To see this, we start by observing that $\mF_2$ restricted to ${\pi_2(\mC)}$ must be injective, as otherwise the fan-out sets of at least two codewords would intersect non-trivially, making~$\mC$ ambiguous for the channel $\Omega[\mathfrak{B}_s,\mA,\mF,S \to T,\mU_S,1]$.
Therefore it is now enough to show that~$B^\HH_1(\pi_2(x)) \cap B^\HH_1(\pi_2(y)) \neq \emptyset$ for some distinct $x,y\in\mC$,
which would imply that the cardinality of one among 
$$\left\{\mF_2(\pi_2(z)) \st \dH(\pi_2(x),\pi_2(z))\leq 1,\ z\in \mA^{s+2}\right\}, \ \left\{\mF_2(\pi_2(z)) \st \dH(\pi_2(y),\pi_2(z))\leq 1,\ z\in \mA^{s+2}\right\}$$ 
is at least 2, since $\mF_2(\pi_2(x)) \neq \mF_2(\pi_2(y))$. 
Observe that $|B^\HH_1(\pi_2(x))| = (s+1)(q-1)+1$ for any $x \in \mC,$ and that $$\sum_{x \in \mC} |B^\HH_1(\pi_2(x))| = q^s(s+1)(q-1)+q^s = q^{s+1}(s+1) - sq^s > q^{s+1}= |\mA^{a_2}|,$$ where we use the fact that $q > 1$. If $B^\HH_1(\pi_2(x)) \cap B^\HH_1(\pi_2(y)) = \emptyset$ for all distinct $x,y\in\mC$, then~$\sum_{x \in \mC} |B^\HH_1(\pi_2(x))| \le |\mA^{a_2}|$, a contradiction. Therefore $$B^\HH_1(\pi_2(x)) \cap B^\HH_1(\pi_2(y)) \neq \emptyset,$$ proving our claim.
We finally combine the said claim with the inequality in~\eqref{touse}, obtaining
$$q|\mC|+1=(q-1)|\mC| + 2 +(|\mC|-1)  \le (q-1) \, |\mC| + \sum_{x \in \mC} |\mF_2(B^\HH_1(\pi_2(x)))| \le q^{s+1}.$$ In particular, $|\mC| < q^s$, establishing the theorem.
\end{proof}

\begin{remark}
While we cannot compute the exact capacity of the networks of Family~\ref{fam:b}, preliminary experimental results and case-by-case analyses seem to indicate that
\begin{equation} \label{ourc}
\CC_1(\mathfrak{B}_s,\mA,\mU_S,1) = \log_{|\mA|} \left(\frac{|\mA|^s + |\mA|}{2}-1\right),
\end{equation}
which would be consistent
with Corollary \ref{cor:sbs}.
At the time of writing this paper
proving (or disproving) the equality in~\eqref{ourc} remains an open problem.
\end{remark}

Notice that Theorem \ref{thm:down} uses a packing argument ``downstream'' of the intermediate nodes in a simple 2-level network. Next, we work toward an upper bound on capacity that utilizes a packing argument ``upstream'' of the intermediate nodes. 
Later we will show that the packing argument ``upstream'' acts similarly to the Hamming Bound from classical coding theory and can sometimes give better results in the networking context.

We start with the following lemma, which will be the starting point for the Second Packing Bound below.

\begin{lemma}
\label{lem:goodup}
We follow Notation~\ref{not:s5}
and let $t \ge 0$ be an integer.
Let $\mC$ be an outer code for $(\mN,\mA)$.  Then $\mC$ is unambiguous 
for the channel $\Omega[\mN,\mA,\mF,S\to T,\mU_S,t]$
if and only if~$\mF^{-1}(\mF(B^\HH_t(x))) \cap \mF^{-1}(\mF(B^\HH_t(x'))) = \emptyset$ for all distinct $x,x' \in \mC$.
\end{lemma}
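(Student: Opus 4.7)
The plan is to unwind the definition of the channel $\Omega[\mN,\mA,\mF,S\to T,\mU_S,t]$. Since $\mU_S$ consists of the edges directly leaving $S$, the adversary effectively replaces the input vector $x \in \mA^{a_1+\cdots+a_n}$ with some $x+e$, where $e$ has Hamming weight at most $t$; the network code $\mF$ is then applied to the corrupted vector. Hence the fan-out set equals $\mF(B^\HH_t(x))$, and unambiguity of $\mC$ is by definition the statement $\mF(B^\HH_t(x)) \cap \mF(B^\HH_t(x')) = \emptyset$ for all distinct $x,x' \in \mC$.

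The second step reduces the lemma to the elementary set-theoretic identity
\[\mF^{-1}(A) \cap \mF^{-1}(B) = \mF^{-1}(A \cap B),\]
valid for any function $\mF$ and any $A, B$ in its codomain. Applying this with $A := \mF(B^\HH_t(x))$ and $B := \mF(B^\HH_t(x'))$, the desired conclusion becomes the equivalence
\[A \cap B = \emptyset \iff \mF^{-1}(A \cap B) = \emptyset.\]
The forward implication is trivial. For the converse, I would use that $A, B$ both lie inside $\mathrm{image}(\mF)$ by construction, so $A \cap B \subseteq \mathrm{image}(\mF)$, and therefore any element of $A \cap B$ has at least one preimage under $\mF$; this forces $\mF^{-1}(A \cap B)$ to be nonempty whenever $A \cap B$ is.

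Chaining these equivalences together yields the lemma. The argument is essentially a bookkeeping reformulation of unambiguity, so there is no real obstacle; the point of recording it in this form is that the preimage description $\mF^{-1}(\mF(B^\HH_t(x)))$ partitions the domain $\mA^{a_1+\cdots+a_n}$ into classes indexed by the codewords of $\mC$, which is exactly the structure needed to set up the ``upstream'' packing bound (as opposed to the ``downstream'' packing used in Theorem~\ref{thm:down}) in the sequel.
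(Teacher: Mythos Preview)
Your proposal is correct and takes essentially the same approach as the paper: both identify the fan-out set as $\mF(B^\HH_t(x))$ and then argue the elementary set-theoretic equivalence $\mF(B^\HH_t(x)) \cap \mF(B^\HH_t(x')) = \emptyset \iff \mF^{-1}(\mF(B^\HH_t(x))) \cap \mF^{-1}(\mF(B^\HH_t(x'))) = \emptyset$, using that both sets lie in the image of $\mF$. The paper phrases this via a direct element chase, while you invoke the identity $\mF^{-1}(A)\cap\mF^{-1}(B)=\mF^{-1}(A\cap B)$ explicitly, but the content is the same.
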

\begin{proof}
Suppose $\mF^{-1}(\mF(B^\HH_t(x))) \cap \mF^{-1}(\mF(B^\HH_t(x'))) \neq \emptyset$ for some distinct $x,x'\in\mC$, and let $y$ lie in that intersection. This implies $$\mF(y) \in \mF(B^\HH_t(x)) \cap \mF(B^\HH_t(x')) = \Omega(x) \cap \Omega(x').$$ In other words, $\mC$ is not unambiguous for the channel $\Omega[\mN,\mA,\mF,S\to T,\mU_S,t]$. For the other direction, it is straightforward to see that disjointness of the preimages implies disjointness of the fan-out sets.
\end{proof}

Following the notation of 
 Lemma \ref{lem:goodup}, for $n=2$
 and an unambiguous $\mC$ for the channel~$\Omega[\mN,\mA,\mF,S\to T,\mU_S,t]$ we obtain the ``packing'' result
$$\sum_{x \in \mC} |\mF^{-1}(\mF(B^\HH_t(x)))| \le |\mA|^{a_1 + a_2}.$$ 
By extending this idea to more than two intermediate nodes, one obtains the following upper bound.

\begin{theorem}[Second Packing Bound]
\label{thm:up}
Following Notation~\ref{not:s5},
suppose that $a_i \le b_i$ for $1\leq i \leq r$. Let $t \ge 0$ and let $\mC$ be an unambigious code for $\Omega[\mN,\mA,\mF,S \to T,\mU_S,t]$. Then
$$\sum_{\substack{t_1,\ldots,t_r \ge 0 \\ t_1+\ldots+t_r \le t}}\prod_{i=1}^r\binom{a_i}{t_i}(|\mA|-1)^{t_i} \sum_{x \in \mC} \prod_{j=r+1}^n |\mF_j^{-1}(\mF_j(B^\HH_{t-(t_1+\ldots+t_r)}(\pi_j(x))))| \le |\mA|^{a_1+a_2+\cdots a_n}.$$
\end{theorem}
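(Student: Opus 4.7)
The plan is to mirror the proof of Theorem~\ref{thm:down} (First Packing Bound), replacing every occurrence of the direct image $\mF(B^\HH_t(x))$ by the preimage $\mF^{-1}(\mF(B^\HH_t(x)))$, and replacing the implicit packing inequality $\sum_{x \in \mC} |\mF(B^\HH_t(x))| \le |\mA|^{b_1+\ldots+b_n}$ used there by the preimage-disjointness inequality $\sum_{x \in \mC} |\mF^{-1}(\mF(B^\HH_t(x)))| \le |\mA|^{a_1+\ldots+a_n}$, which follows immediately from Lemma~\ref{lem:goodup}.

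First, by Lemma~\ref{lem:id} and Remark~\ref{rem:id}, I may assume without loss of generality that $\mF_i$ is the identity map on $\mA^{a_i}$ for every $1 \le i \le r$ (ignoring the extraneous outgoing edges). Then for each $x \in \mC$, the standard disjoint decomposition $B^\HH_t(x) = \bigsqcup_{t_1+\ldots+t_n \le t} \prod_{i=1}^n S^\HH_{t_i}(\pi_i(x))$ and the coordinate-wise action of $\mF$ yield
$$\mF^{-1}(\mF(B^\HH_t(x))) = \bigcup_{t_1+\ldots+t_n \le t} \prod_{i \le r} S^\HH_{t_i}(\pi_i(x)) \times \prod_{j > r} \mF_j^{-1}(\mF_j(S^\HH_{t_j}(\pi_j(x)))),$$
where I used that $\mF_i^{-1}(\mF_i(S^\HH_{t_i}(\pi_i(x)))) = S^\HH_{t_i}(\pi_i(x))$ when $\mF_i$ is the identity. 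As in Theorem~\ref{thm:down}, the outer union over $(t_1,\ldots,t_r)$ is in fact disjoint, since the first $r$ coordinates already determine the tuple of shell-radii $(t_1,\ldots,t_r)$.

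Taking cardinalities and extracting the first $r$ factors, the same ``all shells up to radius $t-(t_1+\ldots+t_r)$ are included in each remaining projection'' reasoning used in the proof of Theorem~\ref{thm:down} will yield
$$|\mF^{-1}(\mF(B^\HH_t(x)))| = \sum_{t_1+\ldots+t_r \le t} \prod_{i=1}^r \binom{a_i}{t_i}(|\mA|-1)^{t_i} \prod_{j=r+1}^n \bigl|\mF_j^{-1}\bigl(\mF_j(B^\HH_{t-(t_1+\ldots+t_r)}(\pi_j(x)))\bigr)\bigr|.$$
Summing this identity over $x \in \mC$ and invoking Lemma~\ref{lem:goodup} (which gives $\sum_{x \in \mC} |\mF^{-1}(\mF(B^\HH_t(x)))| \le |\mA|^{a_1+\ldots+a_n}$ by preimage disjointness) completes the proof.

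The main technical point to check carefully is the cardinality identity for the inner union over $(t_{r+1}, \ldots, t_n)$ constrained by $\sum_{j > r} t_j \le t - \sum_{i \le r} t_i$: one needs the union of the products $\prod_{j > r} \mF_j^{-1}(\mF_j(S^\HH_{t_j}(\pi_j(x))))$ to decompose as the product $\prod_{j > r} |\mF_j^{-1}(\mF_j(B^\HH_{t-\sum t_i}(\pi_j(x))))|$. This step is transparent in the principal case of interest $n - r = 1$ (the Second Packing Bound analogue of Corollary~\ref{cor:ub}), where there is only a single index $j > r$ and the two constraints coincide; it is precisely the same step that appears, implicitly, in the proof of Theorem~\ref{thm:down}, so it can be handled by the identical argument.
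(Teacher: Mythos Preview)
Your proposal is correct and matches the paper's own approach: the paper omits the proof of Theorem~\ref{thm:up} entirely, stating only that it ``follows from similar steps to those in the proof of Theorem~\ref{thm:down}'', which is precisely the strategy you carry out, with Lemma~\ref{lem:goodup} supplying the preimage-packing inequality in place of the downstream packing of Remark~\ref{rem:packing}. Your caveat about the inner union for $n-r>1$ is appropriate and honest; the paper's proof of Theorem~\ref{thm:down} handles that step in exactly the same way, and the case $n-r=1$ (Corollary~\ref{down_n2}) is the only one used later.
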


The proof of the previous result follows from similar steps to those in the proof of Theorem~\ref{thm:down}, and we omit it here. 
As we did for the case of 
Theorem~\ref{thm:down},
we also spell out the case~$(n,r)=(2,1)$ for Theorem \ref{thm:up}.

\begin{corollary}
\label{down_n2}
Following Notation~\ref{not:s5}, suppose $n=2$ and $a_1\le b_1$.
Let $t \ge 1$ and let
$\mC$ be an unambiguous code for the channel $\Omega[\mN,\mA,\mF,S \to T,\mU_S,t]$. Then
$$\sum_{t_1=0}^t \binom{a_1}{t_1}(|\mA|-1)^{t_1} \sum_{x \in \mC} |\mF_2^{-1}(\mF_2(B^\HH_{t-t_1}(\pi_2(x))))| \le |\mA|^{a_1 + a_2}.$$
\end{corollary}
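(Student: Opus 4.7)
The plan is to deduce Corollary~\ref{down_n2} as the $(n,r)=(2,1)$ specialization of Theorem~\ref{thm:up}, and since the proof of that theorem is itself omitted in the paper, I would write out the argument in this simple case by adapting the proof of Theorem~\ref{thm:down}, replacing the direct packing step by Lemma~\ref{lem:goodup}. Fix $\mF=\{\mF_1,\mF_2\}$ and an unambiguous code $\mC$ for $\Omega[\mN,\mA,\mF,S\to T,\mU_S,t]$. Since $a_1 \le b_1$, by Lemma~\ref{lem:id} we may replace $\mF_1$ by any injective map without losing unambiguity, and by Remark~\ref{rem:id} we may in fact take $a_1=b_1$ and $\mF_1=\Id$.

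With this reduction I would proceed by the same shell decomposition used in Theorem~\ref{thm:down}. For any $x\in \mC$,
$$B^\HH_t(x) = \bigsqcup_{t_1+t_2 \le t} S^\HH_{t_1}(\pi_1(x)) \times S^\HH_{t_2}(\pi_2(x)),$$
and applying $\mF=(\Id,\mF_2)$ together with the fact that $\bigcup_{t_2 \le t-t_1} S^\HH_{t_2}(\pi_2(x)) = B^\HH_{t-t_1}(\pi_2(x))$ gives
$$\mF(B^\HH_t(x)) = \bigsqcup_{t_1=0}^t S^\HH_{t_1}(\pi_1(x)) \times \mF_2(B^\HH_{t-t_1}(\pi_2(x))).$$
The outer union is disjoint because the shells $S^\HH_{t_1}(\pi_1(x))$ are pairwise disjoint in the first coordinate. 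Now I would take preimages: since $\mF$ acts coordinate-wise as $(\Id,\mF_2)$, preimage commutes with the Cartesian product, and so
$$\mF^{-1}(\mF(B^\HH_t(x))) = \bigsqcup_{t_1=0}^t S^\HH_{t_1}(\pi_1(x)) \times \mF_2^{-1}\bigl(\mF_2(B^\HH_{t-t_1}(\pi_2(x)))\bigr).$$

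Taking cardinalities, using $|S^\HH_{t_1}(\pi_1(x))|=\binom{a_1}{t_1}(|\mA|-1)^{t_1}$, and summing over $x\in \mC$ then yields
$$\sum_{x\in\mC} |\mF^{-1}(\mF(B^\HH_t(x)))| = \sum_{t_1=0}^t \binom{a_1}{t_1}(|\mA|-1)^{t_1} \sum_{x\in\mC} |\mF_2^{-1}(\mF_2(B^\HH_{t-t_1}(\pi_2(x))))|.$$
Finally, Lemma~\ref{lem:goodup} applied to $\mC$ ensures $\mF^{-1}(\mF(B^\HH_t(x)))$ and $\mF^{-1}(\mF(B^\HH_t(x')))$ are disjoint for distinct $x,x'\in\mC$, so the left-hand side is bounded by $|\mA^{a_1+a_2}|$, yielding the claimed inequality. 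The only delicate point is making sure the Cartesian-product decompositions remain disjoint after applying $\mF$ and $\mF^{-1}$; this rests entirely on having arranged $\mF_1$ to be injective, which is where the hypothesis $a_1 \le b_1$ is used.
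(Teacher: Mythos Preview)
Your proposal is correct and follows exactly the approach the paper indicates: the corollary is the $(n,r)=(2,1)$ case of Theorem~\ref{thm:up}, whose proof the paper omits while noting it mirrors that of Theorem~\ref{thm:down}. Your write-up carries out precisely that mirroring---reducing to $\mF_1=\Id$ via Lemma~\ref{lem:id} and Remark~\ref{rem:id}, performing the shell decomposition, and then invoking Lemma~\ref{lem:goodup} in place of the downstream packing of Remark~\ref{rem:packing}---so there is nothing to add.
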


\begin{remark}
Although Theorem~\ref{thm:up} acts similarly to
the Hamming Bound from classical coding theory, we find it significantly more difficult to apply than our 
Theorem~\ref{thm:down}
in the networking context.
The reason behind this lies in the fact that, in general,
\begin{equation} \label{nono}\mF^{-1}(\mF(B^\HH_t(x))) \neq \mF_1^{-1}(\mF_1(\pi_1(B^\HH_t(x)))) \times  \mF_2^{-1}(\mF_2(\pi_2(B^\HH_t(x)))).
\end{equation}
This makes it challenging to evaluate the quantities in the statement of 
Theorem~\ref{thm:up}, even in the case $n=2$ (Corollary~\ref{down_n2}).
We give an example to illustrate the inequality in~\eqref{nono}.

Consider the Diamond Network $\mathfrak{A}_1$ with $\mA=\F_3$; see Section~\ref{sec:diamond} and Subsection~\ref{sec:families}. We know from Theorem~\ref{thm:diamond_cap}
that its capacity is $\log_3 2$.
A capacity-achieving pair $(\mF,\mC)$
is given by~$\mC=\{(1,1,1),(2,2,2)\}$ and $\mF=\{\mF_1,\mF_2\}$, where 
$\mF_1(a) = a$ for all $a \in \mA$
and $$\mF_2(x,y)=
\begin{cases}
      1  & \text{if} \ \ x = y = 1 \\
      2  & \text{if} \ \ x = y = 2 \\
      0  & \text{otherwise.} \\
\end{cases}$$
However,
\begin{align*}
\mF^{-1}(\mF(B^\HH_1((1,1,1))))
    &=\mF^{-1}(\{(0,1),(1,1),(2,1),(1,0)\}) \\
    &\neq \mF_1^{-1}(\{0,1,2\}) \times \mF_2^{-1}(\{0,1\})\\
    &=\mF_1^{-1}(\mF_1(B^\HH_1(1)))\times \mF_2^{-1}(\mF_2(B^\HH_1(1,1))).
\end{align*}
\end{remark}

We next present an upper bound on the capacity of the networks of Family~\ref{fam:e},
showing that the Generalized Network Singleton Bound
of Corollary~\ref{cor:sing} is not met for this family, no matter what the alphabet size is.
Notice that the number of indices 
$i$ for which 
$a_{i}\leq b_{i}$ 
for a network 
of Family~\ref{fam:e}
is 0 whenever~$t>1$.
In particular, the strategy behind the proofs of Theorems~\ref{thm:down} and~\ref{thm:up}
is of no help in this case.

\begin{theorem}
\label{thm:mete}
Let $\mathfrak{E}_t=(\mV,\mE,S,\{T\})$ be a member of
Family~\ref{fam:e}. Let $\mA$ be any network alphabet and let $\mU_S$ be the  set of edges of $\mathfrak{E}_t$ directly connected to $S$.
We have
$$\CC_1(\mathfrak{E}_t,\mA,\mU_S,t) <1.$$
In particular, the Generalized Network Singleton Bound of Corollary~\ref{cor:sing} is not met.
\end{theorem}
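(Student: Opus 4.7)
The plan is to argue by contradiction: assume there exist a network code $\mF$ and an outer code $\mC \subseteq \mA^{2t+1}$ with $|\mC| = q := |\mA|$ that is unambiguous for the channel $\Omega[\mathfrak{E}_t, \mA, \mF, S \to T, \mU_S, t]$. Following Notation~\ref{not:s5}, write $\alpha(x) := \mF_1(\pi_1(x)) \in \mA$ and $\beta(x) := \mF_2(\pi_2(x)) \in \mA$ for $x \in \mC$, so the noise-free output at $T$ is the pair $(\alpha(x), \beta(x)) \in \mA^2$. The strategy exploits the asymmetry $a_1 = t$, $a_2 = t+1$: the adversary can overwrite the entire $V_1$-side using its budget, but cannot overwrite the entire $V_2$-side.

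First I would prove that $\beta : \mC \to \mA$ is a bijection. Given distinct $x, x' \in \mC$, the adversary uses its full budget on the $V_1$-side of each codeword (at most $t$ corruptions, since there are exactly $t$ such edges) to rewrite $\pi_1(x)$ and $\pi_1(x')$ to a common vector $z_1 \in \mA^t$, while leaving both $V_2$-sides intact. The resulting outputs are $(\mF_1(z_1), \beta(x))$ and $(\mF_1(z_1), \beta(x'))$, so unambiguity forces $\beta(x) \neq \beta(x')$. Injectivity together with $|\mC| = q = |\mA|$ upgrades $\beta$ to a bijection onto $\mA$.

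Second I would prove the key structural constraint: $\mF_2$ is constant with value $\beta(x)$ on the entire Hamming ball $B^\HH_t(\pi_2(x)) \subseteq \mA^{t+1}$ for every $x \in \mC$. Given distinct $x, x' \in \mC$ and $u \in B^\HH_t(\pi_2(x))$, the adversary attacks $x$ by rewriting $\pi_2(x)$ to $u$ (feasible because $u$ differs from $\pi_2(x)$ in at most $t$ positions) and attacks $x'$ by overwriting $\pi_1(x')$ with $\pi_1(x)$ (feasible since the $V_1$-side has only $t$ edges). Both outputs share $\alpha$-component $\mF_1(\pi_1(x))$, so unambiguity forces $\mF_2(u) \neq \beta(x')$. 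Letting $u$ vary gives $\beta(x') \notin \mF_2(B^\HH_t(\pi_2(x)))$, and letting $x'$ range over $\mC \setminus \{x\}$ while invoking the surjectivity of $\beta$ collapses $\mF_2(B^\HH_t(\pi_2(x)))$ to the singleton $\{\beta(x)\}$.

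The contradiction then follows from a packing count. Because $\beta$ is injective, $\mF_2$ takes pairwise distinct constant values on the balls $B^\HH_t(\pi_2(x))$ over $x \in \mC$, so these balls are pairwise disjoint in $\mA^{t+1}$. Each has cardinality $\sum_{i=0}^t \binom{t+1}{i}(q-1)^i = q^{t+1} - (q-1)^{t+1}$, which yields
\[
q\bigl(q^{t+1} - (q-1)^{t+1}\bigr) \leq q^{t+1},
\]
and this simplifies to $q^{t} \leq (q-1)^{t}$, contradicting $q \geq 2$. The main obstacle is identifying the correct pair of asymmetric adversarial attacks in the second step: once $\mF_2$ is pinned to a constant on every radius-$t$ ball around $\pi_2(\mC)$, the counting closes by an elementary volume estimate.
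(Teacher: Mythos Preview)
Your proof is correct and follows a genuinely different route from the paper. The paper first observes that $\mC$ must be an MDS code of length and minimum distance $2t+1$, deduces that $\mF_1$ restricted to $\pi_1(\mC)$ is a bijection onto $\mA$, and then builds an explicit collision between two fan-out sets using the surjectivity of $\mF_1$. You instead work on the $\mF_2$-side: because the $V_1$-block has exactly $t$ edges, the adversary can always overwrite it completely, which immediately forces $\beta = \mF_2 \circ \pi_2$ to be a bijection on $\mC$ and, more strongly, pins $\mF_2$ to the constant $\beta(x)$ on every radius-$t$ ball around $\pi_2(x)$. The contradiction then follows from a clean volume count rather than a hand-crafted collision. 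Your approach is more self-contained---it bypasses the MDS structure entirely---and makes the role of the asymmetry $a_1 = t < t+1 = a_2$ fully explicit; the paper's argument is terser but leans on preliminary structural facts about the code before reaching the contradiction.
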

\begin{proof}
Let $q:=|\mA|$.
Suppose by way of contradiction that there exists an unambiguous code~$\mC$ of size $q$ for $\Omega[\mathfrak{E}_{t},\mA,\mF,S \to T,\mU_S,t]$, for some network code  $\mF=\{\mF_{1},\mF_{2}\}$.  Since $\mC$ is unambiguous, it must be the case that $\mC$ has minimum distance equal to at least $2t+1$, making $\mC$ an MDS code of length and minumum distance $2t+1$.
Thus, $\pi_1(\mC)$ must contain $q$ distinct elements. We also observe that the restriction of $\mF_1$ to $\pi_1(\mC)$ must be injective, since otherwise the intersection of fan-out sets would be nonempty for some pair of codewords in $\mC$. Therefore we have that $\mF_1(\pi_1(\mC))=\mA$. 

Putting all of the above together, we conclude that there must exist $x, y \in \mC$ and $e\in \mA^{2t+1}$ such that $x\neq y$, $\mF_1(\pi_1(x))=\mF_1(\pi_1(e))$, and $d^H(\pi_1(e),\pi_1(y))\le t-1$. We thus have
\begin{align*}
x'&:=(x_1,\ldots,x_t,x_{t+1},y_{t+2},y_{t+3},\ldots,y_{2t+1}) \in B^\HH_t(x),\\
y'&:=(e_1,\ldots,e_t,x_{t+1},y_{t+2},y_{t+3},\ldots,y_{2t+1}) \in B^\HH_t(y).  \end{align*}
Finally, observe that $(\mF_1(\pi_1(x')),\mF_2(\pi_2(x')))=(\mF_1(\pi_1(y')),\mF_2(\pi_2(y')))\in \Omega(x)\cap \Omega(y)$, a contradiction.
\end{proof}

We now turn 
to the networks of Family~\ref{fam:a}, which are probably the most natural generalization of the Diamond Network of Section~\ref{sec:diamond}.
We show that none of the members of 
Family~\ref{fam:a} meet the Generalized Network Singleton Bound of Corollary~\ref{cor:sing} with equality, no matter what the underlying alphabet is.

\begin{theorem}
\label{thm:meta}
Let $\mathfrak{A}_t=(\mV,\mE,S,\{T\})$ be a member of
Family~\ref{fam:a}. Let $\mA$ be any network alphabet and let $\mU_S$ be the  set of edges of $\mathfrak{A}_t$ directly connected to $S$.
We have
$$\CC_1(\mathfrak{A}_t,\mA,\mU_S,t) <t.$$
In particular, the Generalized Network Singleton Bound of Corollary~\ref{cor:sing} is not met.
\end{theorem}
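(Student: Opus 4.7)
The plan is to argue by contradiction. Set $q := |\mA|$ and suppose that $\mC$ is an unambiguous code of size $|\mC| = q^t$ for the channel $\Omega[\mathfrak{A}_t, \mA, \mF, S \to T, \mU_S, t]$; here $\mathfrak{A}_t = ([t, 2t], [t, t])$, so $a_1 = b_1 = t$ and $a_2 = 2t$, $b_2 = t$. Since $a_1 \le b_1$, Corollary~\ref{cor:ub} applies and gives
\[
\sum_{t_1=0}^t \binom{t}{t_1}(q-1)^{t_1} \sum_{x \in \mC} \left|\mF_2(B^\HH_{t-t_1}(\pi_2(x)))\right| \le q^{2t}.
\]
Using the trivial bound $|\mF_2(B^\HH_{t-t_1}(\pi_2(x)))| \ge 1$ together with the binomial identity $\sum_{t_1=0}^t \binom{t}{t_1}(q-1)^{t_1} = q^t$, the left-hand side is already at least $|\mC| \cdot q^t = q^{2t}$. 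Hence the bound must be saturated term-by-term; in particular $|\mF_2(B^\HH_t(\pi_2(x)))| = 1$ for every $x \in \mC$, i.e., $\mF_2$ is constant on each Hamming ball of radius $t$ centered at the projection of a codeword.

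The next step is a simple geometric observation: any two vectors in $\mA^{2t}$ are at Hamming distance at most $2t$, and therefore Hamming balls of radius $t$ around any two points of $\mA^{2t}$ always intersect pairwise. Combined with the constancy just established, this forces $\mF_2$ to take a single common value $c \in \mA^{t}$ on the entire union $\bigcup_{x \in \mC} B^\HH_t(\pi_2(x))$.

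To close the argument, I will show that every codeword has the same fan-out through the channel. For any $x \in \mC$ and any error pattern $(e_1, e_2) \in \mA^t \times \mA^{2t}$ of total Hamming weight at most $t$, the perturbed word $\pi_2(x) + e_2$ lies in $B^\HH_t(\pi_2(x))$, so $\mF_2(\pi_2(x) + e_2) = c$. On the other hand, taking $e_2 = 0$ and $e_1 = v - \pi_1(x)$ for an arbitrary $v \in \mA^t$ (whose Hamming weight is automatically at most $t$), the first output block equals $\mF_1(v)$ and sweeps out all of $\mF_1(\mA^t)$. Hence the fan-out of every $x \in \mC$ equals $\mF_1(\mA^t) \times \{c\}$ and does not depend on $x$, contradicting the unambiguity of $\mC$ since $|\mC| = q^t \ge 2$.

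I expect the main obstacle to be the rigorous verification that saturating Corollary~\ref{cor:ub} really forces $|\mF_2(B^\HH_t(\pi_2(x)))| = 1$ for \emph{every} codeword $x$: if any single such quantity exceeded $1$, the strict inequality in the bound would persist and contradict the derived equality. Once this constancy is in hand, the remainder of the argument is purely geometric, exploiting the fact that $a_2 = 2t$ is small enough relative to the radius $t$ that all radius-$t$ balls in $\mA^{a_2}$ necessarily overlap.
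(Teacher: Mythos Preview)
Your proof is correct, and it takes a genuinely different route from the paper's argument. The paper does not invoke Corollary~\ref{cor:ub} at all: instead it argues directly that $\mF_2$ restricted to $\pi_2(\mC)$ must be a bijection onto $\mA^t$, then uses a volume inequality (comparing $\sum_{k=0}^t \binom{2t}{k}(q-1)^k$ against $\sum_{k=0}^t \binom{t}{k}(q-1)^k$) to exhibit \emph{some} pair of codewords whose projected radius-$t$ balls intersect, and finally exploits surjectivity of $\mF_2|_{\pi_2(\mC)}$ to manufacture a collision between fan-out sets of a suitable pair $x,a$.

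Your approach is more systematic and arguably cleaner: by saturating the First Packing Bound you force $\mF_2$ to be constant on \emph{every} ball $B^\HH_t(\pi_2(x))$, and then the elementary observation that any two radius-$t$ balls in $\mA^{2t}$ meet (since the ambient diameter is $2t$) immediately collapses all fan-out sets to a single common set. This avoids both the binomial estimate and the surjectivity step, and it shows more: not just that two fan-out sets intersect, but that \emph{all} of them coincide. The paper's proof, on the other hand, is self-contained and does not rely on the machinery of Theorem~\ref{thm:down}.
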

\begin{proof}
Let $q:=|\mA|$. Towards a contradiction, assume that
the rate $t$ is achievable. That is, there exists an unambiguous code $\mC$ of size $q^t$ for $\Omega[\mathfrak{A}_{t},\mA,\mF,S \to T,\mU_S,t]$, where $\mF=\{\mF_{1},\mF_{2}\}$ is a network code for $(\mathfrak{A}_t,\mA)$. Note that $|\pi_2(\mC)|=|\mC|$ and that the restriction of
$\mF_2$ to $\pi_2(\mC)$
is injective, as otherwise
the intersection of fan-out sets would be nonempty for some pair of codewords in $\mC$, as one can easily check.

We claim that there must exist two distinct codewords $x,y \in \mC$ such that~$B^\HH_t(\pi_2(x)) \cap B^\HH_t(\pi_2(y)) \neq \emptyset$.
Observe that
\begin{align}
\sum_{x \in \mC} |B^\HH_t(\pi_2(x))| &= q^t\left(\sum_{k=0}^t \binom{2t}{k}(q-1)^k\right) \nonumber \\
&> q^t\left(\sum_{k=0}^t \binom{t}{k}(q-1)^k\right) \nonumber \\
&= q^{2t}, \label{eqn:factorial}
\end{align}
where Equation \eqref{eqn:factorial} follows from the Binomial Theorem. If $B^\HH_t(\pi_2(x)) \cap B^\HH_t(\pi_2(y)) = \emptyset$ for all distinct $x,y\in\mC$, then we would have $$\sum_{x \in \mC} |B^\HH_t(\pi_2(x))| = \left| \bigcup_{x \in \mC} B_t^\HH(\pi_2(x)) \right| \le q^{2t}.$$ Therefore it must be the case that some such intersection is nonempty.
In other words, there exist some distinct $x,y\in \mC$ and $e \in \mA^{2t}$ such that $e \in B^\HH_t(\pi_2(x)) \cap B^\HH_t(\pi_2(y))$. 

By the fact that the restriction of
$\mF_2$ to $\pi_2(\mC)$
is injective and its codomain has size~$q^t=|\pi_2(\mC)|$,
the restriction of
$\mF_2$ to $\pi_2(\mC)$
must be surjective as well.
Thus 
$\mF_2(e) = \mF_2(\pi_2(a))$ for some $a \in \mC$.
If $a=x$, then $\mF_2(e)=\mF_2(\pi_2(x))$ and we have~$(\mF_1(\pi_1(y)),\mF_2(e))=(\mF_1(\pi_1(y)),\mF_2(\pi_2(x))) \in \Omega(x) \cap \Omega(y)$, the intersection of the fan-out sets, which is a contradiction to $\mC$ being unambiguous. A similar argument holds if $a=y$. Otherwise, we still have $(\mF_1(\pi_1(x)),\mF_2(\pi_2(a))) \in \Omega(x) \cap \Omega(a),$ a contradiction. We conclude that there cannot be an unambiguous code of size $q^t$.
\end{proof}

We conclude this section by mentioning that the networks of both Family \ref{ex:u} and Family \ref{fam:d} do achieve the Generalized Network Singleton Bound of Corollary~\ref{cor:sing}: both capacities will be examined in the next section.

\section{Simple 2-Level Networks: Lower Bounds}
\label{sec:2level_lower}

We devote this section to
deriving lower bounds
on the capacity of simple 
2-level networks.
In connection with Theorem \ref{thm:channel},
we note that a rate may be achievable for a simple 2-level network and yet not achievable for every corresponding simple 3-level network. 
We start by establishing the notation for this section.

\begin{notation}
Throughout this section we follow Notation~\ref{not:s5}.
In particular, we work with simple 2-level networks
$\mN=(\mV,\mE,S,\{T\})=([a_1,a_2,\ldots,a_n],[b_1,b_2,\ldots,b_n])$, denoting by $\mU_S$ the set of edges directly connected to the source $S$.
For any $t \ge 1$, we partition the vertices of such a network into the following (disjoint and possibly empty) sets: 
\begin{align*}
    I_{1}(\mN,t)&=\{i \st a_i\geq b_i+2t\},\\
    I_{2}(\mN,t)&=\{i \st a_i\leq b_i\},\\
    I_{3}(\mN,t)&=\{i \st b_i + 1\leq a_i \leq b_i+2t-1\}.
\end{align*}
\end{notation}

\begin{theorem}
\label{thm:lowbound}
Suppose that $\mA$ is a sufficiently large finite field.
Let $t \ge 1$ and define the set~$\Tilde{I}_{3}(\mN,t) =\{i \in I_{3}(\mN,t) \st a_i > 2t\}$. 
Then,
\begin{equation*} 
   \CC_1(\mN,\mA,\mU_S,t)\geq \sum_{i \in I_1(\mN,t)} b_i + \max\left\{X,Y \right\}, 
\end{equation*}
where  
\begin{align*}
X &= \sum_{i\in \Tilde{I}_3(\mN,t)} (a_i-2t) + \max\left\{0,\left(\sum_{i\in I_2(\mN,t)} a_i\right) - 2t\right\}, \\
Y &= \max\left\{0,\left(\sum_{i\in I_2(\mN,t)} a_i + \sum_{i\in I_3(\mN,t)} b_i\right)- 2t\right\}.
\end{align*}
\end{theorem}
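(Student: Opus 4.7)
The plan is to exhibit two explicit achievability schemes, corresponding to the two terms $X$ and $Y$ inside the inner $\max$, both layered on top of a common contribution $\sum_{i\in I_1(\mN,t)} b_i$ coming from the ``self-correcting'' intermediate nodes, and then to take the better of the two. Since $\mA$ is assumed to be a sufficiently large finite field, Reed--Solomon codes of any prescribed (length, dimension) with length at most $|\mA|$ are available, and will serve as the main building block. For the $I_1$ contribution, each $V_i$ with $i\in I_1(\mN,t)$ satisfies $a_i\ge b_i+2t$, so $V_i$ can interpret its $a_i$-symbol input as a (possibly corrupted) codeword of an $[a_i,b_i]$ MDS code with minimum distance $\ge 2t+1$, decode up to $t$ errors locally, and forward the recovered $b_i$ information symbols on its outgoing edges; these symbols reach $T$ intact no matter where the adversary spends its $t$-budget.

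For Strategy $X$: on top of the $I_1$ scheme, at each $V_i$ with $i\in \Tilde{I}_3(\mN,t)$ we install a local $[a_i,\,a_i-2t]$ MDS code (well defined since $a_i>2t$, and the $a_i-2t$ recovered information symbols fit on $a_i-2t\le b_i$ outgoing edges because $a_i\le b_i+2t-1$), contributing $\sum_{i\in\Tilde{I}_3(\mN,t)}(a_i-2t)$ clean information symbols. For $i\in I_2(\mN,t)$ we take $\mF_i$ to be the identity (justified by Remark~\ref{rem:id} since $a_i\le b_i$) and impose a single global MDS code of length $\sum_{i\in I_2(\mN,t)} a_i$ with redundancy $2t$ across those coordinates, contributing $\max\{0,\,\sum_{i\in I_2(\mN,t)}a_i-2t\}$. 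Because the adversary's total budget is $t$, every local $\Tilde{I}_3$ decoder and the global $I_2$ decoder sees at most $t$ errors in its own window and therefore decodes correctly, regardless of how the errors are distributed.

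For Strategy $Y$: on top of the same $I_1$ scheme, for each $i\in I_3(\mN,t)$ we have $V_i$ forward verbatim a fixed subset of $b_i$ of its $a_i$ incoming source symbols (with filler on any remaining outgoing edge), and for $i\in I_2(\mN,t)$ we again take $\mF_i$ to be the identity. The source encodes the $\sum_{i\in I_2(\mN,t)}a_i+\sum_{i\in I_3(\mN,t)}b_i$ ``useful'' coordinates (those into $I_2$ nodes, plus the to-be-forwarded ones into $I_3$ nodes) as a single MDS codeword with redundancy $2t$. A corruption on a useful source edge becomes exactly one error at the terminal's corresponding input, a corruption on a non-forwarded $I_3$ source edge is invisible, and $I_1$ corruptions are absorbed upstream, so the relevant block arriving at $T$ contains at most $t$ errors and the global MDS corrects them. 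Taking the maximum of $X$ and $Y$ and adding the $I_1$ contribution yields the desired bound. The main remaining technical point is to verify unambiguity in the sense of Definition~\ref{def:capacities}; this reduces to the MDS decoding guarantee of each sub-code together with the bookkeeping observation that the adversary's shared $t$-budget never forces more than $t$ errors onto any single sub-code's input, the only mildly delicate piece being the choice of forwarded $b_i$ coordinates in Strategy $Y$ so that the composite encoder/decoder is well-defined and consistent across all nodes.
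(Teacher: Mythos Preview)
Your proposal is correct and follows essentially the same approach as the paper: local MDS decoding at $I_1$-nodes to secure $\sum_{i\in I_1}b_i$ clean symbols, then two alternative schemes---local $[a_i,a_i-2t]$ MDS decoding at $\Tilde{I}_3$-nodes plus a global MDS code across the $I_2$-coordinates (your Strategy~$X$), or plain forwarding of $b_i$ coordinates at $I_3$-nodes and $a_i$ at $I_2$-nodes under a single global MDS code (your Strategy~$Y$)---with the better of the two taken. The only cosmetic difference is that the paper uses a $[b_i+2t,b_i]$ MDS code on a subset of the incoming edges at each $I_1$-node rather than your $[a_i,b_i]$ code on all of them, which is immaterial.
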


\begin{proof} 
We will construct a network code $\mF = \{\mF_1,\ldots,\mF_n\}$ for~$(\mN,\mA)$ and an unambiguous outer code in various steps, indicating how each intermediate node operates. The alphabet~$\mA$ should be 
a large enough finite field
that admits MDS codes with parameters as described below. 

For each $i \in I_1(\mN,t)$, let~$\mF_i$ be a minimum distance decoder for an MDS code with parameters $\left[b_i+2t,\ b_i,\ 2t+1\right]$ that is sent along the first $b_i +2t$ edges incoming to $V_i$. The source sends arbitrary symbols along the extraneous $a_i - (b_i+2t)$ edges and these will be disregarded by $V_i$. Via the intermediate nodes indexed by $I_1(\mN,t)$, we have thus achieved $\sum_{i\in I_{1}(\mN,t)} b_i$ information symbols decodable by the terminal. 

In the remainder of the proof we will show that either an extra $X$ symbols or an extra $Y$ symbols~(i.e., an extra $\max\{X,Y\}$ symbols in general) can be successfully transmitted. We show how the two quantities $X$ and $Y$ can be achieved separately, so that we may choose the better option of the two and achieve the result.

\begin{enumerate}
\item For each $i \in \Tilde{I}_3(\mN,t)$, let $\mF_i$ be a minimum distance decoder for an MDS code with parameters $\left[a_i,\ a_i-2t,\ 2t+1\right]$ that sends its output across the first $a_i -2t$ outgoing edges from $V_i$. Arbitrary symbols are sent along the remaining outgoing edges and they will be ignored at the destination. We disregard all intermediate nodes with indices from the set $I_3(\mN,t)\setminus \Tilde{I}_3(\mN,t)$ as they will not contribute to the lower bound  in this scheme. 
The symbols sent through the vertices indexed by ${I_2}(\mN,t)$ will be globally encoded via an 
MDS code with parameters
$$\left[\sum_{i\in I_{2}(\mN,t)} a_i,\ \left(\sum_{i\in I_{2}(\mN,t)} a_i\right) - 2t,\ 2t+1\right].$$
The intermediate nodes indexed by ${I_2}(\mN,t)$
will simply forward the incoming symbols along the first $a_i$ outgoing edges, sending arbitrary symbols along the other $b_i - a_i$ outgoing edges. Symbols sent on these outgoing edges will be disregarded at the destination. If~$\sum_{i\in I_{2}(\mN,t)} a_i \leq 2t$, we instead ignore the vertices indexed by $I_2(\mN,t)$ in this scheme. In conclusion, via the intermediate nodes indexed by $I_2(\mN,t)\cup I_3(\mN,t)$, we have added $X$ information symbols decodable by the terminal.

\item For $i \in I_2(\mN,t) \cup I_3(\mN,t)$, let each $\mF_i$ simply forward up to $b_i$ received symbols as follows. If $i\in I_2(\mN,t)$, then $a_i$ symbols will be forwarded along the first $a_i$ outgoing edges and arbitrary symbols will be sent along the other $b_i - a_i$ outgoing edges. These edges will be disregarded at the destination. If $i\in I_3(\mN,t)$, then $b_i$ symbols sent over the first~$b_i$ edges incoming to $V_i$ will be forwarded. The source $S$ will send arbitrary symbols along the other $a_i - b_i$ incoming edges, which will however be ignored. At the concatenation of all (non-arbitrary) coordinates to an intermediate node with index in $I_2(\mN,t)\cup I_3(\mN,t)$, the outer code will be an MDS code with parameters
$$\left[\sum_{i\in I_{2}(\mN,t)} a_i + \sum_{i\in I_{3}(\mN,t)} b_i,\ \left(\sum_{i\in I_{2}(\mN,t)} a_i + \sum_{i\in I_{3}(\mN,t)} b_i\right) -2t ,\ 2t+1\right].$$ 
The MDS code is then decoded at the terminal. If $\sum_{i\in I_{2}(\mN,t)} a_i + \sum_{i\in I_{3}(\mN,t)} b_i \leq 2t$, we ignore the coordinates corresponding to $I_2(\mN,t)\cup I_3(\mN,t)$ in this scheme.
In conclusion, via the intermediate nodes in $I_2(\mN,t)\cup I_3(\mN,t)$ we have added $Y$ information symbols decodable at the terminal.
\end{enumerate}
This concludes the proof.
\end{proof}

We note that 
Theorem \ref{thm:lowbound} does not always yield a positive value, even for networks where the capacity is positive. For example, 
for the member
$\mathfrak{A}_2$
of 
Family \ref{fam:a}, Theorem \ref{thm:lowbound} gives a lower bound of 0 for the capacity. However, the following result shows that $\CC_1(\mathfrak{A}_2,\mA,\mU_S,2)$ is, in fact, positive.

\begin{proposition}
\label{prop:atleasta}
Let $\mathfrak{A}_t=(\mV,\mE,S,\{T\})$ be a member of
Family~\ref{fam:a}. Let $\mA$ be any network alphabet and let $\mU_S$ be the  set of edges of $\mathfrak{A}_t$ directly connected to $S$.
We have
$$\CC_1(\mathfrak{A}_2,\mA,\mU_S,2) \ge 1.$$
\end{proposition}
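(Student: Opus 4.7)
The plan is to exhibit an explicit unambiguous outer code of size $|\mA|$ together with a suitable network code, which establishes the rate $\log_{|\mA|} |\mA| = 1$. I take the outer code to be the repetition code $\mC = \{(x,x,x,x,x,x) : x \in \mA\} \subseteq \mA^{6}$, so that two symbols travel to $V_1$ and four to $V_2$. At $V_1$ the rule is: if the two incoming symbols $y_1, y_2$ agree, forward the constant pair $(y_1, y_1)$; otherwise forward the non-constant pair $(y_1, y_2)$, which serves as an ``ambiguity flag.'' At $V_2$ the rule is: if at least three of the four incoming symbols agree on a value $w$, forward $(w, w)$; otherwise forward any fixed non-constant pair, again as an ambiguity flag. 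The terminal's decoder inspects $V_2$'s output first: if it is a constant pair $(w, w)$, it outputs $w$; otherwise it falls back on $V_1$'s output and returns the common value if $V_1$'s output is constant.

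Next I would verify correctness by a case analysis on how the (at most) two corruptions are distributed, splitting into (i) both corruptions on source-to-$V_1$ edges, (ii) both on source-to-$V_2$ edges, and (iii) one on each side. The key observation is that $V_2$'s output $(w,w)$ with $w \ne x$ is impossible, because $w$ would have to occupy at least three of the four coordinates seen by $V_2$, whereas at most two of those coordinates differ from $x$; so whenever $V_2$ is ``confident'' it is correct. In case (i), $V_2$'s input is pristine and yields $(x,x)$, which overrides any possibly-incorrect confident output of $V_1$; in case (ii), $V_1$'s input is pristine and yields $(x,x)$, and $V_2$ either identifies $x$ as the three-majority or flags ambiguity; in case (iii), $V_2$ sees at most one corrupted coordinate, so $x$ is still the three-majority and $V_2$ outputs $(x,x)$.

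Combining these observations, for every admissible corruption of the codeword $(x,\ldots,x)$ the terminal's decoder returns $x$. Hence the fan-out sets associated with distinct codewords of $\mC$ are pairwise disjoint, so $\mC$ is unambiguous for the induced channel, and $|\mC| = |\mA|$ gives $\CC_1(\mathfrak{A}_2, \mA, \mU_S, 2) \ge 1$. The step I expect to require the most care is ensuring that the ambiguity-flag convention at $V_1$ and $V_2$ is precise enough that the decoder's priority rule is well-defined on \emph{every} element of $\mA^{2} \times \mA^{2}$ rather than only on outputs actually produced by honest intermediate nodes; fixing any default ``confident'' value when the decoder sees two ambiguity flags suffices, because the case analysis above rules out both $V_1$ and $V_2$ flagging simultaneously.
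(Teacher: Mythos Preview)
Your proposal is correct and takes essentially the same approach as the paper: a six-fold repetition code, $V_1$ forwarding its two inputs (your ``flag'' rule is functionally identical to plain forwarding), $V_2$ implementing a three-out-of-four majority with a non-constant flag pair on failure, and the terminal trusting $V_2$'s constant output first and falling back on $V_1$ otherwise. Your case analysis and the explicit handling of the double-flag situation are slightly more detailed than the paper's exposition, but the scheme and the key observation (that $V_2$ can never confidently output a wrong symbol) are the same.
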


\begin{proof}
Fix two distinct alphabet symbols $*, *'\in \mA$.
The source $S$ encodes each element of~$\mA$ using a six-times repetition code. Vertex~$V_1$ 
simply forwards the received symbols, while vertex~$V_2$ proceeds as follows.
If, on the four incoming edges, an alphabet symbol appears at least three times, $V_2$ forwards that symbol on both outgoing edges.
Otherwise $V_2$ outputs~$*$ on one edge, and~$*'$ on the other. At destination, if the incoming symbols from~$V_{2}$ match, then the terminal $T$ decodes to that symbol. Otherwise, it decodes to the alphabet symbol sent from~$V_{1}$. All symbols from $\mA$ can be sent with this scheme, including $*$ and $*'$, giving a capacity of at least $1$. 
\end{proof}

We also give the following less sophisticated lower bound on the capacity of simple 2-level networks. The idea behind the proof is to construct a subnetwork of the original one where a lower bound for the capacity can be easily established.

\begin{proposition}
\label{prop:lin} 
Suppose that $\mA$ is a sufficiently large finite field.
For all $t \ge 0$ we have
\begin{equation} \label{albe}
    \CC_1(\mN,\mA,\mU_S,t) \ge  \max\left\{0, \, \sum_{i=1}^n\min\{a_i,b_i\} -2t\right\}.
\end{equation}
\end{proposition}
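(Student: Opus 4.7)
The plan is to build an explicit network code $\mF$ together with an end-to-end outer code achieving the claimed rate. If $\sum_{i}\min\{a_i,b_i\}-2t\le 0$ there is nothing to prove because the singleton outer code is always unambiguous, yielding rate $0$. So assume $N:=\sum_{i=1}^n\min\{a_i,b_i\}>2t$ and write $c_i:=\min\{a_i,b_i\}$.

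For each intermediate vertex $V_i$, I would designate the first $c_i$ incoming and the first $c_i$ outgoing edges as ``active'' and the remaining ones as ``extraneous.'' Define $\mF_i$ to forward the $c_i$ active incoming symbols to the $c_i$ active outgoing edges in order, and to output an arbitrary fixed symbol on every extraneous outgoing edge. At the source, I would send arbitrary fixed symbols on every extraneous incoming edge of each $V_i$, so that they carry no information. This reduces the network, from the point of view of information flow, to a collection of $N$ parallel ``virtual'' edges from $S$ to $T$ indexed by $\bigcup_i \{(i,1),\dots,(i,c_i)\}$.

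Next I would encode using an MDS outer code of parameters $[N,N-2t,2t+1]$ over $\mA$ placed on the $N$ active coordinates; such a code exists (e.g.\ Reed--Solomon) as soon as $|\mA|$ is at least $N$, which is guaranteed since $\mA$ is a sufficiently large finite field. The terminal, upon receiving its $b_1+\cdots+b_n$ symbols, extracts the $N$ symbols coming along the active outgoing edges (discarding the extraneous ones, whose values are known in advance) and runs the MDS bounded-distance decoder. The outer code has size $|\mA|^{N-2t}$, giving rate $N-2t$ as required.

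The one point that requires verification is the error-count: each adversarial corruption on an edge in $\mU_S$ affects at most one of the $N$ active symbols received by $T$. Indeed, if the corrupted edge is extraneous then $\mF_i$'s output is unchanged on every active outgoing edge; if the corrupted edge is the $k$-th active incoming edge of $V_i$ then, by the forwarding definition, only the $k$-th active outgoing edge of $V_i$ is altered. Hence the number of coordinate errors in the length-$N$ vector passed to the MDS decoder is at most $t$, which lies inside the decoding radius. This establishes that the constructed outer code is unambiguous for $\Omega[\mN,\mA,\mF,S\to T,\mU_S,t]$ and yields the lower bound \eqref{albe}. The only mildly subtle step is the bookkeeping that ties one edge corruption to at most one coordinate error at the terminal, but the explicit forwarding rule makes this immediate.
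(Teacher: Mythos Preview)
Your proof is correct and follows essentially the same approach as the paper: both reduce each intermediate node to $c_i=\min\{a_i,b_i\}$ ``active'' edges with pure forwarding, send an $[N,N-2t,2t+1]$ MDS outer code over the resulting $N$ virtual edges, and decode at the terminal. The paper phrases the reduction as passing to a subnetwork $\mN'$ (invoking that edge removal cannot increase capacity), whereas you implement it directly on $\mN$ via fixed symbols on extraneous edges; your explicit one-corruption-one-coordinate bookkeeping is a nice touch that the paper leaves implicit.
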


\begin{proof}
We will construct a network $\mN'=(\mV,\mE',S,\{T\})$ such that $$\CC_1(\mN,\mA,\mU_S,t) \ge \CC_1(\mN',\mA,\mU'_S,t),$$
where $\mU'_S$ is the set of edges directly connected to $S$ in the new network $\mN'$.
For $i \in I_2(\mN,t)$, remove the first $b_i - a_i$ outgoing edges from $V_i.$ Similarly, for each~$i \in \{1,\ldots,n\} \setminus I_2(\mN,t)$, remove the first $a_i-b_i$
incoming edges to $V_i$. Denote the new network obtained in this way by~$\mN'$ and observe that all intermediate nodes $V_i'$ in~$\mN'$ have $\degin(V_i')=\degout(V_i')$. This implies that~$I_2(\mN',t)=\{1,\ldots,n\}.$ Since removing edges cannot increase the capacity (technically, this is a consequence of Proposition~\ref{prop:finer}), we have $\CC_1(\mN,\mA,\mU_S,t) \ge \CC_1(\mN',\mA,\mU_S,t).$ We will give a coding scheme for $\mN'$ that achieves the right-hand side of~\eqref{albe} to conclude the proof.
Consider the edge-cut 
$$\mE'= \bigcup_{i=1}^{n} \inn(V_{i}').$$
 By the definition of $\mN'$, we have $|\mE'| = \sum_{i=1}^n\min\{a_i,b_i\}$. 
 Since the left-hand side of~\eqref{albe} is always non-negative, we shall assume
 $|\mE'| > 2t$ without loss of generality.
 Under this assumption, we will prove that the rate 
 $|\mE'| -2t$
 is achievable.
 
 We choose the outer code $\mC$ to be an MDS code with parameters $[|\mE'|,|\mE'|-2t,2t+1]$ over the finite field~$\mA$. All of the intermediate nodes forward incoming packets 
 (the number of outgoing edges will now allow this). The terminal receives a codeword of length $|\mE'|$ and decodes it according to the chosen MDS code. \end{proof}

\begin{remark}
The lower bound of Theorem~\ref{thm:lowbound} 
is larger than the one of Proposition~\ref{prop:lin}, and it can be 
strictly larger.
For example, consider the simple 2-level network $\mN=([2,5,6],[2,2,2])$. By the Generalized Network Singleton Bound of Corollary~\ref{cor:sing}, we have ~$\CC_1(\mN,\mA,\mU_S,2) \le 4$. Theorem \ref{thm:lowbound} gives a lower bound of $3$ for $\CC_1(\mN,\mA,\mU_S,2)$, while Proposition \ref{prop:lin} gives a lower bound of 2 for the same quantity.
\end{remark}

As an application of  Theorem~\ref{thm:lowbound} we provide a family of simple 2-level networks where
the Generalized Network Singleton Bound of Corollary~\ref{cor:sing} is always met with equality.

\begin{corollary}[see \cite{beemer2021curious}]
\label{cor:conf}
Suppose that $\mA$ is a sufficiently large finite field.
Let $t \ge 0$ and suppose that
$I_3(\mN,t) = \emptyset$. Then the Generalized Network Singleton Bound of Corollary~\ref{cor:sing} is achievable.
\end{corollary}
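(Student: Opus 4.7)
The plan is to show that under the hypothesis $I_3(\mN,t)=\emptyset$, the lower bound provided by Theorem~\ref{thm:lowbound} exactly matches the value of the Generalized Network Singleton Bound of Corollary~\ref{cor:sing} evaluated on a single, natural partition of $\{1,\ldots,n\}$. Since the capacity is sandwiched between these two quantities, both inequalities must be equalities, proving achievability.

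First, I would observe that when $I_3(\mN,t)=\emptyset$, its subset $\tilde{I}_3(\mN,t)$ is also empty. Substituting this into Theorem~\ref{thm:lowbound}, the two terms $X$ and $Y$ collapse to a common value $\max\{0,\sum_{i\in I_2(\mN,t)} a_i - 2t\}$, and the theorem yields
\[
\CC_1(\mN,\mA,\mU_S,t) \;\geq\; \sum_{i\in I_1(\mN,t)} b_i + \max\!\left\{0,\, \sum_{i\in I_2(\mN,t)} a_i - 2t\right\}.
\]
Next, I would specialize the Generalized Network Singleton Bound of Corollary~\ref{cor:sing} to the partition $P_1 = I_1(\mN,t)$, $P_2 = I_2(\mN,t)$ (which is a valid partition precisely because $I_3(\mN,t)=\emptyset$). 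This immediately produces the upper bound
\[
\CC_1(\mN,\mA,\mU_S,t) \;\leq\; \sum_{i\in I_1(\mN,t)} b_i + \max\!\left\{0,\, \sum_{i\in I_2(\mN,t)} a_i - 2t\right\}.
\]

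Combining the two displayed inequalities, both become equalities, and in particular the minimum over partitions in Corollary~\ref{cor:sing} is attained by the capacity, i.e.\ the bound is achievable. The main (and essentially only) obstacle is verifying that the two constants match; this is routine once one notes that for $i\in I_1(\mN,t)$ the structural constraint $a_i \geq b_i+2t$ guarantees that moving any such $i$ from $P_1$ to $P_2$ cannot decrease the partition value, while for $i\in I_2(\mN,t)$ the constraint $a_i\leq b_i$ makes $P_2$ the favorable assignment. No nontrivial construction is required beyond the one built inside the proof of Theorem~\ref{thm:lowbound}, which, in the special case $I_3=\emptyset$, reduces to routing the $I_1$-block with MDS decoders at its intermediate nodes and applying a single global MDS outer code across the $I_2$-block with forwarding at those nodes.
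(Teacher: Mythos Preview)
Your proposal is correct and follows essentially the same approach as the paper: apply Theorem~\ref{thm:lowbound} with $\tilde I_3(\mN,t)=\emptyset$ to get the lower bound, then instantiate Corollary~\ref{cor:sing} with $P_1=I_1(\mN,t)$, $P_2=I_2(\mN,t)$ to get a matching upper bound. Your final paragraph about the partition $(I_1,I_2)$ being optimal is superfluous, since once capacity equals the value for one partition, it automatically equals the minimum over all partitions (capacity $\le$ minimum $\le$ that value $=$ capacity).
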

\begin{proof}
Since $I_3(\mN,t) = \emptyset$, $\Tilde{I}_3(\mN,t)$ as defined in Theorem \ref{thm:lowbound} is also empty. Therefore Theorem~\ref{thm:lowbound} gives 
\begin{equation}
\label{eqn:eql-sb}
\CC_1(\mN,\mA,\mU_S,t)\geq \sum_{i\in I_{1}(\mN,t)}b_i+\max\left\{0,\left(\sum_{i\in I_{2}(\mN,t)} a_i\right) -2t\right\}.    
\end{equation}
Choosing $P_1 = I_1(\mN,t)$ and $P_2 = I_2(\mN,t)$ in Corollary~\ref{cor:sing} gives us that the Generalized Network Singleton (upper) Bound is equal to the right-hand side of \eqref{eqn:eql-sb}.
\end{proof}

We include an example illustrating how Corollary~\ref{cor:conf} is applied in practice.

\begin{example}
Take $\mN=([12,8,2,2,1],[5,2,4,3,1])$ and let $\mA$ be a sufficiently large finite field. We want to compute $$\CC_1(\mN,\mA,\mU_S,3).$$ We have $I_1(\mN,3) = \{1,2\}$, $I_2(\mN,t) = \{3,4,5\}$, and $I_3(\mN,t)=\emptyset$. The intermediate nodes indexed by $i \in I_1(\mN,3)$ use local MDS decoders of dimension $b_i$ to retrieve $\sum_{i\in I_{1}(\mN,t)}b_i$ information symbols. That is, $V_1$ uses an $[11,5,7]$ MDS decoder and $V_2$ uses an $[8,2,7]$ MDS decoder. The intermediate nodes indexed by $I_2(\mN,3)$ are supposed to cooperate with each other and can be seen as one intermediate node with $5$ incoming and $8$ outgoing edges. 
In the notation of the proof of Corollary~\ref{cor:conf}, we have~$\sum_{i\in I_{2}(\mN,t)} a_i < 2t$. In this case, 
the vertices indexed by $I_{2}(\mN,t)$ are disregarded. 
In total, the Generalized Network Singleton Bound of Corollary~$\ref{cor:sing}$, whose value is 7, is met with equality. 
\end{example}

We finally turn to the networks of Families~\ref{fam:c} and~\ref{fam:d}, which we did not treat in Section~\ref{sec:upper}.
The next two results compute 
the capacities of these families and prove, in particular, that the Generalized Network Singleton Bound of Corollary~\ref{cor:sing}
is attained.
 This, along with the upper bounds of Section~\ref{sec:upper} for the families \ref{fam:a},\ref{ex:s} and~\ref{fam:e}, demonstrates that emptiness of $I_3(\mN,t)$ is far from a complete characterization of the Generalized Network Singleton Bound achievability~(i.e., Corollary \ref{cor:conf} is not biconditional); see Remark \ref{rem:i3crit}.

\begin{theorem}
\label{thm:metc}
Let $\mathfrak{C}_t=(\mV,\mE,S,\{T\})$ be a member of
Family~\ref{fam:c}. Let $\mA$ be any network alphabet and let $\mU_S$ be the  set of edges of $\mathfrak{C}_t$ directly connected to $S$. We have
$$\CC_1(\mathfrak{C}_t,\mA,\mU_S,t) = 1.$$ In particular, the Generalized Network Singleton Bound of Corollary~\ref{cor:sing} is met with equality.
\end{theorem}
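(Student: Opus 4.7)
The upper bound $\CC_1(\mathfrak{C}_t,\mA,\mU_S,t) \le 1$ is already recorded in Family~\ref{fam:c}: it follows by applying Theorem~\ref{sbound} with the edge-cut $\mE' = \mU_S$ consisting of all $2t+1$ source edges. Since $|\mE'\setminus \mU_S| = 0$ and $|\mE'\cap \mU_S| = 2t+1$, we obtain $\max\{0, (2t+1)-2t\} = 1$. So the entire content of the theorem lies in matching this bound from below, for arbitrary $\mA$.

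For the lower bound the plan is to exhibit an outer code of size $|\mA|$ together with a network code for $(\mathfrak{C}_t,\mA)$ rendering the code unambiguous. Take the repetition code
\[
\mC = \{(x,x,\ldots,x) : x \in \mA\} \subseteq \mA^{2t+1},
\]
let $V_1$ be the identity function, and define $\mF_{V_2}\colon \mA^{t+1}\to \mA^t$ so that the following two properties hold: (i) if all inputs are equal to some $v$ then the output is the constant vector $(v,v,\ldots,v)$; (ii) for every input that is not constant, the output is not a constant vector of $\mA^t$, and moreover the output encodes enough information about the positions of the various input values that, when combined with $V_1$'s output, the source symbol can be recovered. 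The crucial point enabling (i) to be exploited is a pigeonhole observation: since the adversary has budget $t$ and $V_2$ has $t+1$ inputs, the $V_2$-inputs originating from source symbol $x$ can become the constant tuple $(v,\ldots,v)$ only if $v=x$. Hence a constant output from $V_2$ unambiguously identifies the source.

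The terminal's decoder then proceeds in two stages. If the $V_2$-block of the received word is the constant vector $(v,\ldots,v)$, it outputs $v$; this is correct by (i). Otherwise, the decoder knows that $V_2$ saw at least one corrupted input, so $k_2 \ge 1$ and hence $k_1 \le t-1$, so that the $V_1$-block (which is a copy of $V_1$'s inputs) contains at least one correct symbol $x$. The structural information placed into $V_2$'s non-constant outputs, together with the $V_1$-block, is then used to single out $x$ among the candidates. This is precisely the mechanism already used for $\mathfrak{A}_1 = \mathfrak{C}_1$ (the Diamond Network) in Theorem~\ref{thm:diamond_cap}, except that here no alphabet symbol is sacrificed because $V_2$ has enough output bandwidth ($t$ edges) to describe its non-constant configurations and because $V_1$ has enough output bandwidth ($t$ edges) to preserve all the uncorrupted source symbols it received.

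The main obstacle is making (ii) precise and verifying, by a case analysis on the pair $(k_1,k_2)$ with $k_1+k_2\le t$, that the fan-out sets of distinct codewords $(x,\ldots,x)$ and $(x',\ldots,x')$ are disjoint. The guiding principle is: partition $V_2$'s input configurations by the multiplicity pattern of their multiset, and define $\mF_{V_2}$ case by case so that (a) every non-constant configuration is mapped to a non-constant output, and (b) within each case the output contains a coordinate at which the true source symbol $x$ must appear (possibly after jointly inspecting $V_1$'s output). The analysis for $t=2$ follows the rules ``all three equal $\mapsto$ the common value repeated; exactly two equal $\mapsto$ (repeated value, odd one out); all three distinct $\mapsto$ the first two inputs'', paired with the terminal rule ``if the two $V_2$-coordinates agree, trust them; else, if the two $V_1$-coordinates agree, trust them; else, trust the first $V_2$-coordinate''. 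For general $t$ the construction extends by treating the multiplicity patterns $t+1$, $t$, and $<t$ separately, and combining $V_2$'s output with majority/consistency tests on the $V_1$-block to eliminate all adversarial conflicts. Together with the upper bound above, this yields the claimed equality $\CC_1(\mathfrak{C}_t,\mA,\mU_S,t)=1$.
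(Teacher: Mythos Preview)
Your overall plan is sound---repetition code plus identity at $V_1$ plus a carefully designed $\mF_{V_2}$---and your explicit $t=2$ scheme is correct (in fact it handles all $q$ uniformly, whereas the paper treats $q=t=2$ separately). The gap is in the general-$t$ step. The three-way split into ``max multiplicity $t{+}1$, $t$, or ${<}t$'' is too coarse once $t\ge 4$. Concretely, for $t=4$ the case ``max multiplicity ${<}t$'' includes inputs to $V_2$ of pattern $\{3,2\}$; here the source could be the multiplicity-$3$ symbol ($k_2=2$, hence $k_1\le 2$) or the multiplicity-$2$ symbol ($k_2=3$, hence $k_1\le 1$), and in the former case $V_1$'s four symbols need not have a majority (e.g.\ $(x,x,y,y)$). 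So ``decode to the $V_1$-majority'' fails, and a single ``mult ${<}t$'' flag from $V_2$ carries too little information to run the needed consistency test.

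The paper closes this gap with a finer partition. Rather than three multiplicity classes, $\mF_{V_2}$ sends each Hamming shell $S^\HH_i(\pi_2(c_j))$ for $i\in\{0,1,\ldots,\lfloor t/2\rfloor\}$ and $j\in\{1,\ldots,q\}$ to a \emph{distinct} element of $\mA^t$, and sends the complement $D$ (inputs at distance ${>}\lfloor t/2\rfloor$ from every repetition word) to one further distinct element; the count $q(\lfloor t/2\rfloor+1)+1<q^t$ (valid when $\max\{q,t\}\ge 3$) makes this possible, and $q=t=2$ is handled by your/its explicit scheme. The decoder then uses the shell-specific threshold ``at least $i$ coordinates of the $V_1$-block equal $c_j$'': if yes, output $c_j$; if no (or if the label is $D$), output the $V_1$-majority, which is correct because in those branches $k_1\le i-1\le\lfloor t/2\rfloor-1<t/2$. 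In effect, your ``multiplicity $t{+}1$, $t$, ${<}t$'' is exactly the paper's scheme when $\lfloor t/2\rfloor=1$ (i.e.\ $t\in\{2,3\}$), but for larger $t$ you need the additional shell indices $i=2,\ldots,\lfloor t/2\rfloor$ together with their matching thresholds at the terminal.
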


\begin{proof}
We let $q:=|\mA|$ for ease of notation (but note that $q$ does not need to be a prime power). We will construct
a network code $\mF=\{\mF_1,\mF_2\}$
for $(\mN,\mA)$ and an unambiguous outer code~$\mC$ for the channel $\Omega[\mathfrak{C}_{t},\mA,S \to T,\mU_S,t]$ of size~$q$. 
Recall that $t \ge 2$ by the very definition of Family~\ref{fam:c}.

\underline{Case 1}: $q=t=2$. \ Let $\mA = \{a,b\}.$ Encode each element of $\mA$ using a 5-times repetition code. It can be checked that the pair $(\mC,\mF)$ achieves the desired capacity, where~$\mC=\{(a,a,a,a,a),(b,b,b,b,b)\}$ and $\mF=\{\mF_1,\mF_2\}$ is defined as follows.
$\mF_1(v) = v$ for all~$v \in \mA^2$
and $$\mF_2(w)=
\begin{cases}
      (a,a)  & \text{if} \ \ w = (a,a,a), \\
      (b,b)  & \text{if} \ \ w = (b,b,b), \\
      (a,b)  & \text{if} \ \ w \in S^\HH_{1}(a,a,a), \\
      (b,a)  & \text{if} \ \ w \in S^\HH_{1}(b,b,b). \\
\end{cases}$$

\underline{Case 2}: $\max\{q,t\} \ge 3$. \ Encode each element of $\mA$ using a $2t+1$-times repetition code, so that the codewords of $\mC$ are given by $c_{1},c_{2},\ldots,c_{q}$. The intermediate function $\mF_1$ simply forwards its received symbols. 

We will next define the function $\mF_2$. Notice that there are $q\left(\lfloor t/2 \rfloor+1\right)$ shells $S^\HH_{i}(\pi_{2}(c_{j}))$ for $i\in \{ 0,1,\ldots, \lfloor t/2 \rfloor\}$ and $j\in \{1,\ldots,q\}$. Define $$D= \mA^{t+1}\setminus \bigcup_{j\in \{1,\ldots,q\}}B^\HH_{\lfloor t/2 \rfloor}(\pi_{2}(c_{j})),$$
where $\pi_2$ is defined in Notation~\ref{not:ballsetc}.
That is, $D$ is the set of words that do not belong to any ball of radius $\lfloor t/2 \rfloor$ centered at the projection of a codeword, so that the union of $D$ with the collection of the shells described above is all of $\mA^{t+1}$. Let $q':=q\left(\lfloor t/2 \rfloor+1\right)+1$. Since $\max\{q,t\} \ge 3$,
we have
\begin{equation*}
q'<q^{t}.
\end{equation*}
Therefore, we may define $\mF_2$ to be such that the sets $D$ and $S^\HH_{i}(\pi_{2}(c_{j}))$ for $i\in \{ 0,1,\ldots, \lfloor {t/2} \rfloor\}$ and $j\in \{1,\ldots, q\}$ each map to a single, \textit{distinct} element of $\mA^{t}$. 

Decoding at the terminal is accomplished as follows. Suppose that the terminal receives the pair $(x,y)\in \mA^{t} \times \mA^{t}$. First, the set $\mF_2^{-1}(y)$ is computed. If $\mF_2^{-1}(y) = D$, then the terminal decodes to the majority of the coordinates in $x$, 
this is guaranteed to be the transmitted symbol based on how $D$ was defined. If $\mF_2^{-1}(y) =\pi_{2}(c_{j})$ for some $j$, the terminal decodes to the repeated symbol in $c_j$. 
Finally, if $\mF_2^{-1}(y) = S^\HH_{i}(\pi_{2}(c_{j}))$ for $i\in \{1,\ldots, \lfloor t/2 \rfloor\}$, then it is not clear to the decoder if $i$ symbols were corrupted incoming to~$V_{2}$ (and up to $t-i$ symbols to~$V_{1}$), or if at least $(t+1)-i$ symbols were corrupted from~$V_{2}$ (and up to $i-1$ to~$V_{1}$). To differentiate the possibilities, the terminal looks to $x$. If at least $i$ of the symbols of $x$ are consistent with~$c_{j}$, then we must be in the first scenario (recall $i\leq \lfloor t/2\rfloor$), so the terminal decodes to $c_{j}$. Otherwise, at most $i-1$ symbols were changed to~$V_{1}$, and hence the majority of the symbols in $x$ will correspond to the transmitted codeword. In this case, the terminal decodes to the majority of symbols in $x$. 
\end{proof}

We propose an example that illustrates the scheme in the proof of Theorem \ref{thm:metc}.
\begin{example}
We will show that $\CC_1(\mathfrak{C}_4,\mA,\mU_S,4) = 1$ when $|\mA|=2$; see Family~\ref{fam:c} for the notation. Let $\mA=\{a,b\}$. Following the proof of Theorem \ref{thm:metc}, we consider the repetition code that has the codewords $c_1=(a,a,a,a,a,a,a,a,a)$ and~$c_2=(b,b,b,b,b,b,b,b,b)$. Observe that~$D=\emptyset$. We illustrate the decoding of $c_1$ case-by-case. Since the alphabet size is equal to~2, the analysis of the $\binom{9}{4}$ possible actions of the adversary can be reduced to the following~5 basic cases (which also cover the favorable scenario where the adversary might not use their full power). 
\begin{enumerate}
    \item$(a,a,a,a,a,a,a,a,a)$ is changed into $(a,a,a,a,a,b,b,b,b)$ by the adversary. Since we have $(a,b,b,b,b) \in S_1(\pi_2(c_2))$ and none of the coordinates of $(a,a,a,a)$ is $b$, the terminal decodes to $c_1$.
    \item $(a,a,a,a,a,a,a,a,a)$ is changed into $(a,a,a,b,a,a,b,b,b)$ by the adversary. Since we have~$(a,a,b,b,b) \in S_2(\pi_2(c_2))$ and only one of the coordinates of $(a,a,a,b)$ is $b$, the terminal decodes to $c_1$.
    \item $(a,a,a,a,a,a,a,a,a)$ is changed into $(a,a,b,b,a,a,a,b,b)$ by the adversary. Since we have~$(a,a,a,b,b) \in S_2(\pi_2(c_1))$ and at least 2 of the coordinates of $(a,a,b,b)$ is $a$, the terminal decodes to $c_1$.
    \item $(a,a,a,a,a,a,a,a,a)$ is changed into $(a,b,b,b,a,a,a,a,b)$ by the adversary. Since we have~$(a,a,a,a,b) \in S_1(\pi_2(c_1))$ and least 1 of the coordinates of $(a,b,b,b)$ is $a$, the terminal decodes to $c_1$.
    \item $(a,a,a,a,a,a,a,a,a)$ is changed into $(b,b,b,b,a,a,a,a,a)$ by the adversary. Since we have~$(a,a,a,a,a) = \pi_2(c_1)$, the terminal decodes to $c_1$.
\end{enumerate}
This shows that, \textit{no matter} what the action of the adversary is, one alphabet symbol can always be transmitted unambiguously. 
\end{example}

\begin{remark}
\label{rem:exclude}
The reason for excluding the case $t=1$ in the Definition of Family \ref{fam:c} is the non-achievability of the Generalized Network Singleton Bound of Corollary~\ref{cor:sing} given in Theorem~\ref{thm:diamond_cap}. It should be noted that since that case is already studied, excluding it from Family~\ref{fam:c} makes sense to exhibit a family of networks which achieve Corollary~\ref{cor:sing} as proven in Theorem~\ref{thm:metc}.
\end{remark}

We now turn to the networks of Family~\ref{fam:d}, the only family introduced in Subsection \ref{sec:families} that we have not yet considered.

\begin{theorem}
\label{thm:metd}
Let $\mathfrak{D}_t=(\mV,\mE,S,\{T\})$ be a member of
Family~\ref{fam:d}. Let $\mA$ be any network alphabet and let $\mU_S$ be the  set of edges of $\mathfrak{D}_t$ directly connected to $S$. We have $$\CC_1(\mathfrak{D}_t,\mA,\mU_S,t)= 1.$$ In particular, the Generalized Network Singleton Bound of Corollary~\ref{cor:sing} is met with equality.
\end{theorem}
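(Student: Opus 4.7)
The upper bound $\CC_1(\mathfrak{D}_t,\mA,\mU_S,t)\leq 1$ is immediate from the Generalized Network Singleton Bound (Corollary~\ref{cor:sing}): the edge-cut consisting of the single non-vulnerable edge from $V_1$ to $T$ together with the $2t$ vulnerable edges from $S$ to $V_2$ yields $1+\max\{0,2t-2t\}=1$. The plan is therefore to exhibit an outer code of size $|\mA|$ together with a network code that is unambiguous for the corresponding adversarial channel, generalizing the ``alarm-symbol'' scheme used for the Mirrored Diamond Network $\mathfrak{D}_1$.

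Fix an arbitrary symbol $*\in\mA$ (playing the role of the alarm), and take as outer code the $4t$-fold repetition code $\mC=\{(a,a,\ldots,a)\in\mA^{4t}\mid a\in\mA\}$. I define $\mF_1$ and $\mF_2$ by the same rule: given its $2t$ incoming symbols, $\mF_i$ outputs the unique value $b\in\mA$ that occurs at least $t+1$ times among them, if such a $b$ exists (uniqueness is automatic since two distinct values cannot each occupy more than half of $2t$ positions), and otherwise outputs $*$. The terminal decodes a received pair $(y_1,y_2)\in\mA^2$ to $y_1$ when $y_1=y_2$, and to the coordinate distinct from $*$ otherwise.

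To check unambiguity, suppose the source transmits $(a,\ldots,a)$ and the adversary corrupts $k$ symbols entering $V_1$ and $l$ symbols entering $V_2$ with $k+l\leq t$. Any value $b\neq a$ is produced only by corruption, so at $V_1$ such a $b$ appears at most $k\leq t$ times (and at most $l\leq t$ times at $V_2$); hence $\mF_i$ can only output $a$ or $*$, and outputs $*$ precisely when the corresponding corruption count equals $t$. Since $k+l\leq t$, at most one of $k,l$ can equal $t$, so the output pair $(*,*)$ arises only when $a=*$, in which case both nodes see only copies of $*$ and deterministically produce $(*,*)$. One concludes that the fan-out of $(a,\ldots,a)$ is contained in $\{(a,a),(a,*),(*,a)\}$ when $a\neq *$, and equals $\{(*,*)\}$ when $a=*$; these $|\mA|$ fan-outs are pairwise disjoint by inspection, so $\mC$ is unambiguous with $|\mC|=|\mA|$. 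The main conceptual subtlety is precisely this last edge case: the alarm symbol $*$ is itself a legitimate codeword, and the argument must ensure that the pair $(*,*)$ cannot be produced from a codeword $(a,\ldots,a)$ with $a\neq *$, which indeed would force $k=l=t$ and hence $k+l=2t>t$, a contradiction.
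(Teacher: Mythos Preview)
Your proof is correct and follows essentially the same approach as the paper: a $4t$-fold repetition code, a threshold/majority rule at each intermediate node that outputs either the transmitted symbol or a reserved alarm $*$, and a terminal that trusts the non-$*$ coordinate. Your intermediate-node rule (output the unique symbol with $\geq t+1$ occurrences, else $*$) is a minor variant of the paper's plurality-with-tie-break rule, and your verification that $(*,*)$ cannot arise from $a\neq *$ is more explicit than the paper's terse argument.
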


\begin{proof}
Fix an alphabet symbol $* \in \mA$.
The source $S$ encodes each element of $\mA$ using a~$4t$-times repetition code.
The vertices $V_1$ and $V_2$ implement a majority-vote decoder each, unless 
two symbols occur an equal number of times over the incoming edges. In that case, the vertices output $*$. At the destination, if the incoming symbols match, then the terminal decodes to that symbol. Otherwise, it decodes to the alphabet symbol that is not equal to $*$. All symbols from~$\mA$ can be sent with this scheme, including $*$, giving a capacity of at least $1$ and establishing the theorem. 
\end{proof}

\begin{remark}
\label{rem:i3crit}

Let $\mN = ([a_1,a_2],[b_1,b_2])$ a simple 2-level network with $n=2$ intermediate nodes. Let~$\mA$ be a sufficiently large finite field.

\begin{table}[!ht]
\begin{center}
 \renewcommand{\arraystretch}{1.4}
 
 \begin{tabular}{|x{4.5cm}| x{4.5cm}| x{4.5cm}|} 
 \hline
  Size of $I_3(\mN,t)$ & Corollary \ref{cor:sing} is met & Corollary \ref{cor:sing} is not met  \\ [0.5ex] 
 \hline\hline
 0  & \text{always}  & \text{never }(Corollary \ref{cor:conf})   \\
 \hline
 1 & $\mathfrak{C}_2$ (Theorem \ref{thm:metc} ) & $\mathfrak{A}_2$ (Theorem \ref{thm:meta})   \\
 \hline
 2 & $\mathfrak{D}_2$ (Theorem \ref{thm:metd})  & $\mathfrak{E}_2$ (Theorem \ref{thm:mete} )  \\
\hline
\end{tabular}
\end{center} 
\caption{On the 1-shot capacity of simple 2-level networks with 2 intermediate nodes.\label{tablele}}
\end{table}

We present Table \ref{tablele} to illustrate that the size of $I_3(N,t)$ cannot be considered as a criterion for the achievability of the Generalized Network Singleton Bound of Corollary~\ref{cor:sing}.

\end{remark}

\section{The Double-Cut-Set Bound and Applications}
\label{sec:double-cut-bd}

In this section we illustrate how the results on 2-level and 3-level networks derived throughout the paper can be combined with each other and applied to study an arbitrarily large and complex network $\mN$.

We already stressed in Section~\ref{sec:motiv}
that known cut-set bounds are not sharp in general when considering a \textit{restricted} adversary (whereas they are sharp, under certain assumptions, when the adversary is not restricted; see Theorem~\ref{thm:mcm}).

The main idea behind the approach taken in this section is to consider \textit{pairs} of edge-cuts, rather than a single one, and study the ``information flow'' between the two. This allows one to better capture the adversary's restrictions and to incorporate them into explicit upper bounds for the capacity of the underlying network $\mN$.
All of this leads to our Double-Cut-Set Bound below; see Theorem~\ref{thm:dcsb}. In turn,
Theorem~\ref{thm:dcsb} can be used to derive an upper bound for the capacity of $\mN$ in terms of the capacity of an \textit{induced} 3-level network.
This brings the study of~3-level networks and their reduction to 2-level networks into the game; see Sections~\ref{sec:net-2-and-3} and~\ref{sec:upper}.

A concrete application of the machinery developed in this section will be illustrated later in Example~\ref{ex:tulipA}, where we will go back to our opening example of Section~\ref{sec:motiv}
and rigorously compute its capacity.
Another network is studied in Example~\ref{ex:second}. 

We begin by introducing supplementary definitions and notation specific to this section. 

\begin{definition} \label{def:imme}
Let $\mN=(\mV,\mE,\mS,\bfT)$ be a
network and let $\mE_1, \mE_2 \subseteq \mE$ be non-empty edge sets. We say that $\mE_1$ \textbf{precedes} $\mE_2$ if every path from $S$ to an edge of $\mE_2$
contains an edge of $\mE_1$.
In this situation, for $e \in \mE_2$ and $e' \in \mE_1$, we say that $e'$ is an \textbf{immediate predecessor of $e$ in~$\mE_1$} if $e' \preccurlyeq e$ and there is no $e'' \in \mE_1$ with $e' \preccurlyeq e'' \preccurlyeq e$ and $e' \neq e''$.
\end{definition}

We illustrate the previous notions with an example.

\begin{example} \label{ex:imme}
Consider the network $\mN$ and the edge sets $\mE_1$ and $\mE_2$ in Figure~\ref{tulipsolved} below. Then~$\mE_1$ precedes~$\mE_2$.
We have $e_2 \preccurlyeq e_{10}$ and $e_9 \preccurlyeq e_{10}$. Moreover,
 $e_9$ is an immediate predecessor of $e_{10}$ in $\mE_1$, while $e_2$ is not. 
\end{example}

\begin{figure}[h!]
\centering
\begin{tikzpicture}

\draw[blue, line width=1.5pt] (1,2) .. controls (2,1) .. (2,-0) ;

\draw[blue, line width=1.5pt] (6.3,1) -- (7.3,-1) ;

\draw[red, line width=1.5pt] (9.5,2.5) .. controls (9.2,1) .. (10,0);

\tikzset{vertex/.style = {shape=circle,draw,inner sep=0pt,minimum size=1.9em}}
\tikzset{nnode/.style = {shape=circle,fill=myg,draw,inner sep=0pt,minimum
size=1.9em}}
\tikzset{edge/.style = {->,> = stealth}}
\tikzset{dedge/.style = {densely dotted,->,> = stealth}}
\tikzset{ddedge/.style = {dashed,->,> = stealth}}

\node[vertex] (S1) {$S$};

\node[shape=coordinate,right=\mynodespace of S1] (K) {};

\node[nnode,above=0.6\mynodespace of K] (V1) {$V_1$};

\node[nnode,below=0.6\mynodespace of K] (V2) {$V_2$};

\node[nnode,right=\mynodespace of K] (V3) {$V_3$};

\node[nnode,right=\mynodespace of V3] (V4) {$V_4$};


\node[vertex,right=3\mynodespace of V1 ] (T1) {$T_1$};

\node[vertex,right=3\mynodespace of V2] (T2) {$T_2$};

\draw[ddedge,bend left=15] (S1)  to node[fill=white, inner sep=3pt]{\small $e_1$} (V1);

\draw[ddedge,bend right=15] (S1)  to node[fill=white, inner sep=3pt]{\small $e_2$} (V1);

\draw[ddedge,bend left=15] (S1) to  node[fill=white, inner sep=3pt]{\small $e_3$} (V2);

\draw[ddedge,bend right=15] (S1) to  node[fill=white, inner sep=3pt]{\small $e_4$} (V2);

\draw[ddedge,bend left=0] (V1) to  node[fill=white, inner sep=3pt]{\small $e_6$} (V3);

\draw[edge,bend left=0] (V4)  to node[fill=white, inner sep=3pt]{\small $e_{10}$} (T1);

\draw[edge,bend left=0] (V4)  to node[fill=white, inner sep=3pt]{\small $e_{11}$} (T2);

\draw[edge,bend left=0] (V1)  to node[fill=white, inner sep=3pt]{\small $e_{5}$} (T1);

\draw[edge,bend left=0] (V2)  to node[fill=white, inner sep=3pt]{\small $e_{8}$} (T2);

\draw[ddedge,bend left=0] (V2) to  node[fill=white, inner sep=3pt]{\small $e_7$} (V3);

\draw[ddedge,bend left=0] (V3) to  node[fill=white, inner sep=3pt]{\small $e_{9}$} (V4);

\node[text=blue] (E1) at (2.3,0.0) {$\mE_1$};
\node[text=blue] (E11) at (7.6,-1) {$\mE_1$};

\node[text=red] (E2) at (10.3,0.1) {$\mE_2$};
\end{tikzpicture} 
\caption{{{Network $\mN$ for Examples~\ref{ex:imme}, \ref{extransf}, and ~\ref{ex:tulipA}}}. \label{tulipsolved}}
\end{figure}

The following notion of channel will be crucial in our approach. It was formally defined in~\cite{RK18} using a recursive procedure.

\begin{notation} \label{notat:specialtransfer}
Let $\mN$, $\mE_1$ and $\mE_2$ be as in Definition~\ref{def:imme}. If $\mA$ is a network alphabet, $\mF$ is a network code for $(\mN,\mA)$, $\mU \subseteq \mE$, and $t \ge 0$, then we denote by 
\begin{equation} \label{chtd}
\Omega[\mN,\mA,\mF,\mE_1 \to \mE_2,\mU \cap \mE_1,t]: \mA^{|\mE_1|} \dashrightarrow \mA^{|\mE_2|}
\end{equation}
the channel that describes the transfer from the edges of $\mE_1$ to those of $\mE_2$, when an adversary can corrupt up to $t$ edges from $\mU \cap \mE_1$.
\end{notation}

In this paper, we will not formally recall the definition of the channel introduced in Notation~\ref{notat:specialtransfer}. We refer to~\cite[page 205]{RK18} for further details. We will however illustrate the channel with an example.

\begin{example} \label{extransf}
Consider again the network $\mN$ and the edge sets $\mE_1$ and $\mE_2$ in Figure~\ref{tulipsolved}. Let~$\mU=\{e_1,e_2,e_3,e_4,e_6,e_7,e_9\}$ be the set of dashed (vulnerable) edges in the figure. Let $\mA$ be a network alphabet and let~$\mF$ be a network code for $(\mN,\mA)$. We want to describe the channel in~\eqref{chtd} for $t=1$, which we denote by $\Omega: \mA^3 \dashrightarrow \mA^2$ for convenience. We have
$$\Omega(x_1,x_2,x_3)=\{(\mF_{V_1}(y_1,y_2),\mF_{V_4}(y_9)) \mid y=(y_1,y_2,y_9) \in \mA^3, \, \dH(x,y) \le 1\},$$
where $\dH$ is the Hamming distance on $\mA^3$. Note that the value of each edge of $\mE_2$ 
depends only on the values of the edges that are its immediate predecessors in $\mE_1$. For example,
when computing the values that $e_{10}$ can take, the channel only considers the values that $e_9$ can take, even though both $e_1$ and $e_2$ precede $e_{10}$.
This follows from the definition of~\eqref{chtd} proposed in~\cite{RK18}, which we adopt here.
\end{example}

\begin{remark} \label{rmk:immediate}
Note that we do not require the edge-cuts $\mE_1$ and $\mE_2$ to be minimal or \textit{antichain} cuts (i.e., cuts where any two different edges cannot be compared with respect to the order $\preccurlyeq$). Furthermore, the channel~$\Omega[\mN,\mA,\mF,\mE_1 \to \mE_2,\mU \cap \mE_1,t]$ considers the immediate predecessors first in the network topology. In other words, the channel~$\Omega[\mN,\mA,\mF,\mE_1 \to \mE_2,\mU \cap \mE_1,t]$
expresses the value of each edge of $\mE_2$ as a \textit{function} of the values of its immediate predecessors in $\mE_1$. 
\end{remark}

We are now ready to state the main result of this section.

\begin{theorem}[Double-Cut-Set Bound]
\label{thm:dcsb}
Let $\mN=(\mV,\mE, S, \bfT)$ be a network, $\mA$ a network alphabet, $\mU \subseteq \mE$ a set of edges and $t \ge 0$. Let $T \in \bd{T}$ and let $\mE_1$ and $\mE_2$ be edge-cuts between~$S$ and~$T$ with the property that~$\mE_1$ precedes~$\mE_2$. We have 
$$\CC_1(\mN,\mA,\mU,t) \le \max_{\mF} \,  \CC_1(\Omega[\mN,\mA,\mF,\mE_1 \to \mE_2,\mU \cap \mE_1,t]),$$
where the maximum is taken over all 
the network codes $\mF$ for $(\mN,\mA)$.
\end{theorem}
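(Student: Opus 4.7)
The plan is to dominate the full source-to-terminal channel by the middle channel $\Omega[\mN,\mA,\mF,\mE_1 \to \mE_2,\mU \cap \mE_1,t]$ via a noise-restriction step followed by the data processing inequality. First, I would invoke Proposition~\ref{prop:aux} applied to the specific terminal $T$ appearing in the statement, reducing the task to showing
$$\max_\mF \CC_1(\Omega[\mN,\mA,\mF,S \to T,\mU,t]) \le \max_\mF \CC_1(\Omega[\mN,\mA,\mF,\mE_1 \to \mE_2,\mU \cap \mE_1,t]).$$
Fix a network code $\mF$ and concentrate on bounding the inner quantity on the left.

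The next step is a monotonicity observation: the channel $\Omega[\mN,\mA,\mF,S \to T,\mU,t]$ is coarser than $\Omega[\mN,\mA,\mF,S \to T,\mU \cap \mE_1,t]$, because allowing the adversary to corrupt any edge of $\mU$ (rather than only those in $\mU \cap \mE_1$) can only enlarge the fan-out sets. Proposition~\ref{prop:finer} therefore gives
$$\CC_1(\Omega[\mN,\mA,\mF,S \to T,\mU,t]) \le \CC_1(\Omega[\mN,\mA,\mF,S \to T,\mU \cap \mE_1,t]).$$

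The crux of the argument is to factor the restricted end-to-end channel as a concatenation of three channels:
$$\Omega[\mN,\mA,\mF,S \to T,\mU \cap \mE_1,t] = \Omega_{\textnormal{pre}} \blacktriangleright \Omega[\mN,\mA,\mF,\mE_1 \to \mE_2,\mU \cap \mE_1,t] \blacktriangleright \Omega_{\textnormal{post}},$$
where $\Omega_{\textnormal{pre}}:\mA^{\degout(S)} \dashrightarrow \mA^{|\mE_1|}$ is the deterministic channel propagating source values under $\mF$ up to the cut $\mE_1$, and $\Omega_{\textnormal{post}}:\mA^{|\mE_2|} \dashrightarrow \mA^{\degin(T)}$ is the deterministic channel propagating $\mE_2$-values forward to $T$. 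Both are noiseless: since the adversary is restricted to $\mU \cap \mE_1$, no corruption takes place before $\mE_1$ or after $\mE_2$. The fact that $\mE_1$ and $\mE_2$ are edge-cuts guarantees every source-to-terminal path passes through both, while the precedence of $\mE_1$ over $\mE_2$ ensures that the middle channel indeed receives its input from $\mE_1$, consistent with the "immediate predecessor" convention recalled in Remark~\ref{rmk:immediate}. Given this factorization, applying the data processing inequality of Proposition~\ref{dpi} twice (using associativity from Proposition~\ref{prop:11}) yields
$$\CC_1(\Omega[\mN,\mA,\mF,S \to T,\mU \cap \mE_1,t]) \le \CC_1(\Omega[\mN,\mA,\mF,\mE_1 \to \mE_2,\mU \cap \mE_1,t]),$$
and taking the maximum over $\mF$ closes the inequality.

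The only real obstacle is the factorization identity itself, which requires unwinding the recursive definition of $\Omega[\mN,\mA,\mF,\mE_1 \to \mE_2,\mU \cap \mE_1,t]$ from \cite{RK18} and verifying that expressing each value on $\mE_2$ as a function of the values on its immediate predecessors in $\mE_1$, and then propagating these deterministically to $T$, reproduces exactly the fan-out sets of the direct source-to-terminal channel with noise confined to $\mU \cap \mE_1$. This amounts to careful bookkeeping along the partial order $\preccurlyeq$ on $\mE$: every path from $S$ to $T$ is split into its segment up to its first crossing of $\mE_1$, its segment between $\mE_1$ and its first crossing of $\mE_2$, and its segment from $\mE_2$ onward, with adversarial action confined to the middle segment. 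Once this identity is established, the result follows from the chain of inequalities above.
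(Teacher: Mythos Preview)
Your proposal is correct and follows essentially the same route as the paper: factor through the deterministic pre- and post-processing channels on either side of the $\mE_1 \to \mE_2$ transfer, use the coarseness of the full $S\to T$ channel relative to this concatenation (Proposition~\ref{prop:finer}), and apply the data processing inequality (Proposition~\ref{dpi}), finishing with Proposition~\ref{prop:aux}. The only cosmetic difference is that the paper compares $\Omega[\mN,\mA,\mF,S\to T,\mU,t]$ directly with the three-fold concatenation via a single coarser-than step, whereas you insert the intermediate channel $\Omega[\mN,\mA,\mF,S\to T,\mU\cap\mE_1,t]$ and then assert an exact factorization identity for it; the paper's phrasing sidesteps having to verify that equality precisely, needing only the (easier) containment of fan-out sets.
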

\begin{proof}
Fix a network code $\mF$ for
$(\mN,\mA)$. We consider the (fictitious) scenario where up to $t$ errors can occur only on the edges from $\mU \cap \mE_1$. This scenario is modeled by the concatenation of channels
\begin{multline} \label{con}
    \Omega[\mN,\mA,\mF,\out(S) \to \mE_1,\mU \cap \out(S),0]  \, \blacktriangleright \, \Omega[\mN,\mA,\mF, \mE_1 \to \mE_2,\mU \cap \mE_1,t] \\  
\blacktriangleright \,  \Omega[\mN,\mA,\mF,\mE_2 \to \inn(T),\mU \cap \mE_2,0],
\end{multline}
where the three channels 
in~\eqref{con} are of the type introduced in Notation~\ref{notat:specialtransfer}.
Note moreover that the former and the latter channels in~\eqref{con} are deterministic( see Definition \ref{dd1}) as we consider an adversarial power of 0. They describe the transfer from the source to $\mE_1$ and from $\mE_2$ to $T$, respectively.
We set $\hat{\Omega}:=\Omega[\mN,\mA,\mF,\out(S) \to \mE_1,\mU,0]$ and 
$\overline{\Omega}:=\Omega[\mN,\mA,\mF,\mE_2 \to \inn(T),\mU,0]$ to simplify the notation throughout the proof. 

The channel $\Omega[\mN,\mA,\mF,S \to T,\mU,t]$ is coarser (Definition~\ref{deffiner}) than the channel in~\eqref{con}, since in the latter the errors can only occur on a subset of $\mU$. In symbols, using Proposition~\ref{prop:11} we have
\begin{equation*} \Omega[\mN,\mA,\mF,S \to T,\mU,t] \, \ge \,  \hat{\Omega} \, \blacktriangleright \,  \Omega[\mN,\mA,\mF, \mE_1 \to \mE_2,\mU \cap \mE_1,t] \, \blacktriangleright \,  \overline{\Omega}.
\end{equation*}
By Propositions~\ref{prop:finer} and~\ref{dpi}, 
this implies
that
\begin{equation}
\label{almost}
    \CC_1(\Omega[\mN,\mA,\mF,S \to T,\mU,t]) \le \CC_1(\Omega[\mN,\mA,\mF,\mE_1 \to \mE_2, \mU \cap \mE_1,t]).
\end{equation}
Since~\eqref{almost} holds for any $\mF$, 
Proposition~\ref{prop:aux} finally gives
\begin{equation*} \label{mmm}
    \CC_1(\mN,\mA,\mU,t) \le
\max_{\mF} \, \CC_1(\Omega[\mN,\mA,\mF,\mE_1 \to \mE_2,\mU \cap \mE_1,t]),
\end{equation*}
concluding the proof.
\end{proof}

Our next step is to make the Double-Cut-Set Bound of Theorem~\ref{thm:dcsb} more explicit and~``easy'' to apply. More in detail, we now explain how Theorem~\ref{thm:dcsb} can be used to construct a simple~3-level network from a larger (possibly more complex) network $\mN$, whose capacity is an upper bound for the capacity of $\mN$. This strategy reduces the problem of computing an upper bound for the capacity of $\mN$ to that of estimating the capacity of the corresponding simple~3-level network. In turn, Subsection~\ref{sec:3to2reduc} often reduces the latter problem to that
of computing an upper bound for a simple 2-level network, a problem we have studied extensively throughout the paper.

\begin{corollary} \label{cor:tothree}
Let $\mN=(\mV,\mE, S, \bfT)$ be a network, $\mA$ a network alphabet, $\mU \subseteq \mE$ a set of edges and $t \ge 0$. Let $T \in \bd{T}$ and let $\mE_1$ and $\mE_2$ be edge-cuts between~$S$ and~$T$ with the property that~$\mE_1$ precedes~$\mE_2$. Consider a simple 3-level network $\mN'$ with source $S$, terminal~$T$, and vertex layers~$\mV_1$ and~$\mV_2$. The vertices of~$\mV_1$ are in bijection with the edges of $\mE_1$ and the vertices of~$\mV_2$ with the edges of $\mE_2$. A vertex $V \in \mV_1$ is connected to vertex $V' \in \mV_2$ if and only if the edge of~$\mE_1$ corresponding to $V$ is an immediate predecessor of the edge of~$\mE_2$ corresponding to $V'$; see Definition~\ref{def:imme}. Denote by~$\mE'_S$ the edges directly connected with the source of $\mN'$, which we identify with the edges of~$\mE_1$ (consistently with how we identified these with the vertices in $\mV_1$).
We then have
$$\CC_1(\mN,\mA,\mU,t) \le \CC_1(\mN',\mA, \mU \cap \mE'_S,t).$$
\end{corollary}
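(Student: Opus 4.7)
The plan is to combine the Double-Cut-Set Bound of Theorem~\ref{thm:dcsb} with a translation from network codes on $\mN$ to network codes on the induced simple 3-level network $\mN'$, chosen so that the relevant channels coincide. First I would apply Theorem~\ref{thm:dcsb} to the edge-cuts $\mE_1$ and $\mE_2$, obtaining
$$\CC_1(\mN,\mA,\mU,t) \le \max_{\mF} \CC_1(\Omega[\mN,\mA,\mF,\mE_1 \to \mE_2,\mU \cap \mE_1,t]),$$
where $\mF$ ranges over network codes for $(\mN,\mA)$. On the other hand, since $\mN'$ has a single terminal $T$, Definition~\ref{def:capacities} gives
$$\CC_1(\mN',\mA,\mU \cap \mE'_S,t) = \max_{\mF'} \CC_1(\Omega[\mN',\mA,\mF',S \to T,\mU \cap \mE'_S,t]),$$
so it suffices to show that for every network code $\mF$ on $\mN$ there exists a network code $\mF'$ on $\mN'$ such that the two channels above coincide (or at least that the $\mN'$-channel is finer in the sense of Definition~\ref{deffiner}), since this would imply by Proposition~\ref{prop:finer} that the corresponding 1-shot capacities satisfy the desired inequality.

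The main construction is guided by Remark~\ref{rmk:immediate}: for a fixed $\mF$, the channel $\Omega[\mN,\mA,\mF,\mE_1 \to \mE_2,\mU \cap \mE_1,t]$ expresses the value carried by each $e \in \mE_2$ as a fixed function $g_e$ of the values of its immediate predecessors in $\mE_1$. To define $\mF'$, I would let each vertex $V \in \mV_1$ (which has in-degree $1$ by simplicity of $\mN'$) forward its unique incoming symbol along all of its outgoing edges, and for each vertex $V' \in \mV_2$ associated with an edge $e \in \mE_2$, I would set the intermediate function to be $g_e$, applied to the symbols received from those vertices in $\mV_1$ that correspond to the immediate predecessors of $e$ in $\mE_1$. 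This is well defined thanks to the prescribed adjacency between $\mV_1$ and $\mV_2$ in the construction of $\mN'$ and the fact that each $V' \in \mV_2$ has out-degree $1$, so its single outgoing edge to $T$ carries precisely the symbol of the edge of $\mE_2$ it represents.

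The hard part will be carefully verifying that $\Omega[\mN',\mA,\mF',S \to T,\mU \cap \mE'_S,t]$ coincides with $\Omega[\mN,\mA,\mF,\mE_1 \to \mE_2,\mU \cap \mE_1,t]$. Both channels have input in $\mA^{|\mE_1|}$, naturally identified with $\mA^{|\mE'_S|}$, both allow the adversary to corrupt up to $t$ coordinates from $\mU \cap \mE_1 = \mU \cap \mE'_S$, and by construction their coordinate-wise output functions agree. The subtle point, which requires unpacking the recursive definition of $\Omega[\mN,\mA,\mF,\mE_1 \to \mE_2,\mU \cap \mE_1,t]$ from~\cite{RK18}, is that the value at each $e \in \mE_2$ must depend only on the immediate predecessors in $\mE_1$ rather than on all predecessors in $\mE_1$; this is exactly what Remark~\ref{rmk:immediate} and Example~\ref{extransf} make explicit. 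Once the equality of channels is established, combining it with the Double-Cut-Set Bound and the single-terminal form of $\CC_1(\mN',\mA,\mU \cap \mE'_S,t)$ yields the stated inequality.
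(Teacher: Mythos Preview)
Your proposal is correct and follows essentially the same approach as the paper: apply the Double-Cut-Set Bound, then for each network code $\mF$ on $\mN$ construct a network code $\mF'$ on $\mN'$ (forwarding at layer $\mV_1$, and at each $V'\in\mV_2$ applying the function $g_e$ extracted via Remark~\ref{rmk:immediate}) so that the two channels coincide. The paper makes the channel equality slightly more explicit by decomposing each side as a Hamming-type channel concatenated with the common deterministic channel $\Omega[\,\cdot\,,\,\cdot\,,\,\cdot\,,\mE_1\to\mE_2,\mU\cap\mE_1,0]$, but this is exactly the verification you describe in your last paragraph.
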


Before proving Corollary~\ref{cor:tothree}, we show how to apply it to the opening example of this paper. This will give us a sharp {upper}
bound for its capacity, as we will show.

\begin{example} \label{ex:tulipA}
Consider the network $\mN$ of Figure~\ref{fig:introex1}, where the adversary can corrupt at most~$t=1$ of the dashed edges in $\mU=\{e_1,e_2,e_3,e_4,e_6,e_7,e_9\}$.
We focus on terminal $T_1$ (a similar approach can be taken for $T_2$, since the network is symmetric) and consider the two edge-cuts~$\mE_1$ and $\mE_2$
depicted in Figure~\ref{tulipsolved}. Clearly, $\mE_1$ precedes $\mE_2$.

Following Corollary~\ref{cor:tothree}, we construct a simple 3-level network $\mN'$ with source $S$, terminal~$T$, and vertex sets $\mV_1$ and $\mV_2$ of cardinalities 3 and 2, respectively. We depict the final outcome in 
Figure~\ref{fig:3lev}, where we label the vertices and some of the edges according to the edges of $\mN$ they are in bijection with.

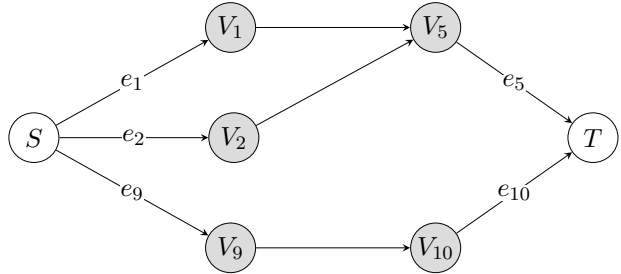
\begin{figure}[htbp]
\centering
\scalebox{0.90}{
\begin{tikzpicture}

\tikzset{vertex/.style = {shape=circle,draw,inner sep=0pt,minimum size=1.9em}}
\tikzset{nnode/.style = {shape=circle,fill=myg,draw,inner sep=0pt,minimum
size=1.9em}}
\tikzset{edge/.style = {->,> = stealth}}
\tikzset{dedge/.style = {densely dotted,->,> = stealth}}
\tikzset{ddedge/.style = {dashed,->,> = stealth}}

\node[vertex] (S1) {$S$};

\node[shape=coordinate,right=\mynodespace of S1] (K) {};

\node[nnode,above=0.5\mynodespace of K] (V1) {$V_1$};

\node[nnode,right=-0.13\mynodespace of K] (V2) {$V_2$};

\node[nnode,below=0.5\mynodespace of K] (V9) {$V_9$};

\node[nnode,right=0.9\mynodespace of V1] (V5) {$V_5$};

\node[nnode,right=0.9\mynodespace of V9 ] (V10) {$V_{10}$};

\node[vertex,right=1.8\mynodespace of V2] (T) {$T$};

\draw[edge,bend left=0] (S1)  to node[fill=white, inner sep=0pt]{$e_1$} (V1);

\draw[edge,bend right=0] (S1)  to node[fill=white, inner sep=0pt]{$e_2$} (V2);

\draw[edge,bend left=0] (S1) to  node[fill=white, inner sep=0pt]{$e_9$} (V9);

\draw[edge,bend left=0] (V1) to  node{} (V5);

\draw[edge,bend left=0] (V2) to  node{} (V5);

\draw[edge,bend left=0] (V9) to  node{} (V10);

\draw[edge,bend left=0] (V5) to  node[fill=white, inner sep=0pt]{$e_5$} (T);

\draw[edge,bend left=0] (V10) to  node[fill=white, inner sep=0pt]{$e_{10}$} (T);

\end{tikzpicture} 

}
\caption{The 3-level network $\mN'$ induced by
the network $\mN$ of Figure~\ref{tulipsolved}. \label{fig:3lev}}
\end{figure}

The next step is to make the edges of $\mU \cap \mE'_S=\{e_1,e_2,e_9\}$
vulnerable and consider an adversary capable of corrupting at most $t=1$ of them. We thus consider the network in Figure~\ref{fig:3levB} after renumbering the edges and vertices.

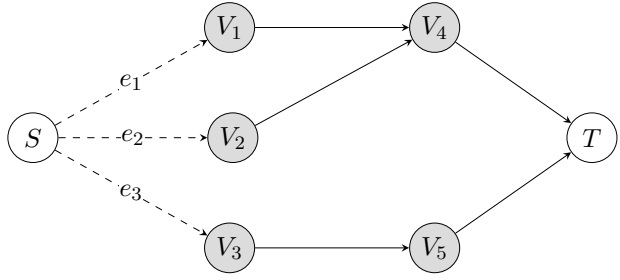
\begin{figure}[htbp]
\centering
\scalebox{0.90}{
\begin{tikzpicture}

\tikzset{vertex/.style = {shape=circle,draw,inner sep=0pt,minimum size=1.9em}}
\tikzset{nnode/.style = {shape=circle,fill=myg,draw,inner sep=0pt,minimum
size=1.9em}}
\tikzset{edge/.style = {->,> = stealth}}
\tikzset{dedge/.style = {densely dotted,->,> = stealth}}
\tikzset{ddedge/.style = {dashed,->,> = stealth}}

\node[vertex] (S1) {$S$};

\node[shape=coordinate,right=\mynodespace of S1] (K) {};

\node[nnode,above=0.5\mynodespace of K] (V1) {$V_1$};

\node[nnode,right=-0.13\mynodespace of K] (V2) {$V_2$};

\node[nnode,below=0.5\mynodespace of K] (V9) {$V_3$};

\node[nnode,right=0.9\mynodespace of V1] (V5) {$V_4$};

\node[nnode,right=0.9\mynodespace of V9 ] (V10) {$V_{5}$};

\node[vertex,right=1.8\mynodespace of V2] (T) {$T$};

\draw[ddedge,bend left=0] (S1)  to node[fill=white, inner sep=0pt]{$e_1$} (V1);

\draw[ddedge,bend right=0] (S1)  to node[fill=white, inner sep=0pt]{$e_2$} (V2);

\draw[ddedge,bend left=0] (S1) to  node[fill=white, inner sep=0pt]{$e_3$} (V9);

\draw[edge,bend left=0] (V1) to  node{} (V5);

\draw[edge,bend left=0] (V2) to  node{} (V5);

\draw[edge,bend left=0] (V9) to  node{} (V10);

\draw[edge,bend left=0] (V5) to  node{} (T);

\draw[edge,bend left=0] (V10) to  node{} (T);

\end{tikzpicture} 

}
\caption{The 3-level network $\mN'$ induced by
the network $\mN$ of Figure~\ref{tulipsolved} where the vulnerable edges are dashed. \label{fig:3levB}}
\end{figure}

We finally apply the procedure described in Subsection~\ref{sec:3to2reduc} to obtain a 2-level network from~$\mN'$, whose capacity is an upper bound for that of~$\mN'$. It is easy to check that the 
2-level network obtained from the network in Figure~\ref{fig:3levB} is precisely the Diamond Network $\mathfrak{A}_1$ introduced in Section~\ref{sec:diamond}; see Figure~\ref{fig:diamond}.
Therefore by combining Theorem~\ref{thm:diamond_cap}, Theorem~\ref{thm:channel}, and Corollary~\ref{cor:tothree}, we obtain
\begin{equation} \label{finalestimate}
    \CC_1(\mN,\mA,\mU,1) \le \log_{|\mA|}(|\mA|-1).
\end{equation}
In Theorem~\ref{computC} below, we will prove that the above bound is met with equality. In particular, the procedure described in this example to obtain the bound in~\eqref{finalestimate} is sharp, and actually leads to the exact capacity value of the opening example network from Section \ref{sec:motiv}.
\end{example}

\begin{proof}[Proof of Corollary~\ref{cor:tothree}]
We will prove that
\begin{equation*}
    \CC_1(\Omega[\mN,\mA,\mF,\mE_1 \to \mE_2,\mU \cap \mE_1,t]) \le \CC_1(\mN',\mA,\mU\cap \mE'_S,t)
\end{equation*} for every network code $\mF$ for $(\mN,\mA)$, which in turn establishes the  corollary thanks to Theorem~\ref{thm:dcsb}.
We fix $\mF$ and consider the auxiliary channel
$\Omega:=\Omega[\mN,\mA,\mF,\mE_1 \to \mE_2,\mU \cap \mE_1,0]$, which is deterministic.
By Remark~\ref{rmk:immediate}, the channel $\Omega$
expresses the value of each edge of $\mE_2$ as a function of the values of its immediate predecessors in $\mE_1$. 
By the construction of $\mN'$, there exists a network code $\mF'$ (which depends on $\mF$) for $(\mN',\mA)$ with the property that
\begin{equation} \label{cc1}
\Omega=\Omega[\mN',\mA,\mF',\mE'_S \to \inn(T),\mU \cap \mE'_S,0],
\end{equation}
where the edges of $\mE_1$ and $\mE_2$ are identified with those of $\mE'_S$ and $\inn(T)$ in $\mN'$ as explained in the statement. 
Now observe that the channel $\Omega[\mN,\mA,\mF,\mE_1 \to \mE_2,\mU \cap \mE_1,t]$ can be written as the concatenation
\begin{equation} \label{cc2}
\Omega[\mN,\mA,\mF,\mE_1 \to \mE_2,\mU \cap \mE_1,t] = \Omega[\mN,\mA,\mF,\mE_1 \to \mE_1,\mU \cap \mE_1,t] \blacktriangleright \Omega,
\end{equation}
where the first channel in the concatenation simply describes the action of the adversary on the edges of $\mU \cap \mE_1$ (in the terminology of~\cite{RK18}, the channel is called of \textit{Hamming type}; see~\cite[Sections~III and~V]{RK18}).
By combining~\eqref{cc1} with~\eqref{cc2} and using  the identifications between $\mE_1$ and $\mE_S'$,
we can write
\begin{align}
    \Omega[\mN',\mA,\mF',\mE'_S \to \inn(T),\mU \cap \mE'_S,t] &= \Omega[\mN',\mA,\mF',\mE'_S \to \mE'_S,\mU \cap \mE'_S,t]  \nonumber  \\ 
    &  \qquad \qquad \qquad \qquad \quad \blacktriangleright
\Omega[\mN',\mA,\mF',\mE'_S \to \inn(T),\mU \cap \mE'_S,0] \nonumber \\
&= \Omega[\mN,\mA,\mF,\mE_1 \to \mE_1,\mU \cap \mE_1,t] \blacktriangleright \Omega \nonumber \\
&=\Omega[\mN,\mA,\mF,\mE_1 \to \mE_2, \mU \cap \mE_1,t]. \label{lll}
\end{align}
Note that, by definition,
$\CC_1(\mN',\mA,\mU \cap \mE'_S,t) \ge 
\CC_1(\Omega[\mN',\mA,\mF',\mE'_S \to \inn(T),\mU \cap \mE'_S,t])$,
which, combined with~\eqref{lll},
leads to
$$\CC_1(\mN',\mA,\mU \cap \mE'_S,t) \ge 
\CC_1(\Omega[\mN,\mA,\mF,\mE_1 \to \mE_2,\mU \cap \mE_1,t]).$$
Since $\mF$ was an arbitrary network code for $(\mN,\mA)$, this is precisely what we wanted to show, concluding the proof of the corollary.
\end{proof}

Next, we give a capacity-achieving scheme for the network depicted in Figure \ref{fig:introex1}, proving that the estimate in~\eqref{finalestimate} is sharp. 

\begin{theorem} \label{computC}
Let $\mN$ and $\mU$ be as in Example~\ref{ex:tulipA}; see also Figure~\ref{fig:introex1}. Then for all network alphabets $\mA$ we have
$$\CC_1(\mN,\mA,\mU,1) = \log_{|\mA|}(|\mA|-1).$$
\end{theorem}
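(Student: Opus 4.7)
The upper bound $\CC_1(\mN,\mA,\mU,1) \le \log_{|\mA|}(|\mA|-1)$ is already in hand from Example~\ref{ex:tulipA}: the Double-Cut-Set Bound of Theorem~\ref{thm:dcsb} with the cuts $\mE_1 = \{e_1,e_2,e_9\}$ and $\mE_2 = \{e_5,e_{10}\}$ reduces the problem to the capacity of an induced simple $3$-level network, which by Theorem~\ref{thm:channel} and Theorem~\ref{thm:diamond_cap} is at most the capacity of the Diamond Network $\mathfrak{A}_1$. So my plan is to exhibit a matching scheme: an outer code of size $|\mA|-1$ that is simultaneously unambiguous at $T_1$ and $T_2$.

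Fix a reserved ``alarm'' symbol $\ast \in \mA$ and take $\mC = \{(x,x,x,x) : x \in \mA \setminus \{\ast\}\}$, the $4$-times repetition code omitting $\ast$. The network code $\mF$ I would use mirrors the Diamond Network strategy on both ``arms'' of the source: the nodes $V_1$ and $V_2$ forward the incoming symbol on their two outgoing edges whenever their two received symbols agree, and otherwise emit $\ast$ on both outgoing edges. The node $V_3$ fuses its two inputs as follows: if they agree and are non-alarm, it forwards that symbol; if exactly one of them equals $\ast$, it forwards the other; otherwise it sends $\ast$. Finally $V_4$ simply broadcasts its incoming symbol on both $e_{10}$ and $e_{11}$. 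Each terminal decodes by trusting the non-alarm coordinate of the received pair (defaulting arbitrarily if both coordinates are $\ast$, a case that will not arise).

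The core of the argument will be a case analysis verifying that for every codeword $(x,x,x,x)$ and every single corruption of an edge in $\mU$, both terminals decode to $x$. The salient observations are: (i) an error on $e_1$ or $e_2$ triggers the alarm at $V_1$ on the safe edge $e_5$, while $V_2$ still delivers $x$ on $e_7$, so $V_3$ sees $(\ast,x)$ and regenerates $x$ on $e_9$; hence $T_1$ reads $(\ast,x)$ and decodes correctly; (ii) errors on $e_3$ or $e_4$ are symmetric; (iii) errors on $e_6$, $e_7$, or $e_9$ only perturb the ``downstream'' coordinate of each terminal, so the safe edges $e_5$ and $e_8$ still carry $x \neq \ast$ and decoding succeeds. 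Unambiguity will then follow from inspecting the fan-out sets: for $T_1$ the fan-out set of $(x,x,x,x)$ is contained in $\{(x,z) : z \in \mA\} \cup \{(\ast, x)\}$, and since $x \neq \ast$ these sets are pairwise disjoint as $x$ ranges over $\mA \setminus \{\ast\}$. The analysis for $T_2$ is completely symmetric.

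The main obstacle is the careful bookkeeping for $V_3$'s policy when the adversary corrupts $e_6$ or $e_7$, in particular the subcase where the adversary replaces the forwarded symbol by $\ast$ itself. The $V_3$ rule must be specified with enough care to ensure that such an attack never manufactures a ``false'' non-alarm symbol on $e_9$, while still extracting the correct $x$ when one side of $V_3$ legitimately raises the alarm because of an error on $e_1,\ldots,e_4$. Once $V_3$'s rule is pinned down as above, the remaining verification is routine case work and the matching lower bound is established.
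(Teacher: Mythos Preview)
Your proposal is correct and follows essentially the same strategy as the paper: reserve an alarm symbol, use the $4$-times repetition code on $\mA\setminus\{\ast\}$, let $V_1,V_2$ raise the alarm on a mismatch, let $V_3$ fuse the two streams, and have each terminal prefer its direct edge ($e_5$ or $e_8$) unless it carries~$\ast$. Your $V_3$ rule differs cosmetically from the paper's (you forward when both inputs agree and are non-alarm, whereas the paper outputs~$\ast$ in that case), but both choices yield the same fan-out set $\{(x,z):z\in\mA\}\cup\{(\ast,x)\}$ at $T_1$, so the unambiguity argument goes through identically. One small wording issue: your decoding rule ``trust the non-alarm coordinate'' is underspecified when both coordinates are non-alarm and distinct (which can happen after an error on $e_9$); the fix is simply to say that the terminal trusts $e_5$ (resp.\ $e_8$) unless it equals~$\ast$, and your fan-out computation already shows this works.
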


\begin{proof}
The fact that $\CC_1(\mN,\mA,\mU,1) \le \log_{|\mA|}(|\mA|-1)$ has already been shown in Example~\ref{ex:tulipA} when illustrating how to apply Corollary~\ref{cor:tothree}.
We will give a scheme that achieves the desired capacity value. Reserve an alphabet symbol $* \in \mA$. The 
source $S$ emits any symbol from~$\mA \setminus \{*\}$
 via a 4-times repetition code.
 Vertices $V_1$ and $V_2$ proceed as follows: If the symbols on their incoming edges are equal, they forward that symbol; otherwise they output $*$.
 Vertex~$V_3$ proceeds as follows:
 If one of the two received symbols is different from $*$, then it forwards that symbol. If both received symbols are different from $*$, then it outputs $*$ over $e_9$. The vertex $V_4$ just forwards. Decoding is done as follows. $T_1$ and $T_2$ look at the edges~$e_5$ and~$e_8$, respectively. If they do not receive  $*$ over those edges, they trust the received symbol. If one of them is~$*$, then the corresponding terminal trusts the outgoing edge from $V_4$. For example, if $e_5$ carries $*$, then $T_1$ trusts $e_{10}.$ It is not difficult to see that this scheme defines a network code~$\mF$ for~$(\mN,\mA)$ and an unambiguous outer code $\mC$ of cardinality $|\mA|-1$, establishing the theorem.
\end{proof}

\begin{figure}[h!]
\centering
\begin{tikzpicture}

\draw[blue, line width=1.5pt] (4.2,4) .. controls (4,2.2) and (1.8,-1) .. (5,-2);

\draw[red, line width=1.5pt] (11,3) .. controls (11,2) and (11,1) .. (12,0);

\tikzset{vertex/.style = {shape=circle,draw,inner sep=0pt,minimum size=1.9em}}
\tikzset{nnode/.style = {shape=circle,fill=myg,draw,inner sep=0pt,minimum
size=1.9em}}
\tikzset{edge/.style = {->,> = stealth}}
\tikzset{dedge/.style = {densely dotted,->,> = stealth}}
\tikzset{ddedge/.style = {dashed,->,> = stealth}}

\node[vertex] (S1) {$S$};

\node[shape=coordinate,right=\mynodespace of S1] (K) {};

\node[nnode,above=0.6\mynodespace of K] (V1) {$V_1$};

\node[nnode,below=0.6\mynodespace of K] (V2) {$V_2$};

\node[nnode,right=0.9\mynodespace of K] (V3) {$V_3$};

\node[nnode,right=0.9\mynodespace of V3] (V4) {$V_4$};


\node[vertex,right=3.4\mynodespace of V1 ] (T1) {$T_1$};

\node[vertex,right=3.4\mynodespace of V2] (T2) {$T_2$};

\draw[edge,bend left=15] (S1)  to node[fill=white, inner sep=3pt]{\small $e_2$} (V1);

\draw[edge,bend right=15] (S1)  to node[fill=white, inner sep=3pt]{\small $e_3$} (V1);

\draw[edge,bend left=15] (S1) to  node[fill=white, inner sep=3pt]{\small $e_4$} (V2);

\draw[edge,bend right=15] (S1) to  node[fill=white, inner sep=3pt]{\small $e_5$} (V2);

\draw[ddedge,bend left=15] (V1) to  node[fill=white, inner sep=3pt]{\small $e_7$} (V3);

\draw[ddedge,bend right=15] (V1) to  node[fill=white, inner sep=3pt]{\small $e_8$} (V3);

\draw[edge,bend left=20] (V4)  to node[fill=white, inner sep=3pt]{\small $e_{14}$} (T1);

\draw[edge,bend left=0] (V4)  to node[fill=white, inner sep=3pt]{\small $e_{15}$} (T1);

\draw[edge,bend right=20] (V4)  to node[fill=white, inner sep=3pt]{\small $e_{16}$} (T1);

\draw[edge,bend left=20] (V4)  to node[fill=white, inner sep=3pt]{\small $e_{17}$} (T2);

\draw[edge,bend right=0] (V4)  to node[fill=white, inner sep=3pt]{\small $e_{18}$} (T2);

\draw[edge,bend right=20] (V4)  to node[fill=white, inner sep=3pt]{\small $e_{19}$} (T2);

\draw[ddedge,bend left=15] (V2) to  node[fill=white, inner sep=3pt]{\small $e_9$} (V3);

\draw[ddedge,bend right=15] (V2) to  node[fill=white, inner sep=3pt]{\small $e_{10}$} (V3);

\draw[edge,bend left=27] (V3) to  node[fill=white, inner sep=3pt]{\small $e_{11}$} (V4);

\draw[edge,bend left=0] (V3) to  node[fill=white, inner sep=3pt]{\small $e_{12}$} (V4);

\draw[edge,bend right=27] (V3) to  node[fill=white, inner sep=3pt]{\small $e_{13}$} (V4);

\draw[ddedge,out=80,in=165] (S1) to  node[fill=white, inner sep=3pt]{\small $e_1$} (T1);

\draw[ddedge,out=-80,in=-165] (S1) to  node[fill=white, inner sep=3pt]{\small $e_{6}$} (T2);

\node[text=blue] (E11) at (5.3,-2) {$\mE_1$};

\node[text=red] (E2) at (12.3,-0.17) {$\mE_2$};

\end{tikzpicture} 

\caption{{{Network $\mN$ for Example \ref{ex:second}.}}\label{fig:secondex}}
\end{figure}

We conclude this section by illustrating with another example how the results of this paper can be combined and  applied to derive upper bounds for the capacity of a large network.

\begin{example}
\label{ex:second}
Consider the network $\mN$  and the edge sets $\mE_1$ and $\mE_2$ depicted in Figure \ref{fig:secondex}. Both $\mE_1$ and $\mE_2$ are edge-cuts between $S$ and $T_1$. Moreover, $\mE_1$ precedes~$\mE_2$.

We start by observing that if there is no adversary present, then the capacity of the network~$\mN$ of Figure \ref{fig:secondex} is at most 4 since the min-cut between $S$ and any terminal~$T \in \{T_1,T_2\}$ is 4. 
It is straightforward to design a strategy that achieves this rate.

When the adversary is allowed to change \textit{any} of the network edges, then 
Theorem \ref{thm:mcm} gives that the capacity is equal to 2, under certain assumptions on the alphabet.

Now consider an adversary able to corrupt at most~$t=1$ of the  edges from the set  $\mU=\{e_1,e_7,e_8,e_9,e_{10}\}$, which are dashed in Figure~\ref{fig:secondex}.
In this situation, the capacity expectation increases from the fully vulnerable case, and the Generalized Network Singleton Bound of Theorem \ref{sbound} predicts 3 as the largest achievable rate. 
Using the results of this paper, we will show that a rate of 3 is actually not achievable. Following Corollary~\ref{cor:tothree}, we construct a simple~3-level network $\mN'$ induced from $\mN$. We depict the final outcome in 
Figure~\ref{fig:3rc}.

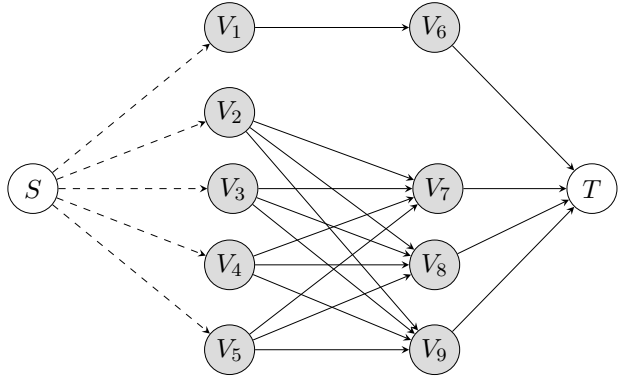
\begin{figure}[htbp]
\centering
\scalebox{0.90}{
\begin{tikzpicture}

\tikzset{vertex/.style = {shape=circle,draw,inner sep=0pt,minimum size=1.9em}}
\tikzset{nnode/.style = {shape=circle,fill=myg,draw,inner sep=0pt,minimum
size=1.9em}}
\tikzset{edge/.style = {->,> = stealth}}
\tikzset{dedge/.style = {densely dotted,->,> = stealth}}
\tikzset{ddedge/.style = {dashed,->,> = stealth}}

\node[vertex] (S1) {$S$};

\node[shape=coordinate,right=\mynodespace of S1] (K) {};

\node[nnode,above=0.8\mynodespace of K] (V1) {$V_1$};

\node[nnode,above=0.3\mynodespace of K] (V2) {$V_2$};

\node[nnode,right=-0.13\mynodespace of K] (V3) {$V_3$};

\node[nnode,below=0.3\mynodespace of K] (V4) {$V_4$};

\node[nnode,below=0.8\mynodespace of K] (V5) {$V_5$};

\node[nnode,right=0.9\mynodespace of V1] (V6) {$V_6$};

\node[nnode,right=0.9\mynodespace of V3 ] (V7) {$V_{7}$};

\node[nnode,right=0.9\mynodespace of V4] (V8) {$V_8$};

\node[nnode,right=0.9\mynodespace of V5 ] (V9) {$V_{9}$};

\node[vertex,right=1.8\mynodespace of V3] (T) {$T$};

\draw[ddedge,bend left=0] (S1)  to node{} (V1);

\draw[ddedge,bend right=0] (S1)  to node{} (V2);

\draw[ddedge,bend left=0] (S1) to  node{} (V3);

\draw[ddedge,bend right=0] (S1)  to node{} (V4);

\draw[ddedge,bend left=0] (S1) to  node{} (V5);

\draw[edge,bend left=0] (V1) to  node{} (V6);

\draw[edge,bend left=0] (V2) to  node[]{} (V7);

\draw[edge,bend left=0] (V2) to  node[]{} (V8);

\draw[edge,bend left=0] (V2) to  node[]{} (V9);

\draw[edge,bend left=0] (V3) to  node[]{} (V7);

\draw[edge,bend left=0] (V3) to  node[]{} (V8);

\draw[edge,bend left=0] (V3) to  node[]{} (V9);

\draw[edge,bend left=0] (V4) to  node[]{} (V7);

\draw[edge,bend left=0] (V4) to  node[]{} (V8);

\draw[edge,bend left=0] (V4) to  node[]{} (V9);

\draw[edge,bend left=0] (V5) to  node[]{} (V7);

\draw[edge,bend left=0] (V5) to  node[]{} (V8);

\draw[edge,bend left=0] (V5) to  node[]{} (V9);

\draw[edge,bend left=0] (V6) to  node{} (T);

\draw[edge,bend left=0] (V7) to  node[]{} (T);

\draw[edge,bend left=0] (V8) to  node{} (T);

\draw[edge,bend left=0] (V9) to  node[]{} (T);

\end{tikzpicture} 

}
\caption{The 3-level network $\mN'$ induced by
the network $\mN$ of Figure~\ref{fig:secondex}. Vulnerable edges are dashed. \label{fig:3rc}}
\end{figure}

Lastly, we apply the procedure described in Subsection~\ref{sec:3to2reduc} to obtain a 2-level network from~$\mN'$, whose capacity will be an upper bound for that of~$\mN'$. It can easily be seen that the~2-level network obtained is precisely the network $\mathfrak{B}_3$ of Family \ref{fam:b} introduced in Section~\ref{sec:families}. This is depicted in Figure \ref{fig:2rc}.
Therefore, by combining Theorem~\ref{thm:channel}, Theorem \ref{thm:notmet} and Corollary~\ref{cor:tothree}, we finally obtain $$\CC_1(\mN,\mA,\mU,1) < 3.$$
At the time of writing this paper we cannot give an exact expression for the value of $\CC_1(\mN,\mA,\mU,1)$ for an arbitrary alphabet $\mA$.
This remains an open problem.
\end{example}

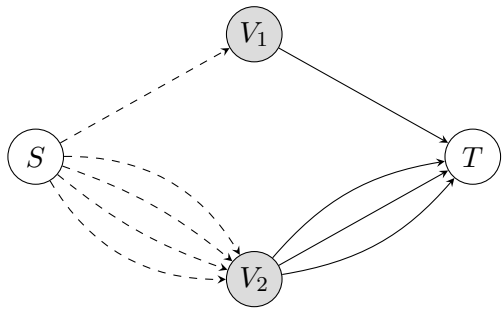
\begin{figure}[htbp]
\centering
\begin{tikzpicture}
\tikzset{vertex/.style = {shape=circle,draw,inner sep=0pt,minimum size=1.9em}}
\tikzset{nnode/.style = {shape=circle,fill=myg,draw,inner sep=0pt,minimum
size=1.9em}}
\tikzset{edge/.style = {->,> = stealth}}
\tikzset{dedge/.style = {densely dotted,->,> = stealth}}
\tikzset{ddedge/.style = {dashed,->,> = stealth}}

\node[vertex] (S1) {$S$};

\node[shape=coordinate,right=\mynodespace of S1] (K) {};

\node[nnode,above=0.5\mynodespace of K] (V1) {$V_1$};
\node[nnode,below=0.5\mynodespace of K] (V2) {$V_2$};

\node[vertex,right=2\mynodespace of S1] (T) {$T$};

\draw[ddedge,bend left=0] (S1)  to node[]{} (V1);
\draw[ddedge,bend left=30] (S1)  to node[]{} (V2);
\draw[ddedge,bend left=10] (S1) to  node[]{} (V2);
\draw[ddedge,bend right=10] (S1) to  node[]{} (V2);
\draw[ddedge,bend right=30] (S1) to  node[]{} (V2);

\draw[edge,bend right=0] (V1) to  node[]{} (T);
\draw[edge,bend left=20] (V2) to  node[]{} (T);

\draw[edge,bend left=0] (V2) to  node[]{} (T);
\draw[edge,bend right=20] (V2) to  node[]{} (T);

\end{tikzpicture} 

\caption{{{The simple 2-level network associated to the network 
of Figure~\ref{fig:3rc}.}}}\label{fig:2rc}
\end{figure}

\section{Linear Capacity}
\label{sec:linear}

As mentioned in Sections \ref{sec:motiv} and \ref{sec:channel}, 
in the presence of an ``unrestricted'' adversarial noise
the~(1-shot) capacity of a network can be achieved by combining a rank-metric (outer) code with a~\textit{linear} network code;
see~\cite{SKK,MANIAC,RK18,KK1}. In words, this means that the intermediate nodes of the network focus on spreading information, while decoding is performed in an end-to-end fashion.

In this section, we show that
the strategy outlined above is far from being optimal when the adversary is restricted to operate 
on a proper subset of the network edges.
In fact, we establish
some strong separation results 
between the capacity (as defined in Section~\ref{sec:channel}) and the ``linear'' capacity of a network, 
which we define by imposing that the intermediate nodes combine packets linearly. This indicates that implementing 
network \textit{decoding} becomes indeed necessary to achieve capacity in the scenario where the adversary is restricted.
The following definitions make the concept of linear capacity rigorous.

\begin{definition}
Let $\mN=(\mV,\mE, S, \bfT)$ be a network and $\mA$ an alphabet. Consider a \textbf{network code} $\mF$ for  $(\mN,\mA)$ as in Definition \ref{def:nc}. 
We say that $\mF$ is a 
\textbf{linear} network code if $\mA$ is a finite field
and each function~$\mF_V$ is $\mA$-linear.
\end{definition}

We can now define the linear version of the 1-shot capacity of an adversarial network, i.e., the analogue of Definition~\ref{def:capacities}.

\begin{definition} \label{def:lin_capacities}
Let $\mN=(\mV,\mE, S, \bfT)$ be a network, $\mA$ a finite field,
$\mU \subseteq \mE$ an edge set,
and~$t \ge 0$ an integer. 
The (\textbf{1-shot}) \textbf{linear capacity}
of $(\mN,\mA,\mU,t)$
is the largest 
real number~$\kappa$ for which there exists an \textbf{outer code} $$\mC \subseteq \mA^{\degout(S)}$$
and a linear network code $\mF$ for~$(\mN,\mA)$ 
with $\kappa=\log_{|\mA|}(|\mC|)$ such that
$\mC$ is unambiguous for each channel
$\Omega[\mN,\mA,\mF,S \to T,\mU,t]$, $T \in \bd{T}$. The notation for such largest $\kappa$ is
$$\CC^\lin_1(\mN,\mA,\mU,t).$$
\end{definition}

Note that in the definition of linear  capacity we do not require $\mC$ to be a linear code, but only that the network code $\mF$ is linear.

The first result of this section shows that the linear capacity of any member of Family~\ref{fam:d} is zero. This is in sharp contrast with Theorem~\ref{thm:metd}.

\begin{theorem} \label{thm:linmirr}
Let $\mathfrak{D}_t=(\mV,\mE,S,\{T\})$ be a member of
Family~\ref{fam:d}. Let $\mA$ be any {finite field} 
and let $\mU_S$ be the  set of edges of $\mathfrak{D}_t$ directly connected to $S$. We have $$\CC^\lin_1(\mathfrak{D}_t,\mA,\mU_S,t)= 0.$$ 
In particular, the linear capacity of the Mirrored Diamond Network of Figure~\ref{fig:mirrored} is zero.
\end{theorem}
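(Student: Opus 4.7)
The plan is to parameterize any linear network code on $\mathfrak{D}_t$ by a pair of covectors $a_1, a_2 \in \mA^{2t}$, so that the intermediate nodes act as $\mF_i(u) = \langle a_i, u\rangle$ for $u \in \mA^{2t}$, and then exhibit, for any two distinct outer codewords, an error configuration within the adversary's budget that makes the two channel outputs coincide. Concretely, I would suppose toward a contradiction that $\mC \subseteq \mA^{4t}$ is unambiguous for the channel $\Omega[\mathfrak{D}_t,\mA,\mF,S\to T,\mU_S,t]$ with $|\mC|\ge 2$, pick distinct $x=(x_1,x_2)$, $y=(y_1,y_2)\in\mC$, and set $z_i = x_i - y_i\in \mA^{2t}$, so that $z = (z_1,z_2) \ne 0$. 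Exhibiting ambiguity amounts to producing error vectors $e, e' \in \mA^{4t}$ of Hamming weight at most $t$ each with $\mF(x+e)=\mF(y+e')$; by linearity, writing $f_i = e_i - e'_i$, this is equivalent to the pair of equations $\langle a_i, z_i + f_i\rangle = 0$ for $i=1,2$.

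The key construction is to pick each $f_i$ with at most one nonzero coordinate. If $\langle a_i, z_i\rangle = 0$ I simply take $f_i = 0$; otherwise $a_i$ is necessarily nonzero, so I choose any index $j$ with $a_{i,j}\ne 0$ and let $f_i$ be the vector whose only nonzero entry is $-\langle a_i,z_i\rangle \cdot a_{i,j}^{-1}$ in position $j$. This forces $\langle a_i,z_i+f_i\rangle = 0$. To realize $(f_1,f_2)$ within the adversary's budget, I would split the perturbation between the two halves of the network, placing $e = (f_1,\mathbf{0})\in\mA^{4t}$ and $e' = (\mathbf{0},-f_2)\in\mA^{4t}$: both have Hamming weight at most $1\le t$, and by construction $(x+e)-(y+e') = (z_1+f_1,z_2+f_2)$, so $\mF(x+e)=\mF(y+e')$, contradicting unambiguity of $\mC$.

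Concluding, $|\mC|\le 1$ and $\CC^{\lin}_1(\mathfrak{D}_t,\mA,\mU_S,t) = 0$, which in particular covers the Mirrored Diamond case $t=1$. I do not foresee any real technical obstacle; the only point to watch is that the combined weight of $(f_1,f_2)$ is at most $2$, so distributing it one coordinate per side of the network comfortably fits inside the budget $t$ for every $t\ge 1$. The conceptual takeaway, and the sharp contrast with Theorem~\ref{thm:metd}, is that a linear map $\mA^{2t}\to\mA$ can be annihilated on any nonzero input by a single coordinate change, so linear intermediate nodes simply cannot implement the reserved ``alarm'' behavior on which the non-linear capacity-achieving scheme crucially relies.
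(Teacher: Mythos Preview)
Your proof is correct and follows essentially the same strategy as the paper: both arguments exploit that a nonzero linear form $\mA^{2t}\to\mA$ can be forced to any prescribed value by changing a single coordinate, and then split the two single-coordinate perturbations across the two halves of the network so that each error vector has weight at most $1\le t$. Your treatment is in fact slightly cleaner, since by handling the case $\langle a_i,z_i\rangle=0$ with $f_i=0$ you avoid the paper's separate argument that the $\lambda_i$ on each side cannot all vanish.
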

\begin{proof}
{Let $q:=|\mA|$.} Fix any {linear} network code $\mF=\{\mF_1,\mF_2\}$
for $(\mathfrak{D}_t,\mA)$ and
let $\mC$ be an unambigious code for the channel
$\Omega[\mathfrak{D}_t,\mA,\mF,S \to T,\mU_S,t]$.
Suppose that $|\mC| \ge 2$ and let~$x,a \in \mC$ with $x \neq a$ such that
$$x = (x_1,\ldots,x_{2t},x_{2t+1},\ldots,x_{4t}), \quad a = (a_1,\ldots,a_{2t},a_{2t+1},\ldots,a_{4t}),$$
and
$$\mF_1(u_1,\ldots,u_{2t}) = \sum_{i=1}^{2t}\lambda_iu_i, \quad \mF_2(u_{2t+1},\ldots,u_{4t}) = \sum_{i=2t+1}^{4t}\lambda_iu_i,$$ where $\lambda_r \in \mathbb{F}_q$ for $1 \le r \le 4t$ and $u \in \mA^{4t}$. 
We let $\Omega := \Omega[\mathfrak{D}_t,\mA,\mF,S \to T,\mU_S,t]$ to simplify the notation throughout the remainder of the proof. 

We start by observing that $\lambda_1,\ldots,\lambda_{2t}$ cannot all be 0. Similarly, $\lambda_{2t+1},\ldots,\lambda_{4t}$ cannot all be 0 (it is easy to see that the adversary can 
cause ambiguity otherwise). Therefore we shall assume $\lambda_1 \ne 0$ and $\lambda_{4t} \ne 0$ without loss of generality. 
We will now construct vectors $y,b \in \mA^{4t}$
such that $\dH(x,y) = \dH(a,b)= 1$.
Concretely, let
\begin{itemize}
    \item $y_i = x_i \mbox{ for } 1\le i \le 4t-1$,
    \item $y_{4t} = a_{4t} + \sum_{i=2t+1}^{4t-1} \lambda_{4t}^{-1}\lambda_i(a_i-x_i)$,
    \item $b_{1} = x_{1} + \sum_{i=2}^{2t} \lambda_{1}^{-1}\lambda_i(x_i-a_i)$,
    \item $b_i = a_i \mbox{ for } 2\le i \le 4t$.
\end{itemize}
It follows from the definitions that $\dH(x,y) = \dH(a,b)= 1$ and that
$$z_x:=\left(\sum_{r=1}^{2t} \lambda_ry_r,\ \sum_{r=2t+1}^{4t} \lambda_ry_r \right) \in \Omega(x), \qquad z_a:=\left(\sum_{r=1}^{2t} \lambda_rb_r,\ \sum_{r=2t+1}^{4t} \lambda_rb_r\right) \in \Omega(a).$$ 
However, one easily checks that
$z_x=z_a$, which in turn implies that
$\Omega(x) \cap \Omega(a) \neq \emptyset$. Since $x$ and $a$ were arbitrary elements of $\mC$, this 
establishes the theorem.
\end{proof}

By proceeding as in the proof of Theorem~\ref{thm:linmirr}, one can check that the linear capacity
of any member of Family~\ref{fam:e} is zero as well. This can also be established by observing that $\mathfrak{E}_t$ is~a ``subnetwork'' of $\mathfrak{D}_t$ for all $t$.

\begin{theorem} \label{thm:8.4}
Let $\mathfrak{E}_t=(\mV,\mE,S,\{T\})$ be a member of
Family~\ref{fam:e}. Let $\mA$ be any finite field and let $\mU_S$ be the  set of edges of $\mathfrak{E}_t$ directly connected to $S$. We have  $$\CC^\lin_1(\mathfrak{E}_t,\mA,\mU_S,t)= 0.$$ 
In particular, the linear capacity of the Diamond Network of Section~\ref{sec:diamond} is zero.
\end{theorem}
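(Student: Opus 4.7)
The proof will follow the same blueprint as Theorem~\ref{thm:linmirr}: assume by contradiction that some unambiguous outer code $\mC$ for a linear network code $\mF=\{\mF_1,\mF_2\}$ contains two distinct codewords $x$ and $a$, and then manufacture an ambiguity by altering each one in a single coordinate. Write $\mF_1(u_1,\ldots,u_t)=\sum_{i=1}^t \lambda_i u_i$ and $\mF_2(u_{t+1},\ldots,u_{2t+1})=\sum_{i=t+1}^{2t+1} \lambda_i u_i$. As in Theorem~\ref{thm:linmirr}, a preliminary step rules out the degenerate cases: if all $\lambda_i$ for $1\le i\le t$ vanish, then $\mF_1$ contributes nothing and one readily constructs an ambiguity using $V_2$ alone (and symmetrically for the other index range), so from here on we may assume $\lambda_1\ne 0$ and $\lambda_{2t+1}\ne 0$ after a harmless relabelling of the edges incoming to $V_1$ and $V_2$.

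The ambiguity construction is now a direct analogue of the $\mathfrak{D}_t$ case. I would set
\begin{align*}
y_i &= x_i \text{ for } 1\le i\le 2t, & y_{2t+1} &= a_{2t+1}+\lambda_{2t+1}^{-1}\sum_{i=t+1}^{2t}\lambda_i(a_i-x_i),\\
b_1 &= x_1+\lambda_1^{-1}\sum_{i=2}^{t}\lambda_i(x_i-a_i), & b_i &= a_i \text{ for } 2\le i\le 2t+1,
\end{align*}
so that $\dH(x,y)=\dH(a,b)=1\le t$. A short calculation—mirroring the one carried out after the analogous definitions in the proof of Theorem~\ref{thm:linmirr}—then gives $\mF_1(y_1,\ldots,y_t)=\mF_1(b_1,\ldots,b_t)$ and $\mF_2(y_{t+1},\ldots,y_{2t+1})=\mF_2(b_{t+1},\ldots,b_{2t+1})$, hence $\mF(y)=\mF(b)$, which places this common vector in $\Omega(x)\cap\Omega(a)$ and contradicts unambiguity.

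Alternatively, one can deduce the result from Theorem~\ref{thm:linmirr} via the subnetwork observation already alluded to in the statement. Given a linear code and outer code for $(\mathfrak{E}_t,\mA,\mU_S,t)$, pad each codeword with $t$ zeros on the block feeding $V_1$ and $t-1$ zeros on the block feeding $V_2$ to obtain a code in $\mA^{4t}$, and extend $\mF_1,\mF_2$ to $\mathfrak{D}_t$ by assigning coefficient $0$ to each newly introduced input edge. Any error pattern of weight $\le t$ in $\mathfrak{D}_t$ restricts to one of weight $\le t$ on the active coordinates, so unambiguity would transfer back to $(\mathfrak{E}_t,\mA,\mU_S,t)$; combined with $\CC_1^{\lin}(\mathfrak{D}_t,\mA,\mU_S,t)=0$ from Theorem~\ref{thm:linmirr}, this forces $|\mC|=1$.

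The only real subtlety I foresee is bookkeeping: one must ensure that after the ``WLOG $\lambda_1\ne 0$ and $\lambda_{2t+1}\ne 0$'' reduction, the two indices being modified lie on different sides of the split (one feeds $V_1$, the other feeds $V_2$), so that the perturbations influence the two linear functionals independently and the cancellation goes through. This is mechanical but is the step where the parameters of Family~\ref{fam:e} (in particular the asymmetry $t$ versus $t+1$) enter, and it is what distinguishes the present argument from a blind copy of Theorem~\ref{thm:linmirr}.
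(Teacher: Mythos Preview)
Your proposal is correct and follows exactly the two approaches the paper indicates (but does not write out in detail): the paper's ``proof'' consists only of the remark that one may proceed as in Theorem~\ref{thm:linmirr}, or alternatively invoke the subnetwork observation that $\mathfrak{E}_t$ embeds in $\mathfrak{D}_t$. Your direct construction of $y$ and $b$ is the correct adaptation of the $\mathfrak{D}_t$ argument to the $[t,t+1]$ split, and the embedding argument is precisely the one the paper alludes to; the only quibble is that the phrase ``unambiguity would transfer back to $\mathfrak{E}_t$'' is stated in the wrong direction---what you actually need (and what your error-restriction observation establishes) is that unambiguity of $\mC$ in $\mathfrak{E}_t$ implies unambiguity of the padded code $\mC'$ in $\mathfrak{D}_t$, which then contradicts Theorem~\ref{thm:linmirr}.
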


We conclude this section by observing
that the proof of 
Proposition \ref{prop:lin} 
actually uses a linear network code.
In particular, the following holds.

\begin{proposition}
\label{cor:ll}
Let $\mN=([a_1,\ldots,a_n],[b_1,\ldots,b_n])=(\mV,\mE,S,\{T\})$ be a simple 2-level network, $\mA$ a sufficiently large finite field, $t \ge 0$.
Let $\mU_S$ denote the set of edges directly connected to $S$. Then
$$\CC^\lin_1(\mN,\mA,\mU_S,t) \ge  \max\left\{0,\sum_{i=1}^n\min\{a_i,b_i\} -2t\right\}.$$
\end{proposition}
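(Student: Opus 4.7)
The plan is straightforward: observe that the scheme constructed in the proof of Proposition~\ref{prop:lin} already furnishes a \emph{linear} network code, so the same lower bound transfers verbatim to the linear capacity. First, recall that the argument there begins by passing to a reduced network $\mN'$ in which every intermediate vertex $V_i'$ satisfies $\degin(V_i')=\degout(V_i')$, obtained by deleting extraneous edges. One then picks as outer code an MDS code with parameters $[|\mE'|, |\mE'|-2t, 2t+1]$, where $|\mE'|=\sum_{i=1}^n \min\{a_i,b_i\}$, and lets each intermediate node forward its incoming packets to its outgoing edges.

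The key observation is that the forwarding function $\mF_{V_i'}:\mA^d\to\mA^d$ (where $d=\degin(V_i')=\degout(V_i')$) is nothing other than the identity map (up to a fixed permutation of coordinates induced by the total order $\le$ on $\mE$), which is $\mA$-linear for any finite field $\mA$. To actually define a linear network code for the original $\mN$ rather than $\mN'$, I would extend the forwarding maps to linear maps $\mF_{V_i}:\mA^{a_i}\to\mA^{b_i}$ that ignore the incoming coordinates corresponding to deleted edges (i.e., project onto the remaining $\min\{a_i,b_i\}$ coordinates) and output zeros on outgoing edges corresponding to deleted ones; both operations are $\mA$-linear. Similarly, the source may be assumed to emit zero on the outgoing edges corresponding to deleted edges, which is consistent with a linear encoding.

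With this setup, the fan-out sets of the channel $\Omega[\mN,\mA,\mF,S\to T,\mU_S,t]$ under the linear code $\mF$ coincide (after stripping the fixed-zero coordinates) with those of $\Omega[\mN',\mA,\mF',S\to T,\mU_S\cap \mE',t]$ studied in Proposition~\ref{prop:lin}. Hence the MDS outer code, which we may choose to be a linear Reed--Solomon code over a sufficiently large $\mA$, remains unambiguous and achieves rate $|\mE'|-2t = \sum_{i=1}^n \min\{a_i,b_i\}-2t$. When this number is non-positive, we fall back on the trivial outer code of size one, yielding rate $0$.

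There is no real obstacle to this argument: the only thing to double-check is that the edge-deletion reduction in Proposition~\ref{prop:lin} can itself be realised by a linear network code on the original $\mN$ (rather than only on $\mN'$), which is immediate since ignoring coordinates and zero-padding are $\mA$-linear. Consequently, $\CC^\lin_1(\mN,\mA,\mU_S,t)\ge \max\{0,\sum_{i=1}^n\min\{a_i,b_i\}-2t\}$, as claimed.
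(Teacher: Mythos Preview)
Your proposal is correct and follows essentially the same approach as the paper, which simply observes that the scheme in Proposition~\ref{prop:lin} already uses a linear network code (forwarding, i.e., the identity map). The only minor addition you make is spelling out how to realise the edge-deletion on the original $\mN$ via $\mA$-linear projection and zero-padding; the paper leaves this implicit, but your extra care here is appropriate and does not constitute a different method.
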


Finally, by combining 
 Proposition \ref{cor:ll} and Theorem \ref{thm:notmet}, we obtain the following result on the capacities of the members of Family~\ref{fam:b}.

\begin{corollary}
\label{cor:sbs}
Let $\mathfrak{B}_s=(\mV,\mE,S,\{T\})$ be a member of
Family~\ref{fam:b}. Let $\mA$ be a sufficiently large finite field and let $\mU_S$ be the  set of edges of $\mathfrak{B}_s$ directly connected to $S$. We have  $$s >\CC_1(\mathfrak{B}_s,\mA,\mU_S,1) \ge \CC^\lin_1(\mathfrak{B}_s,\mA,\mU_S,1) \ge s-1.$$ 
\end{corollary}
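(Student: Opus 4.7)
The proof will be a straightforward assembly of three ingredients already established in the paper; the main issue is simply to identify which result supplies each inequality and to verify that the parameters match.

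First, I will observe that the leftmost inequality $s > \CC_1(\mathfrak{B}_s,\mA,\mU_S,1)$ is a direct restatement of Theorem~\ref{thm:notmet}, which shows that the Generalized Network Singleton Bound of Corollary~\ref{cor:sing} (whose value for Family~\ref{fam:b} with $t=1$ is exactly $s$) is not attained over any alphabet. In particular, no unambiguous outer code can reach size $|\mA|^s$, giving strict inequality.

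Next, the middle inequality $\CC_1(\mathfrak{B}_s,\mA,\mU_S,1) \ge \CC^\lin_1(\mathfrak{B}_s,\mA,\mU_S,1)$ is essentially tautological: every linear network code is, by definition, a network code, so the maximum appearing in Definition~\ref{def:lin_capacities} is taken over a subset of the collection maximized over in Definition~\ref{def:capacities}. Hence the linear capacity cannot exceed the capacity.

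Finally, for the rightmost inequality $\CC^\lin_1(\mathfrak{B}_s,\mA,\mU_S,1) \ge s-1$, I will apply Proposition~\ref{cor:ll} to the network $\mathfrak{B}_s = ([1,s+1],[1,s])$ with $t=1$. Reading off the parameters, $a_1 = b_1 = 1$, $a_2 = s+1$, $b_2 = s$, so
\[
\sum_{i=1}^{2}\min\{a_i,b_i\} - 2t = \min\{1,1\} + \min\{s+1,s\} - 2 = 1 + s - 2 = s-1,
\]
which is nonnegative for $s \ge 1$. Proposition~\ref{cor:ll} therefore yields the lower bound $s-1$ on the linear capacity, provided $\mA$ is a sufficiently large finite field (which is exactly the hypothesis of the corollary). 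Chaining the three inequalities gives the statement. There is no genuine obstacle here; the only thing to be careful about is verifying the arithmetic of $\min\{a_i,b_i\}$ for Family~\ref{fam:b} so that Proposition~\ref{cor:ll} produces precisely $s-1$ and not something weaker.
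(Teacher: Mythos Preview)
Your proposal is correct and matches the paper's own approach exactly: the paper states the corollary follows ``by combining Proposition~\ref{cor:ll} and Theorem~\ref{thm:notmet}'', and you have correctly identified these two ingredients (plus the trivial middle inequality) and verified the arithmetic $\min\{1,1\}+\min\{s+1,s\}-2=s-1$ needed to apply Proposition~\ref{cor:ll}.
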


\section{Conclusions and Future Research Directions}
\label{sec:open}

In this paper, we considered the 
1-shot capacity of multicast networks 
affected by adversarial noise
restricted to a proper subset of vulnerable edges.
We introduced a formal framework to study
these networks based on the notions of  adversarial channels and fan-out sets.
We defined five families 
of 2-level networks that 
play the role of fundamental 
stepping stones in our developed theory,
and derived upper and lower bounds for the capacities of these families.
We also showed that upper 
bounds for 2-level and 3-level networks can be ported to arbitrarily large networks via a Double-Cut-Set Bound.
Finally, we analyzed the capacity of certain partially vulnerable networks under the assumption that the intermediate nodes combine packets linearly.

The results presented in this paper show that
classical approaches to estimate or achieve capacity in multicast communication networks affected by an unrestricted adversary (cut-set bounds, linear network coding, rank-metric codes) are far from being optimal when the adversary is restricted to operate on a proper subset of the network edges. Moreover, the non-optimality
comes precisely from limiting the adversary to operate on a certain region of the network, which in turn forces the intermediate nodes to partially \textit{decode} information before forwarding it towards the terminal. This is in strong contrast with the typical scenario within network coding, where capacity can be achieved in an \textit{end-to-end} fashion using (random) linear network coding combined with a rank-metric code.

We conclude this paper by mentioning three research directions that naturally originate from our work.

\begin{enumerate}
    \item It remains an open problem to compute the capacities of three of the five fundamental families of networks we introduced in Subsection~\ref{sec:families} for arbitrary values of the parameters.
    We believe that this problem is challenging and requires developing new coding theory methods of combinatorial flavor that extend traditional packing arguments.
    
    \item A very natural continuation of this paper lies in the study of the scenario where a network can be used multiple times for communication, which we excluded from this first treatment.
    The multi-shot scenario is modeled by the \textit{power channel} construction; see~\cite{RK18,shannon_zero}.
    
    \item Most of our results and techniques extend to networks having multiple sources. This is another research direction that arises from this paper very naturally.
   
\end{enumerate}

\bigskip

\bibliographystyle{abbrv}
\bibliography{ADV}

\end{document}